\newtheorem{theorem}{Theorem}
\newtheorem{lemma}[theorem]{Lemma}
\newtheorem{corollary}[theorem]{Corollary}
\newtheorem{probaux}{Problem}
\newenvironment{prob}[3]{\bigskip\noindent\framebox{\parbox{16.6cm}{\begin{probaux}{\sc
#1}\\{\hspace*{1cm} \bf \sf Instance:} #2\\{\hspace*{1cm} \bf \sf Question:}
#3\end{probaux}}}\bigskip}{}
\newcommand{\ka}{\mathcal{K}_1\xspace}
\newcommand{\kb}{\mathcal{K}_2\xspace}
\newcommand{\kc}{\mathcal{K}_3\xspace}
\newcommand{\kd}{\mathcal{K}_4\xspace}
\newif\ifconference
\begin{document}


\title{Hierarchical complexity of 2-clique-colouring weakly chordal graphs and
perfect graphs having cliques of size at least~3\thanks{Partially supported
by CNPq and FAPERJ.}}

\newcommand*\samethanks[1][\value{footnote}]{\footnotemark[#1]}

\author[1]{H\'elio B. Mac\^edo Filho\thanks{$\{$\removelastskip\href{mailto:helio@cos.ufrj.br}{helio}, \href{mailto:celina@cos.ufrj.br}{celina}\removelastskip$\}$@cos.ufrj.br}}
\author[2]{Raphael C. S. Machado\thanks{\href{mailto:rcmachado@inmetro.gov.br}{rcmachado@inmetro.gov.br}}}
\renewcommand\Authands{,\authorcr}
\author[1]{Celina M. H. de Figueiredo\samethanks[1]}
\affil[1]{COPPE, Universidade Federal do Rio de Janeiro}
\affil[2]{Inmetro --- Instituto Nacional de Metrologia, Qualidade e Tecnologia}

\date{}

\maketitle

\let\thefootnote\relax\footnotetext{
\itshape An extended abstract of this work was accepted for presentation at Latin 2014, the
11th Latin American Symposium on Theoretical Informatics. 
\hfill\today
}

\begin{abstract}
A clique of a graph is a maximal set of vertices of size at least 2
that induces a complete graph. A $k$-clique-colouring of a graph is a
colouring of the vertices with at most $k$ colours such that no clique is
monochromatic. D\'efossez proved that the 2-clique-colouring of perfect graphs
is a $\Sigma_2^P$-complete problem~[J. Graph Theory 62 (2009) 139--156].
We strengthen this result by showing that it is still $\Sigma_2^P$-complete for
weakly chordal graphs.
We then determine a hierarchy of nested subclasses of
weakly chordal graphs whereby each graph class is in a distinct complexity class,
namely $\Sigma_2^P$-complete, $\mathcal{NP}$-complete, and $\mathcal{P}$. We
solve an open problem posed by Kratochv\'il and Tuza to determine the
complexity of 2-clique-colouring of perfect graphs with all cliques having size
at least~3~[J. Algorithms 45 (2002), 40--54], proving that it is a
$\Sigma_2^P$-complete problem.
We then determine a hierarchy of nested subclasses of
perfect graphs with all cliques having size at least~3 whereby each graph class
is in a distinct complexity class,
namely $\Sigma_2^P$-complete, $\mathcal{NP}$-complete, and $\mathcal{P}$.

\textbf{Keywords:} clique-colouring; hierarchical complexity; perfect graphs; weakly chordal graphs; (alpha beta)-polar graphs.
\end{abstract}

\section{Introduction}
\label{sec:introduction}
Let $G=(V,E)$ be a simple graph with $n=|V|$ vertices and
$m=|E|$ edges. A \emph{clique} of $G$ is a maximal set of vertices of size
at least~2 that induces a complete graph. A \emph{$k$-clique-colouring}
of a graph is a colouring of the vertices with at most $k$ colours such that no
clique is monochromatic. Any undefined notation concerning complexity classes
follows that of Marx~\cite{Marx2011}.

A \emph{cycle} is sequence of vertices starting and ending at the same vertex,
with each two consecutive vertices in the sequence adjacent to each other in the
graph. A \emph{chord} of a cycle is an edge joining two nodes that are not
consecutive in the cycle.

The \emph{clique-number} $\omega(G)$ of a graph $G$ is the number of vertices of
a clique with the largest possible size in $G$. A \emph{perfect graph} is a
graph in which every induced subgraph $H$ needs exactly $\omega(H)$ colours in
its vertices such that no $K_2$ (not necessarily clique) is monochromatic. The
celebrated \emph{Strong Perfect Graph Theorem} of Chudnovsky et
al.~\cite{MR2233847} says that a graph is perfect if neither it nor its
complement contains a chordless cycle with an odd number of vertices greater
than 4. A graph is \emph{chordal} if it does not contain a chordless cycle with
a number of vertices greater than 3, and a graph is \emph{weakly chordal} if
neither it nor its complement contains a chordless cycle with a number of
vertices greater than 4.

Both clique-colouring and perfect graphs have attracted much
attention due to a conjecture posed by Duffus et al.~\cite{MR1101751} that
\emph{perfect graphs are $k$-clique-colourable for some constant $k$}. 
This conjecture has not yet been proved. Following the chronological order,
Kratochv\'il and Tuza gave a framework to argue that 2-clique-colouring is
$\mathcal{NP}$-hard 
and proved that 2-clique-colouring is $\mathcal{NP}$-complete
for $K_4$-free perfect graphs~\cite{Kratochvil}. Notice that $K_3$-free perfect
graphs are bipartite graphs, which are clearly 2-clique-colourable.
Moreover, 2-clique-colouring is in $\Sigma_2^P$, since it is co$\mathcal{NP}$ to
check that a colouring of the vertices is a clique-colouring. A few years later,
the $2$-clique-colouring problem was proved to be a $\Sigma_2^P$-complete problem
by Marx~\cite{Marx2011}, a major breakthrough in the clique-colouring area.
D\'efossez~\cite{DefossezOddHoleFreeCliqueColouringComplexity} proved later
that 2-clique-colouring of perfect graphs remained a $\Sigma_2^P$-complete
problem.

When restricted to chordal graphs, 2-clique-colouring is in
$\mathcal{P}$, since all chordal graphs are
2-clique-colourable~\cite{poon}. Notice that chordal graphs are a subclass of
weakly chordal graphs, while perfect graphs are a superclass of weakly chordal graphs.
In constrast to chordal graphs, not all weakly chordal graphs are
2-clique-colourable (see Fig.~\ref{fig:not2ccweaklychordal}).

We show that 2-clique-colouring of weakly chordal graphs is a
$\Sigma_2^P$-complete problem, improving the proof
of D\'efossez~\cite{DefossezOddHoleFreeCliqueColouringComplexity} that
2-clique-colouring is a $\Sigma_2^P$-complete problem for perfect graphs.
As a remark,
D\'efossez~\cite{DefossezOddHoleFreeCliqueColouringComplexity} constructed
a graph which is not a weakly chordal graph as long as it has chordless cycles
with even number of vertices greater than 5 as induced subgraphs.
We determine a hierarchy of nested subclasses of weakly
chordal graphs whereby each graph class is in a distinct complexity class,
namely $\Sigma_2^P$-complete, $\mathcal{NP}$-complete, and $\mathcal{P}$.

A graph is \emph{($\alpha, \beta$)-polar} if there
exists a partition of its vertex set into two sets $A$ and $B$ such that all
connected components of the subgraph induced by $A$ and of the complementary
subgraph induced by $B$ are complete graphs. Moreover, the order of each
connected component of the subgraph induced by $A$ (resp. of the complementary
subgraph induced by $B$) is upper bounded by $\alpha$ (resp. upper bounded by
$\beta$)~\cite{MR863038}. A \emph{satellite} of an ($\alpha,
\beta$)-polar graph is a connected component of the
subgraph induced by $A$ (see Fig.~\ref{fig:abpolar}). 
In this work, we restrict ourselves
to the \emph{($\alpha, \beta$)-polar} graphs with $\beta = 1$, so the subgraph
induced by $B$ is complete and the order of each satellite is upper bounded by
$\alpha$ (see Fig.~\ref{fig:21polar}). 
Clearly, ($\alpha, 1$)-polar graphs are perfect, since they do not contain
chordless cycles with an odd number of vertices greater than 4 nor their
complements.

A \emph{generalized split graph} is a graph $G$ such that $G$ or its
complement is an ($\infty, 1$)-polar graph~\cite{MR1167295}.
See Fig.~\ref{fig:21polar} for an example of a generalized split graph, which is
a (2, 1)-polar graph. 
The class of generalized split graphs plays an important role in the areas of
perfect graphs and clique-colouring.
This class was introduced by Pr\"omel and Steger~\cite{MR1167295} 
to show that 
the strong perfect graph conjecture is at least asymptotically
true by proving that almost all $C_5$-free graphs are generalized
split graphs. Approximately 14 years later the strong perfect graph conjecture
became the \emph{Strong Perfect Graph Theorem} by Chudnovsky et
al.~\cite{MR2233847}. Regarding clique-colouring, Bacs\'o et
al.~\cite{MR20506790} proved that generalized split graphs are 3-clique-colourable and concluded that almost all perfect
graphs are 3-clique-colourable~\cite{MR20506790}. This conclusion
supports the conjecture due to Duffus et al.~\cite{MR1101751}. In fact,
there is no example of a perfect graph where more than three colors would be
necessary to clique-colour.
Surprisingly, after more than 20 years, relatively little progress has
been made on the conjecture.

\begin{figure}[t!]
\centering
	\subfloat
		[A weakly chordal graph with an optimal 3-clique-colouring] {
			\includegraphics[scale=0.3]{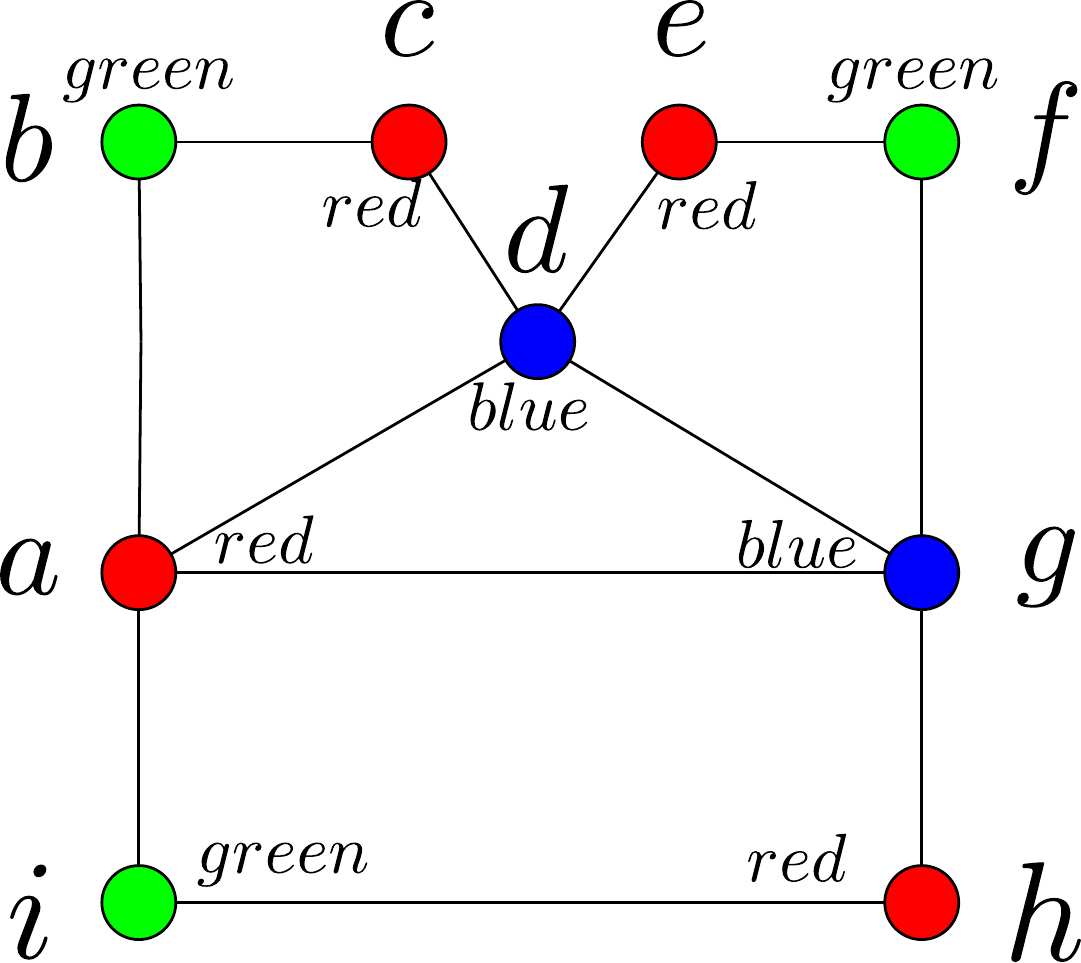}
			\label{fig:not2ccweaklychordal}
		}
	\qquad 
	\subfloat
		[A ($4, 3$)-polar graph] {
			\includegraphics[scale=0.18]{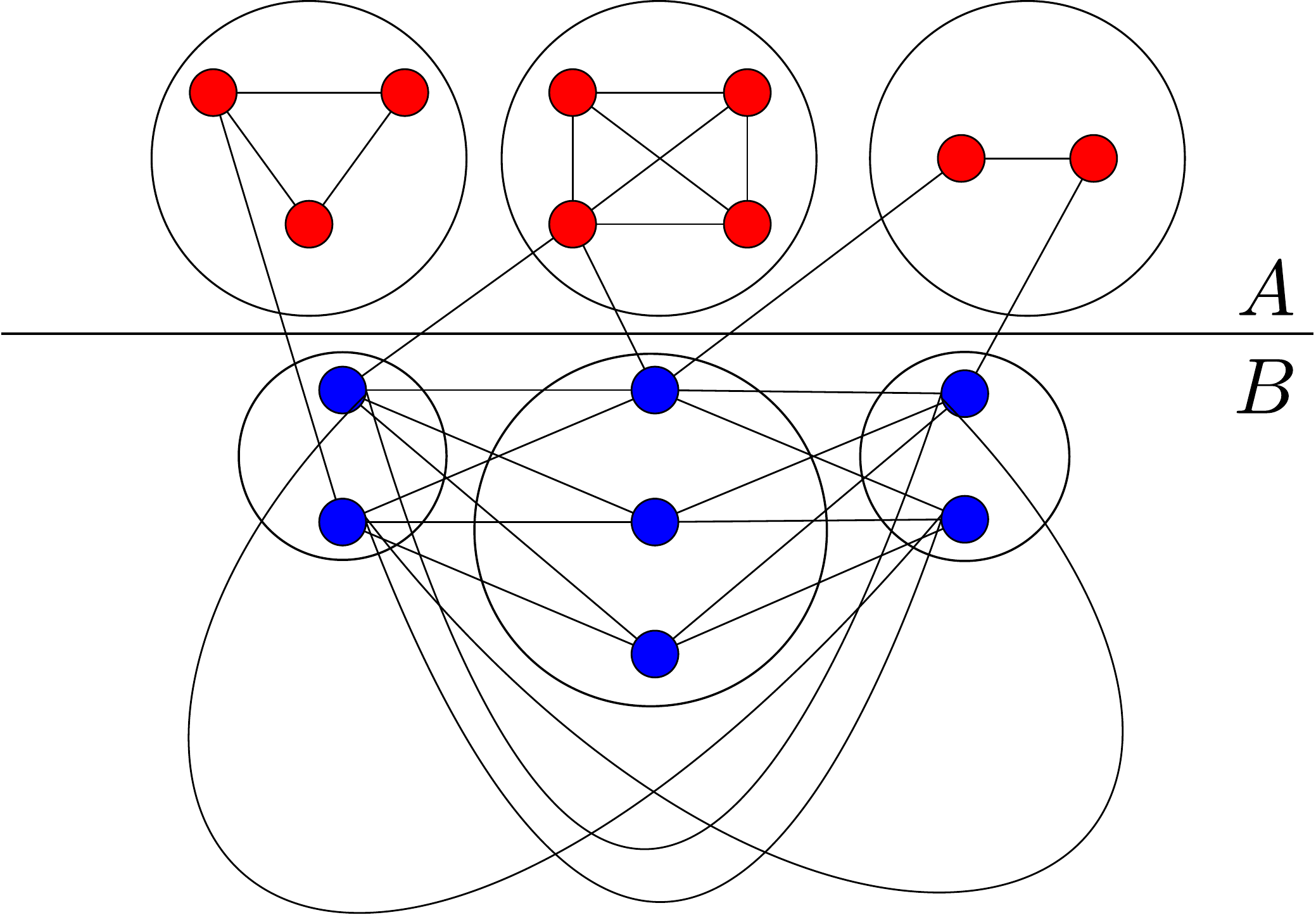}
			\label{fig:abpolar}
		}
	\qquad 
	\subfloat
		[A generalized split graph, which is a ($k, 1$)-polar graph, for
	fixed $k \geq 2$] {
			\includegraphics[scale=0.18]{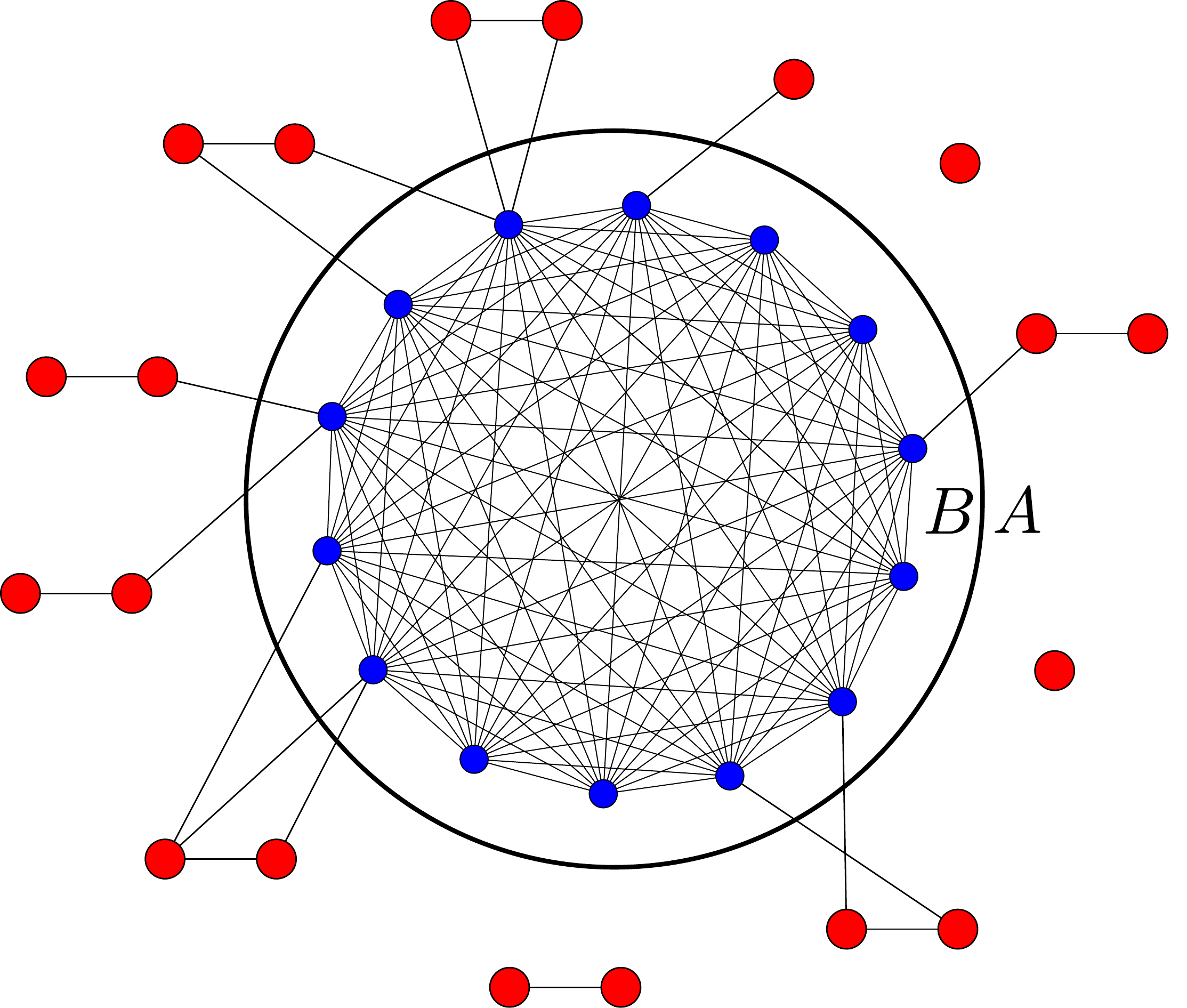}
			\label{fig:21polar}
		}
	\caption{Examples of ($\alpha, \beta$)-polar graphs}
\end{figure}

The class of ($k, 1$)-polar graphs, for fixed $k \geq 3$, is incomparable to the
class of weakly chordal graphs. Indeed, a chordless path with seven
vertices $P_7$ and a complement of a chordless cycle with six vertices
$\overline{C_6}$ are witnesses. Nevertheless, (2, 1)-polar graphs are a
subclass of weakly chordal graphs, since they do not contain a chordless cycle
with an even number of vertices greater than 5. We show that 2-clique-colouring
of (2, 1)-polar graphs is a $\mathcal{NP}$-complete problem. Finally, the class
of (1, 1)-polar graphs is precisely the class of split graphs. It is interesting
to recall that 2-clique-colouring of ($1, 1$)-polar graphs is in $\mathcal{P}$,
since ($1, 1$)-polar are a subclass of chordal graphs, which are
2-clique-colourable.


Giving continuity to our results, we investigate an open problem left by
Kratochv\'il and Tuza~\cite{Kratochvil} to determine the complexity of
2-clique-colouring of perfect graphs with all cliques having size at least 3.
Restricting the size of the cliques to be at least 3, we first show that
2-clique-colouring is still $\mathcal{NP}$-complete for (3, 1)-polar graphs,
even if it is restricted to weakly chordal graphs with all cliques
having size at least 3. Subsequently, we prove that the 2-clique-colouring of
(2, 1)-polar graphs becomes polynomial when all cliques have size at least 3.
Recall that the 2-clique-colouring of (2, 1)-polar graphs is
$\mathcal{NP}$-complete when there are no restrictions on the size of the
cliques.

We finish the paper answering the open problem of determining the complexity of
2-clique-colouring of perfect graphs with all cliques having size at least
3~\cite{Kratochvil}, by improving our proof that 2-clique-colouring is a
$\Sigma_2^P$-complete problem for weakly chordal graphs. We replace
each $K_2$ clique by a gadget with no clique of size 2, which forces
distinct colours into two given vertices.
 
The paper is organized as follows. In Section~\ref{sec:weaklychordal}, we show
that 2-clique-colouring is still $\Sigma_2^P$-complete for weakly chordal graphs. 
We then determine a hierarchy
of nested subclasses of weakly chordal graphs whereby each graph class is in a
distinct complexity class, namely $\Sigma_2^P$-complete,
$\mathcal{NP}$-complete, and $\mathcal{P}$. In
Section~\ref{sec:restrictingthesize}, we determine the complexity of
2-clique-colouring of perfect graphs with all cliques having size at
least~3, answering a question of Kratochv\'il and Tuza~\cite{Kratochvil}. We
then determine a hierarchy of nested subclasses of perfect graphs with all
cliques having size at least 3 whereby each graph class is in a distinct
complexity class. 
We refer the reader to Table~\ref{t:tabela} for our results and related work
about 2-clique-colouring complexity of perfect graphs.
\begin{table}[b!]
\caption{2-clique-colouring complexity of perfect graphs and subclasses.}
\centering
\begin{tabular}{|c|c|c||c|}
\hline
\multicolumn{3}{|c||}{\bf Class} & {\bf 2-clique-colouring complexity}\\
\hline\hline
\multirow{9}{*}{-} & \multirow{4}{*}{Perfect} & - & $\Sigma_2^P$-complete
\cite{DefossezOddHoleFreeCliqueColouringComplexity}\\
\cline{3-4}
 & & $K_4$-free & $\mathcal{NP}$-complete~\cite{Kratochvil} \\
\cline{3-4}
 & & $K_3$-free & \multirow{2}{*}{$\mathcal{P}$} \\
 & & (Bipartite) & \\
\cline{2-4}
 & Weakly chordal & - & $\Sigma_2^P$-complete \\
\cline{2-4}
 & (3, 1)-polar & - & \multirow{2}{*}{$\mathcal{NP}$-complete} \\
\cline{2-3}
 & (2, 1)-polar & - & \\
\cline{2-4}
 & Chordal & \multirow{2}{*}{-} & \multirow{2}{*}{$\mathcal{P}$~\cite{poon}} \\
 & (includes Split) & & \\
\hline
\multirow{4}{*}{\parbox{2cm}{\centering All cliques \\ having size \\ at least 3}}
 & Perfect & - & \multirow{2}{*}{$\Sigma_2^P$-complete}\\
\cline{2-3}
 & \multirow{2}{*}{Weakly chordal} & - & \\
\cline{3-4}
 & & (3, 1)-polar & $\mathcal{NP}$-complete \\
\cline{2-4}
 & (2, 1)-polar & - & $\mathcal{P}$ \\
\hline
\end{tabular}
\label{t:tabela}
\end{table}

\section{Hierarchical complexity of 2-clique-colouring of weakly chordal graphs}
\label{sec:weaklychordal}

D\'efossez proved that 2-clique-colouring of perfect graphs
is a $\Sigma_2^P$-complete
problem~\cite{DefossezOddHoleFreeCliqueColouringComplexity}.
In this section, we strengthen this result by showing that it is still
$\Sigma_2^P$-complete for weakly chordal graphs. We show a subclass of perfect
graphs (resp. of weakly chordal graphs) in which 2-clique-colouring is neither
a $\Sigma_2^P$-complete problem nor in $\mathcal{P}$, namely ($3, 1$)-polar
graphs (resp. ($2, 1$)-polar graphs). Recall that 2-clique-colouring of ($1,
1$)-polar graphs is in $\mathcal{P}$, since ($1, 1$)-polar are a subclass of
chordal graphs, thereby 2-clique-colourable. Notice that weakly chordal,
($2, 1$)-polar, and ($1, 1$)-polar (resp. perfect, ($3, 1$)-polar, and ($1,
1$)-polar) are nested classes of graphs.

Given a graph $G=(V, E)$ and adjacent vertices $a, g \in V$, we say that we add
to $G$ a copy of an auxiliary graph $AK(a, g)$ of order $7$  -- depicted in 
Fig.~\ref{fig:adjacentkeeper} -- if we change the definition of $G$ by
doing the following: we first change the definition of $V$ by adding to it
copies of the five vertices $b$, \ldots, $f$ of the auxiliary graph $AK(a,
g)$; then we change the definition of $E$, adding to it copies of the eight
edges $(u, v)$ of $AK(a, g)$.  Similarly, given a graph $G=(V, E)$ and
non-adjacent vertices $a, j \in V$, we say that we add to $G$ a copy of an
auxiliary graph $NAS(a, j)$ of order $10$ -- depicted in
Fig.~\ref{fig:nonadjacentswitcher} -- if we change the definition of $G$ by
doing the following: we first change the definition of $V$ by adding to it
eight copies of the vertices $b$, \ldots, $i$ of the auxiliary graph $NAS(a,
j)$; then we change the definition of $E$, adding to it copies of the thirteen
edges $(u, v)$ of $NAS(a, j)$.

\begin{figure}[t!]
\centering
	\subfloat
		[$AK(a, g)$]
		{
			\includegraphics[scale=0.4]{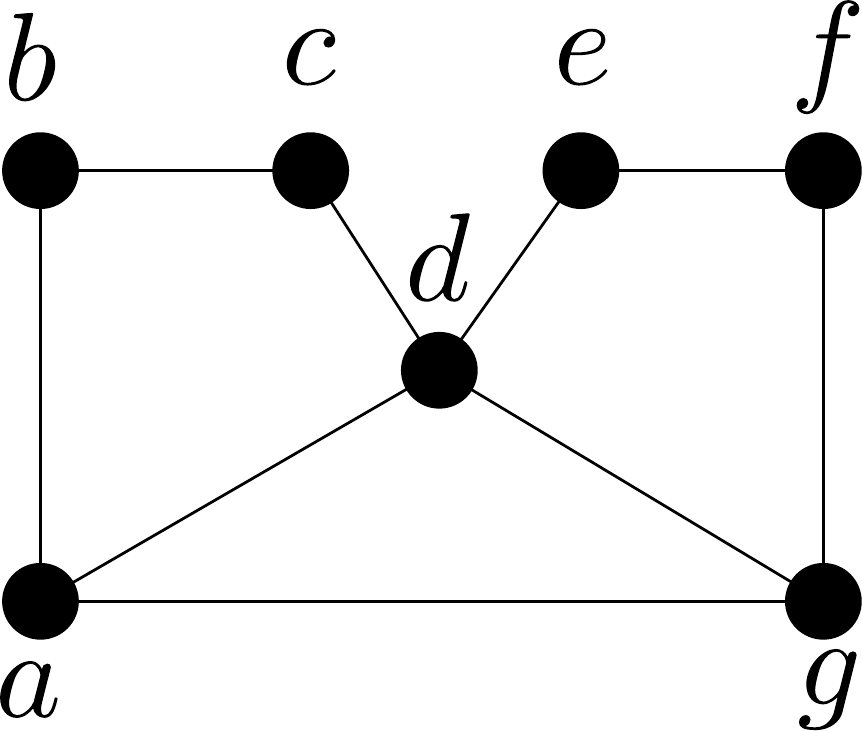}
			\label{fig:adjacentkeeper}
		}
	\qquad
	\subfloat
		[$NAS(a, j)$] 
		{
			\includegraphics[scale=0.4]{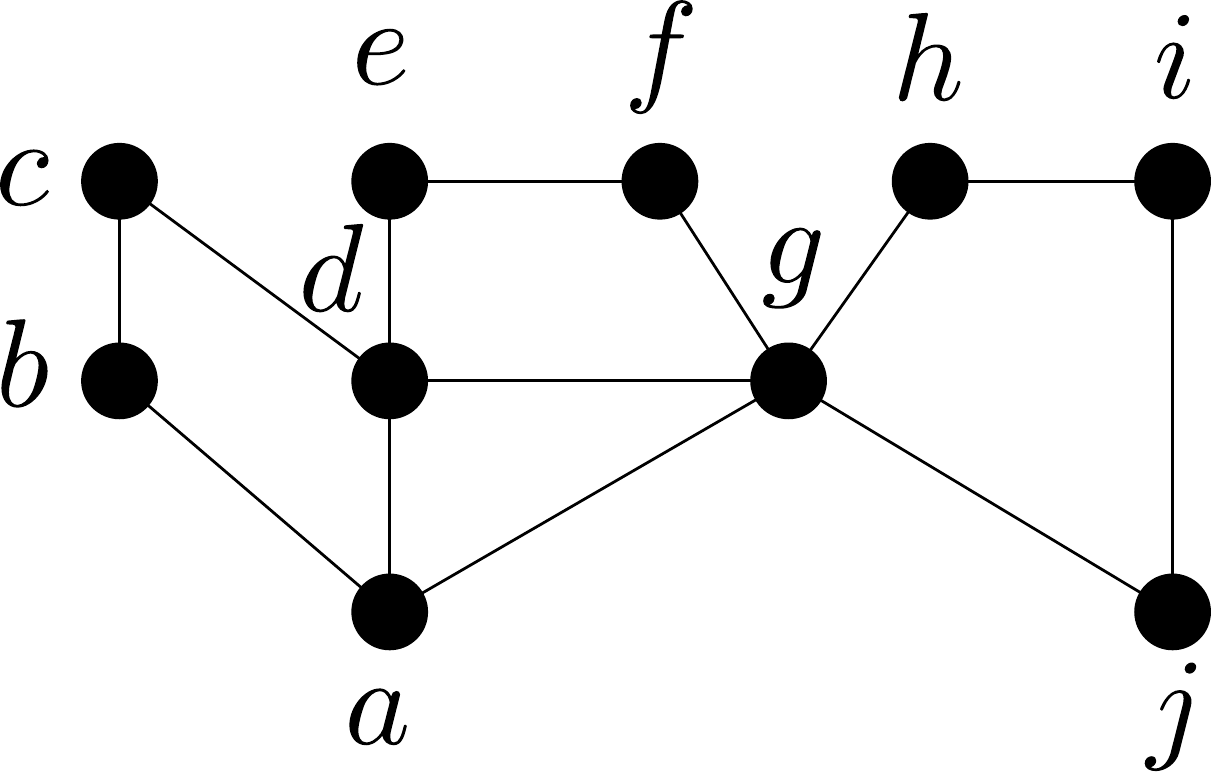}
			\label{fig:nonadjacentswitcher}
		}
	\caption{Auxiliary graphs $AK(a, g)$ and $NAS(a, j)$}
	\label{fig:auxiliarygraphs}
\end{figure}

The auxiliary graph $AK(a, g)$ is constructed to force the same
colour (in a 2-clique-colouring) to adjacent vertices $a$ and $g$, while the
auxiliary graph $NAS(a, j)$ is constructed to force distinct colours (in a
2-clique-colouring) to non-adjacent vertices $a$ and $j$ (see
Lemmas~\ref{lem:auxiliarygraphak}~and~\ref{lem:auxiliarygraphnas}). 

\begin{lemma}
\label{lem:auxiliarygraphak}
Let $G$ be a graph and $a, g$ be adjacent vertices in $G$. If we add to $G$ a
copy of an auxiliary graph $AK(a, g)$, then in any 2-clique-colouring of the
resulting graph, adjacent vertices $a$ and $g$ have the same colour.
\end{lemma}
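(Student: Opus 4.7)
The plan is to proceed by contradiction: assume that in some 2-clique-colouring $c$ of $G$ enlarged by a copy of $AK(a,g)$ we have $c(a) \neq c(g)$, and derive that some clique of $AK(a,g)$ must end up monochromatic, contradicting the fact that $c$ is a clique-colouring. Without loss of generality set $c(a) = 1$ and $c(g) = 2$, and work only inside the seven-vertex gadget since any clique containing vertices of $G \setminus\{a,g\}$ cannot involve the new vertices $b,\ldots,f$ (they are attached to the rest of $G$ only through $a$ and $g$).

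The first step is to list the (internal) cliques of $AK(a,g)$ --- the maximal complete subgraphs of size at least two induced entirely on $\{a,b,c,d,e,f,g\}$ --- from the picture in Figure~\ref{fig:adjacentkeeper}. Because the gadget has $7$ vertices and only $8$ edges, these cliques are all very small (edges, plus whatever triangles the drawing contains), so the catalogue is short and manageable.

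Next I would push forced colours through the gadget. For each internal vertex $v \in \{b,c,d,e,f\}$, the constraint ``no clique through $v$ is monochromatic'' together with the already-known colours of its neighbours forces $c(v)$ to a unique value, or at worst to one of two values that can be refined at the next step. I would walk along the gadget starting from $a$ (colour $1$) and from $g$ (colour $2$), at each step eliminating the only colour that would create a monochromatic clique, and propagating the resulting values to the next vertex. Because every internal vertex of a well-designed ``keeper'' gadget lies in a clique together with a vertex of already-determined colour, these propagations are completely forced.

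The key step --- and the main thing to verify carefully --- will be that the two propagation fronts starting from $a$ and from $g$ collide at some internal edge or triangle of $AK(a,g)$ and produce a monochromatic clique there, yielding the desired contradiction. Conversely, to confirm that the gadget is actually usable I would sanity-check the converse direction: when $c(a) = c(g)$, exhibit an explicit assignment of colours to $b,\ldots,f$ that avoids any monochromatic clique, so that adding $AK(a,g)$ to $G$ never destroys an existing valid 2-clique-colouring in which $a$ and $g$ already agree. The only real obstacle is the bookkeeping of the case analysis on the internal vertices; the structural idea (two forced propagation chains that must meet) is what makes the gadget work.
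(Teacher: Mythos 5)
Your underlying mechanism --- forced colour propagation from $a$ and from $g$, with a contradiction when the two fronts meet --- is, in spirit, the same argument the paper uses, only phrased contrapositively: the paper observes that $AK(a,g)$ contains a path $a\,b\,c\,\ldots\,g$ through the new vertices none of whose edges lies in a triangle of the resulting graph, so each edge of that path is a maximal clique ($K_2$), consecutive vertices must get distinct colours, and since the path has an even number of edges the colours of $a$ and $g$ coincide. Your ``propagation'' is exactly this alternation, and the ``collision'' producing a monochromatic clique is exactly the parity of the path.

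The genuine gap is that your proposal never establishes the structural facts that make any of this forced. The step ``the constraint that no clique through $v$ is monochromatic forces $c(v)$ to a unique value'' is valid only when the edge joining $v$ to an already-coloured neighbour is itself a maximal clique of the resulting graph; if that edge lies in a triangle, nothing is forced on $v$ at all. So the lemma really rests on two concrete properties of the gadget that you defer rather than verify: (i) the internal path from $a$ to $g$ is such that none of its edges acquires a common neighbour when the gadget is attached to $G$ (this is why the new vertices are attached only inside the gadget, and why the edges of the path are pendant $K_2$ cliques), and (ii) this path has even length, so the forced alternation returns to the colour of $a$ at $g$ (equivalently, in your contradiction setup, the last edge of the path becomes monochromatic). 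You explicitly flag the collision of the two fronts as ``the main thing to verify carefully,'' but that verification is precisely the content of the lemma; without it the proof is a plan, not a proof. (Your closing ``converse'' check that the gadget can be coloured when $c(a)=c(g)$ is not needed for this lemma, though it is indeed used later when the reduction extends a colouring to the copies of $AK$.)
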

\begin{proof}
Follows from the fact that in $AK(a, g)$ there exists a path $a b c \ldots
g$ such that no edge lies in a triangle of $G$.
\end{proof}

\begin{lemma}
\label{lem:auxiliarygraphnas}
Let $G$ be a graph and $a, j$ be non-adjacent vertices in $G$. If we add to $G$
a copy of an auxiliary graph $NAS(a, j)$, then in any 2-clique-colouring of the resulting
graph, non-adjacent vertices $a$ and $j$ have distinct colours.
\end{lemma}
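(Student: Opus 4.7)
The plan is to follow the template of Lemma~\ref{lem:auxiliarygraphak}. There, the decisive observation was that $AK(a,g)$ contains a path from $a$ to $g$ of even length, each of whose edges is a $K_2$-clique of the augmented graph; this forces the colours of consecutive vertices on the path to alternate and, by parity, $a$ and $g$ receive the same colour. For $NAS(a,j)$, I would exhibit an analogous induced path $a,b,\ldots,j$ inside the gadget whose edges are each $K_2$-cliques of the resulting graph, but now of \emph{odd} length, so that the forced alternation sends $a$ and $j$ to distinct colours.

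The main step is thus to isolate such an odd $K_2$-clique path in $NAS(a,j)$. I would read off from Fig.~\ref{fig:nonadjacentswitcher} the eight internal vertices $b,\ldots,i$ together with the thirteen edges of the gadget, classify each edge by whether it participates in a triangle, and verify that the non-triangle edges contain an induced odd path joining $a$ and $j$. Because the only vertices shared with $G$ are the non-adjacent endpoints $a$ and $j$, and no internal vertex $b,\ldots,i$ is adjacent to any vertex of $G$, no triangle can arise through vertices of $G$; hence the non-triangle status of each path edge can be checked entirely inside $NAS(a,j)$. With such a path in hand the argument concludes exactly as in Lemma~\ref{lem:auxiliarygraphak}: any 2-clique-colouring is forced to alternate colours along the path, and by odd parity $a$ and $j$ end up with distinct colours.

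The principal obstacle is the bookkeeping of which edges of $NAS(a,j)$ lie in triangles once all thirteen edges are taken into account. If some edge outside the designated path were to form a triangle with an edge of that path, the alternation argument would break there and an alternative path would have to be sought; conversely, the remaining edges of the gadget must themselves be coverable by a 2-clique-colouring of the internal vertices so that the gadget can actually be attached in reductions, although this extendability is not required by the statement of the lemma. Verifying the precise adjacency structure of $NAS(a,j)$ against Fig.~\ref{fig:nonadjacentswitcher} is thus the routine but essential computation on which the proof rests.
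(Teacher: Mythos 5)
Your proposal matches the paper's own (very terse) argument: the paper simply observes that $NAS(a,j)$ contains a path $a b c \ldots j$ none of whose edges lies in a triangle, so each edge is a $K_2$-clique forcing alternating colours, and the (odd) parity of the path gives $a$ and $j$ distinct colours — exactly your plan, including the parity contrast with $AK(a,g)$. The only part you leave to inspection of Fig.~\ref{fig:nonadjacentswitcher} (identifying the triangle-free path among the thirteen edges) is likewise left implicit in the paper, so there is no substantive difference.
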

\begin{proof}
Follows from the fact that in $NAS(a, j)$ there exists a path $a b c \ldots
j$ such that no edge lies in a triangle of $G$.
\end{proof}

We improve the proof of
D\'efossez~\cite{DefossezOddHoleFreeCliqueColouringComplexity}, in order to
determine the complexity of 2-clique-colouring for weakly chordal graphs.
Consider the {\sc QSAT2} problem, which is the $\Sigma_2^P$-complete canonical
problem~\cite{Marx2011}, as follows.

\begin{prob}
	{
		Quantified 2-Satisfiability (QSAT2)
	}
	{
		A formula $\Psi=(X,Y,D)$ composed of a disjunction $D$ of implicants (that 
	    are conjunctions of literals) over two sets $X$ and $Y$ of variables.
	}
	{
		Is there a truth assignment for $X$ such that for every truth assignment
		for $Y$ the formula is true? 
	}
\end{prob}

We prove that 2-clique-colouring weakly chordal graphs is $\Sigma_2^P$-complete
by reducing the $\Sigma_2^P$-complete canonical problem {\sc QSAT2} to it.
For a {\sc QSAT2} formula $\Psi=(X, Y, D)$, a weakly chordal graph $G$ is
constructed such that graph~$G$ is 2-clique-colourable if, and only if, there
is a truth assignment of $X$, such that $\Psi$ is true for every truth
assignment of $Y$. 

\begin{theorem}
\label{thm:weaklychordal}
The problem of 2-clique-colouring is $\Sigma_2^P$-complete for weakly chordal
graphs.
\end{theorem}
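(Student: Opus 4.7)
The plan is to show membership in $\Sigma_2^P$ via the standard argument already recalled in the introduction (existentially guess a 2-colouring, then verify in co-$\mathcal{NP}$ that no clique is monochromatic), and to establish hardness by a polynomial reduction from QSAT2 that refines the construction of D\'efossez~\cite{DefossezOddHoleFreeCliqueColouringComplexity}. Given a formula $\Psi=(X,Y,D)$, I would build a graph $G_\Psi$ whose ``skeleton'' follows D\'efossez: two literal vertices $x_i,\overline{x_i}$ per variable of $X\cup Y$, together with one gadget per implicant of $D$. The intended correspondence is that the colour received by each $X$-literal encodes the outer existential truth assignment of $X$, while the colouring of the $Y$-literals corresponds to the inner universal quantifier; the implicant gadgets are designed so that a 2-clique-colouring of $G_\Psi$ exists if and only if there is a truth assignment of $X$ making $\Psi$ true for every assignment of $Y$.

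The key modification compared to D\'efossez's reduction is to replace every adjacency or non-adjacency that would otherwise create a chordless cycle (or its complement) on five or more vertices by, respectively, a copy of the auxiliary graph $AK(a,g)$ or $NAS(a,j)$. By Lemmas~\ref{lem:auxiliarygraphak} and~\ref{lem:auxiliarygraphnas}, each such substitution preserves the colour constraint it replaces: $AK(a,g)$ forces $a$ and $g$ to receive the same colour in every 2-clique-colouring, and $NAS(a,j)$ forces $a$ and $j$ to receive distinct colours. Since the resulting constraints are logically identical to the adjacency/non-adjacency constraints they replace, the 2-clique-colourings of $G_\Psi$ biject with the witnesses of $\Psi$ being a yes-instance of QSAT2, yielding the desired $\Sigma_2^P$-hardness.

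The main obstacle is to verify that $G_\Psi$ is weakly chordal, i.e.\ that neither it nor its complement contains an induced cycle of length at least five. I would first check that each $AK$ and $NAS$ gadget is weakly chordal in isolation, and that the backbone of the construction (literal vertices together with implicant gadgets, once the offending adjacencies have been suppressed) is itself weakly chordal. Then, exploiting the fact that the internal vertices $b,\ldots,f$ of each $AK$-copy and $b,\ldots,i$ of each $NAS$-copy are adjacent only to vertices of the same gadget, any hypothetical induced cycle of length $\geq 5$ or antihole in $G_\Psi$ must enter and leave a gadget through its two interface vertices. A case analysis on the structure depicted in Fig.~\ref{fig:auxiliarygraphs}---in particular on the triangle-free interface-to-interface path that underlies Lemmas~\ref{lem:auxiliarygraphak} and~\ref{lem:auxiliarygraphnas}---rules out every such traversal, and this structural bookkeeping is where the bulk of the technical effort lies.
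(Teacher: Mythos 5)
Your membership argument and your general strategy (modify D\'efossez's QSAT2 reduction, replacing colour-equality and colour-inequality constraints by the gadgets $AK$ and $NAS$ of Lemmas~\ref{lem:auxiliarygraphak} and~\ref{lem:auxiliarygraphnas} so as to avoid long induced cycles) are indeed the route the paper takes. But there is a genuine gap at the heart of your hardness sketch: you state that ``the colouring of the $Y$-literals corresponds to the inner universal quantifier.'' That cannot work. In the question ``is $G_\Psi$ 2-clique-colourable?'' the entire colouring is existentially quantified, so if both the $X$-assignment and the $Y$-assignment were read off from colours you would only encode $\exists X\,\exists Y$, not $\exists X\,\forall Y$, and the reduction would not establish $\Sigma_2^P$-hardness. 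The universal quantifier has to be simulated by the co-$\mathcal{NP}$ part of the problem, namely the quantification over all maximal cliques. Concretely, the paper forces all vertices $y_j,\overline{y}_j$ and a special vertex $v$ to receive the \emph{same} colour (via the pendant paths $y_j y'_j\overline{y}_j$ and a chain of $AK$ gadgets $AK(\overline{y}_j,y_{j+1})$, $AK(\overline{y}_m,v)$), while only the pairs $x_i,\overline{x}_i$ get $NAS$ gadgets and thus encode $v_X$ by their colours. Each maximal clique through $v$ then chooses exactly one of $y_j,\overline{y}_j$ for every $j$, so the maximal cliques range over all valuations $v_Y$; the implicant vertices $d_k$ (forced to the colour opposite to $v$ by the 4-cycle $d_k d'_k d''_k v$) prevent such a clique from being monochromatic precisely when the corresponding implicant is true under $(v_X,v_Y)$. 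Without this asymmetric treatment of $X$ and $Y$ and the implicant/witness mechanism, your claimed ``bijection'' between 2-clique-colourings and yes-witnesses of QSAT2 does not capture the quantifier alternation.

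A secondary, smaller gap is that the weak chordality of the constructed graph is only announced as ``structural bookkeeping.'' In the paper this is a real argument: induced cycles of length at least $5$ are excluded by inspecting cycles through the edges $v d_k$ and the gadget interfaces, and antiholes are excluded by a counting argument showing that any complement of a long chordless cycle could contain at most two vertices from the implicant set and at most two from the central set $\{x_i,\overline{x}_i,y_j,\overline{y}_j,v\}$, hence at most four vertices in total. Your plan to analyse gadget traversals through the two interface vertices is reasonable for the cycle case, but you would still need an argument of this counting type for the complement, since antiholes need not ``enter and leave'' a gadget along an induced path.
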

\begin{proof}
A 2-partition of the graph is a certificate to decide whether a graph has a
2-clique-colouring. Moreover, a monochromatic clique is a certificate to check
whether a 2-partition is not a 2-clique-colouring. Finally, it is easy to
describe a polynomial-time algorithm to check whether a complete set is
monochromatic and maximal. Hence, 2-clique-colouring is a $\Sigma_2^P$ problem.

We prove that 2-clique-colouring weakly chordal graphs is $\Sigma_2^P$-hard
by reducing {\sc QSAT2} to it. Let $n$, $m$, and $p$ be the number of variables
$X$, $Y$, and implicants, respectively, in formula~$\Psi$. We define graph
$G$, as follows.

\begin{itemize}
	\item for each variable $x_i$, we create vertices $x_i$ and $\overline{x}_i$;
	\item for each variable $y_j$, we create vertices $y_j$, $y^\prime_j$,
	and $\overline{y}_j$ and edges $y_j y^\prime_j$, and $y^\prime_j
	\overline{y}_j$;
	\item we create a vertex $v$ and edges so that the set
	$\{x_1, \overline{x}_1, \ldots , x_n, \overline{x}_n, y_1, \overline{y}_1,
	\ldots, y_m, \overline{y}_m, v\}$ induces a complete subgraph of $G$ minus the
	matching $\{\{x_1, \overline{x}_1\}, \ldots , \{x_n, \overline{x}_n\}, \{y_1,
	\overline{y}_1\}, \ldots, \{y_m, \overline{y}_m\}\}$;
	\item add copies of the auxiliary graph $NAS(x_i, \overline{x}_i)$, for $i =
	1, \ldots, n$;
	\item add copies of the auxiliary graph $AK( \overline{y}_j, y_{j+1})$, for $j
	= 1, \ldots, m-1$;
	\item add a copy of $AK(\overline{y}_m, v)$; and
	\item for each implicant $d_k$, we create vertices $d_k, d^\prime_k$,
	$d^{\prime\prime}_k$, and we add the edges $d_k d^\prime_k$, $d^\prime_k
	d^{\prime \prime}_k$, $d^{\prime\prime}_k v$, and $d_k v$. Moreover, each
	vertex $d_k$ is adjacent to a vertex $l$ in $\{x_1, \overline{x}_1, \ldots ,
	x_n, \overline{x}_n, y_1, \overline{y}_1, \ldots, y_m, \overline{y}_m, v\}$
	if, and only if, the literal correspondent to $\overline{l}$ is not in the
	implicant correspondent to vertex~$d_k$.
\end{itemize}

Refer to Fig.~\ref{fig:reduction} for an example of such construction, given a
formula $\Psi = (x_1 \wedge \overline{x}_2 \wedge y_2) \vee (x_1 \wedge x_3 \wedge
\overline{y}_2) \vee ( \overline{x}_1 \wedge \overline{x}_2 \wedge y_1)$.

\begin{figure}[t!]
\centering
	\subfloat
		[Graph constructed for a QSAT2 instance $\Psi = (x_1 \wedge \overline{x}_2 \wedge y_2) \vee (x_1
	\wedge x_3 \wedge \overline{y}_2) \vee ( \overline{x}_1 \wedge \overline{x}_2 \wedge
	y_1)$] {
			\includegraphics[scale=0.5]{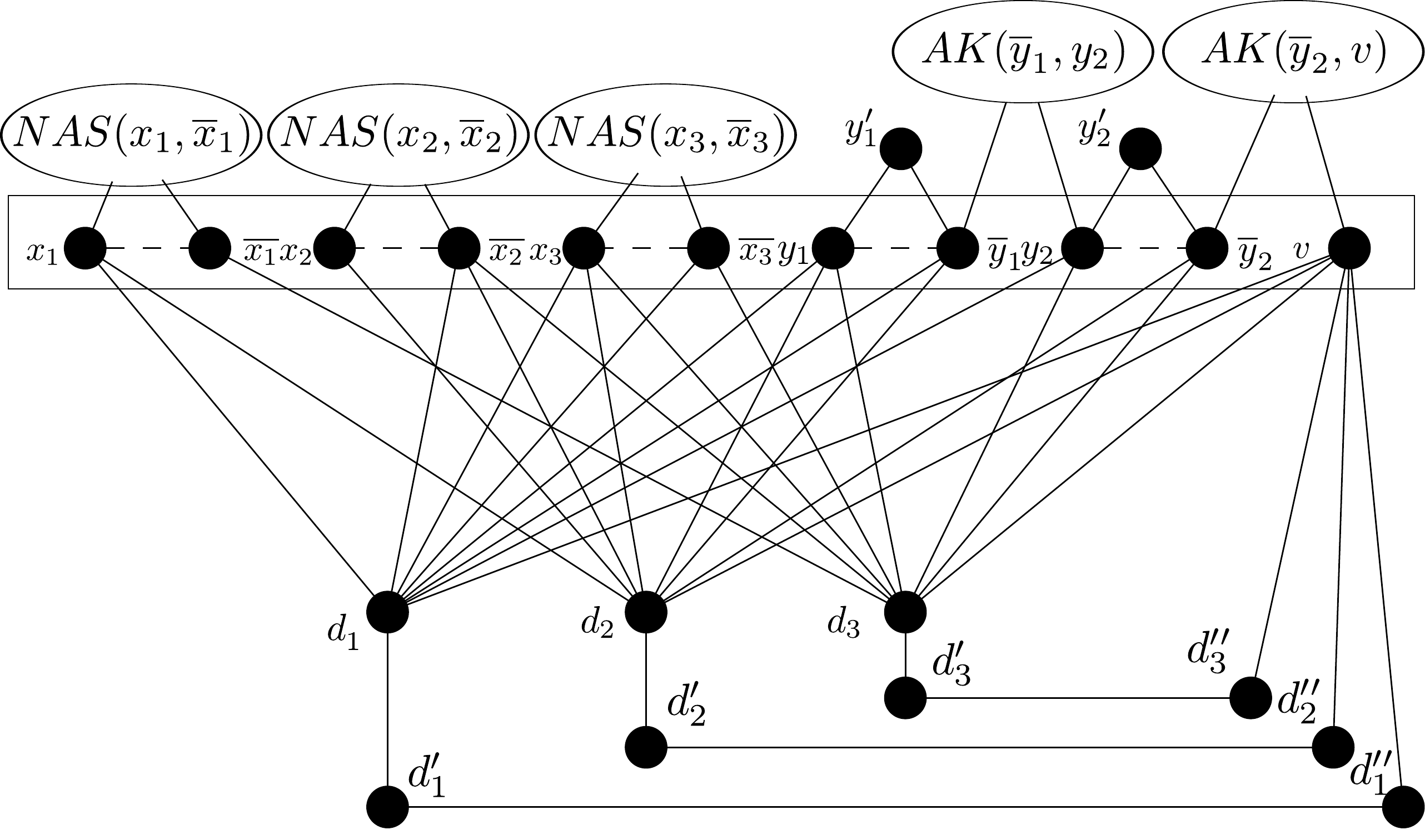}
			\label{fig:reduction-uncoloured}
		}
	\qquad
	\subfloat
		[A satisfying truth assignment of $x_{1} = \overline{x_{2}} = x_{3} = T$] {
			\includegraphics[scale=0.5]{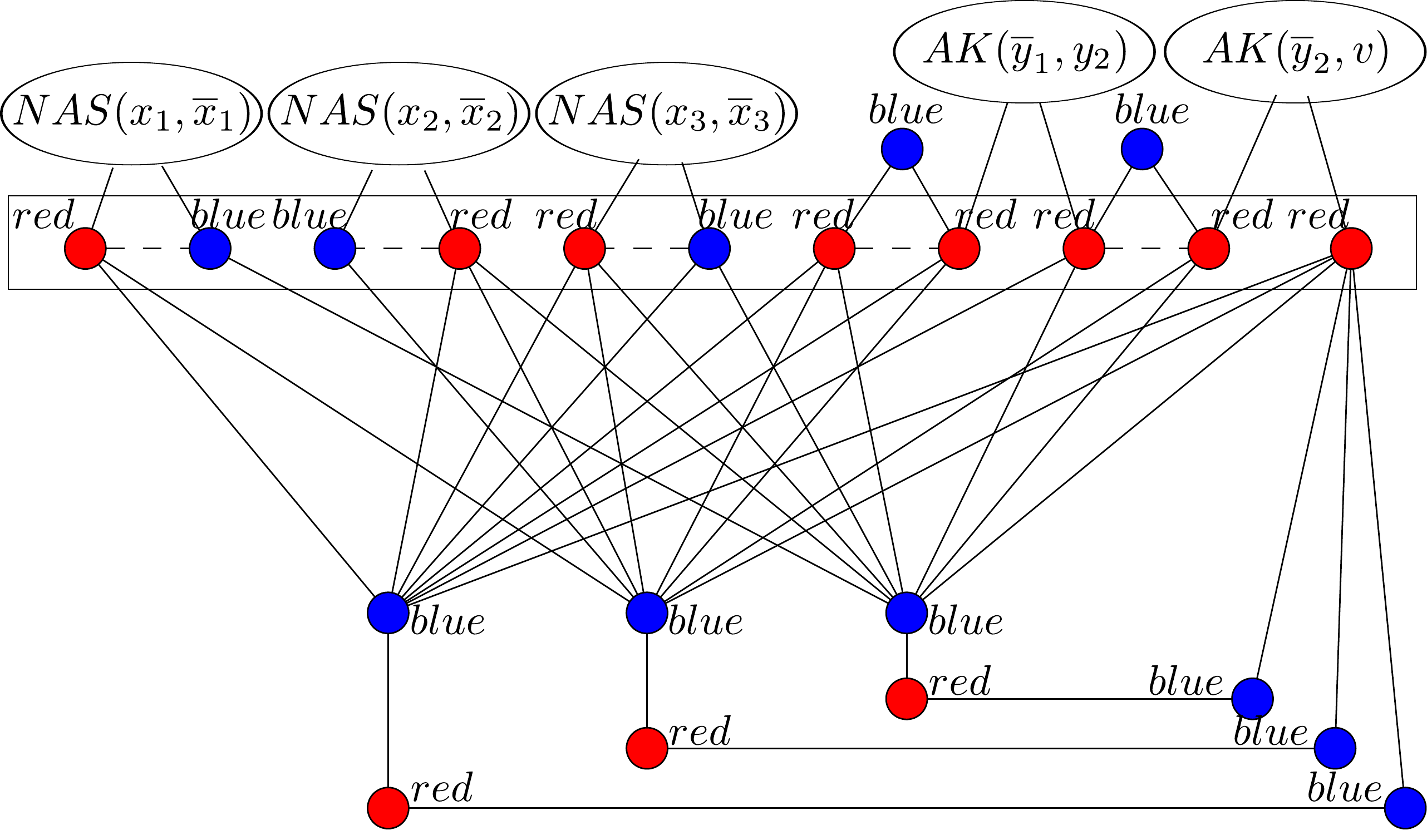}
			\label{fig:reduction-coloured}
		}
	\caption{Example of a graph constructed  for a QSAT2 instance, where $NAS$ and $AK$ denote the respectively auxiliary graphs}
	\label{fig:reduction}
\end{figure}

We claim that graph $G$ is 2-clique-colourable if, and only if, $\Psi$ has a
solution. For every $i$, the vertices $x_i$ and $\overline{x}_i$ have
opposite colours in any 2-clique-colouring of $G$
(see Lemma~\ref{lem:auxiliarygraphnas}).
The set $\{y_1, y_2, \ldots, y_{m}, \overline{y}_1, \overline{y}_2, \ldots,
\overline{y}_m, v\}$ is monochromatic. Indeed, $y_j$, $\overline{y}_j$,
$y_{j+1}$ have the same colour, since $y_j y^\prime_j$, $y^\prime_j
\overline{y}_j$ are cliques and, by Lemma~\ref{lem:auxiliarygraphak},
$\overline{y}_j y_{j+1}$ as well as $\overline{y}_{m} v$ have the same colour.
Finally, $d_1$, $d_2$, \ldots, $d_{p}$ all have the same colour, which is the
opposite to the colour of $v$.

Assume there exists a valuation $v_X$ of $X$ such that $\Psi$ is satisfied for
any valuation of $Y$. We give a colouring to the graph $G$, as follows.
 
\begin{itemize}
	\item assign colour 1 to $y_j$, $\overline{y}_j$, $d^\prime_k$, and $v$,
	\item assign colour 2 to $y^\prime_j$, $d_k$ and $d^{\prime \prime}_k$,
	\item extend the unique 2-clique-colouring to the $m-1$ copies of the
	auxiliary graph $AK(\overline{y}_j, y_{j+1})$ and $AK(\overline{y}_{m}, v)$,
	\item assign colour 1 to $x_i$ if the corresponding variable is $true$ in
	$v_X$, otherwise we assign colour 2 to it,
	\item assign colour 2 to $\overline{x}_i$ if the corresponding variable is
	$true$ in $v_X$, otherwise we assign colour 1 to it,
	\item extend the unique 2-clique-colouring to the $n$ copies of the auxiliary
	graph $NAS(x_i, \overline{x}_i)$.
\end{itemize}

It still remains to be proved that this is indeed a 2-clique-colouring. Let us assume that
it is not the case and that there exists a maximal clique $K$ of $G$ that is monochromatic.
Clearly,~$K$ is not contained in a copy of any auxiliary graph, and that it does
not contain any vertex of type $y^\prime_j$, $d^\prime_k$, or $d^{\prime \prime}_k$. 
As $v$ is adjacent to all other vertices (which are the $x_i$, $\overline{x}_i$,
$y_j$, $\overline{y}_j$, and $d_k$), we deduce that $v \in K$ and, subsequently,
that all vertices of $K$ have colour 1. Moreover, $K$ contains exactly one
vertex among $x_i$ and $\overline{x}_i$, i.e. the one corresponding to the
literal which is $true$ in $v_X$, and similarly exactly one vertex among $y_j$
and $\overline{y}_j$. We remark that $K$ does not contain any $d_k$ since they
have colour 2. Then we define a valuation $v_Y$ in the following way. If $y_j
\in K$, then $v_Y$ assigns value $true$ to the corresponding variable,
otherwise $v_Y$ assigns the value $false$. Thus, the literals corresponding to
the vertices of $K \setminus \{v\}$ are exactly those that are $true$ in the
total valuation $(v_X,v_Y)$. Let us consider now any $d_k$. Since $K$ is
maximal, each $d_k$ is not adjacent to at least one vertex of $K$. By
construction of $G$, this means that all implicants are $false$, which
contradicts the definition of $v_X$. Hence, there is no monochromatic clique
and we have a 2-clique-colouring.

For the converse, we now assume that $G$ is 2-clique-colourable and we consider
any 2-clique-colouring with colours 1 and 2. Without loss of generality, we can
assume that $v$ has colour 1. Then, $y_j$ and $\overline{y}_j$ have colour 1
and $d_k$ has colour 2. Vertices $x_i$ and $\overline{x}_i$ have opposite
colours and we define $v_X$ in the following way. The literal $x_i$ is assigned
$true$ in $v_X$ if the corresponding vertex has colour 1 in the
clique-colouring, otherwise it is assigned $false$ in $v_X$.
Let $v_Y$ be any valuation of $Y$. Consider the clique $K$ that contains $v$ and the vertices
corresponding to literals which are $true$ in the total valuation $(v_X, v_Y)$. Since all
those vertices have colour 1 and we have a 2-clique-colouring, it follows that
$K$ cannot be maximal. As a consequence, there exists some $d_k$ which is
adjacent to all vertices of $K$. Thus, the corresponding implicant is $true$ in
that valuation and this proves that $\Psi$ is satisfied for any valuation $v_Y$
and that $v_X$ has the right property.

It now remains to be proved that $G$ is a weakly chordal graph. Fixing edge
$v d_i$ as an edge of a cycle, one can check that $G$ has no chordless cycle
of size greater than 5 as an induced subgraph. Now, we prove that $G$ has no
complement of a chordless cycle of size greater than 4 as an induced subgraph.

Let $\overline{H}$ be the complement of a chordless cycle of size greater than
5. Clearly, any vertex of $H$ has degree at least 3. Hence, we analyse the
vertices of $G$ with degree at least 3. Let $S = \{x_i, \overline{x}_i, y_j,
\overline{y}_j, v \mid 1 \leq i \leq n, 1 \leq j \leq m\}$ and let $R = \{d_k |
1 \leq k \leq p\}$. All vertices of $G$ with degree at least 3 are precisely the
vertices of the auxiliary graphs, $S$, and $R$. We invite the reader to check
that any vertex of an auxiliary graph that is not in $S$ does not belong to
$H$. Hence, every vertex of $H$ belongs to $S$ or $R$. First, we claim that
$R$ has at most 2 vertices of $H$. Indeed, 3 vertices of $R$ induce a
$\overline{K_3}$. Second, we claim that $S$ has at most 2 vertices. Notice that
$|R \cap H| > 0$, since a vertex in $S$ has at most one non-neighbor in $S$ and
every neighbor of $H$ has at least two non-neighbors in $H$. If $|R \cap H|$ =
1, then $|S \cap H| \leq 2$, since a vertex in $S$ has at most one non-neighbor
in $S$ and the unique vertex of $R \cap H$ has at most two non-neighbors in $S
\cap H$. If $|R \cap H|$ = 2, then $|S \cap H| \leq 2$, since a vertex in $S$
has at most one non-neighbor in $S$ and each vertex of $R \cap H$ has at most one
non-neighbor in $S$. Hence, at most two vertices of $H$ are in $R$ and at most
two vertices of $H$ are in $S$, i.e. $|H| \leq 4$, which is a contradiction.
\end{proof}

Now, our focus is on showing a subclass of weakly chordal graphs in which
2-clique-colouring is $\mathcal{NP}$-complete, namely (3, 1)-polar and (2,
1)-polar graphs.

Complements of bipartite graphs are a subclass of ($\infty, 1$)-polar graphs.
Indeed, let $G = (V,E)$ be a complement of a bipartite graph, where $(A, B)$ is
a partition of $V$ into two disjoint complete sets. Clearly, $G$ is a ($\infty,
1$)-polar graph.
D\'efossez~\cite{DefossezOddHoleFreeCliqueColouringComplexity} showed that it
is co$\mathcal{NP}$-complete to check whether a 2-colouring of a complement of
a bipartite graph is a
2-clique-colouring~\cite{DefossezOddHoleFreeCliqueColouringComplexity}. Hence,
it is co$\mathcal{NP}$-hard to check if a colouring of the vertices of a
($\infty, 1$)-polar graph is a 2-clique-colouring. On the other hand, we show
next that, if $k$ is fixed, listing all cliques of a ($k, 1$)-polar graph and
checking if each clique is polychromatic can be done in polynomial-time,
although the constant behind the big $O$ notation is impraticable.
The outline of the algorithm follows. We create a subroutine in which, given a
satellite~$K$ of $G$, we check whether every clique of $G$ containing a subset
of $K$ is polychromatic. Lemma~\ref{lem:nptocheck} determines the complexity of
the subroutine and proves its correctness. The algorithm runs the subroutine for
each satellite of $G$ and, as a final step, check whether partition $B$
is polychromatic if, and only if, partition $B$ is a clique of $G$.
Theorem~\ref{thm:nptocheck} determines the complexity of the algorithm and prove
its correctness.

\begin{lemma}
\label{lem:nptocheck}
There exists an $O(n)$-time algorithm to check whether every clique that
contains a subset of a satellite $S$ of a ($k, 1$)-polar graph, for a fixed $k
\geq 1$,is polychromatic.
\end{lemma}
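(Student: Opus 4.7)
The plan is to exploit the bounded satellite size $|S|\le k = O(1)$ to reduce the task to inspecting only $2^k-1$ candidate cliques, each in linear time.

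First, I would analyse the structure of the cliques of $G$ that intersect $S$. Since distinct satellites are pairwise anticomplete in $G[A]$, any clique $K$ of $G$ meets at most one satellite; if it meets $S$, then $K\subseteq S\cup B$ and $S':=K\cap S$ is non-empty. The key observation is that the remaining part $B':=K\cap B$ is \emph{uniquely determined} by $S'$. One inclusion is obvious: $B'\subseteq N_G(S')\cap B$. For the reverse, note that $G[B]$ is a complete graph (this is precisely the $\beta=1$ hypothesis), so every $b\in N_G(S')\cap B$ is adjacent to each vertex of $S'$ and, by completeness of $G[B]$, to each vertex of $B'$; hence $K\cup\{b\}$ is complete and maximality of $K$ forces $b\in K$. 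Consequently $B' = N_G(S')\cap B =: B_{S'}$, and $K_{S'}:=S'\cup B_{S'}$ is the \emph{only} candidate clique of $G$ whose intersection with $S$ equals $S'$.

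Second, I would characterise when $K_{S'}$ really is a clique of $G$. By the analogous reasoning, the only way maximality can fail is if some $u\in S\setminus S'$ is adjacent to every vertex of $B_{S'}$: vertices in other satellites have no neighbour in $S'$, and vertices of $B\setminus B_{S'}$ miss some vertex of $S'$ by definition. Combined with the size requirement $|K_{S'}|\ge 2$, this yields a constant number of adjacency tests per subset $S'$. The algorithm then enumerates the $2^k-1$ non-empty subsets $S'\subseteq S$; for each, it computes $B_{S'}$ by one sweep over $B$ (testing adjacency to the at most $k$ vertices of $S'$), tests maximality by one sweep over $S\setminus S'$, and, provided $K_{S'}$ is a clique, checks monochromaticity by one sweep over $K_{S'}$. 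Each sweep costs $O(kn) = O(n)$, and since $k$ is a fixed constant there are only $O(1)$ iterations, so the total running time is $O(n)$. A clique of $G$ meeting $S$ is polychromatic if and only if its unique associated $K_{S'}$ is polychromatic, so the procedure correctly decides the question.

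The main obstacle is the uniqueness claim $B'=B_{S'}$, which relies crucially on $\beta=1$: without completeness of $G[B]$, multiple distinct maximal cliques could share the same intersection with $S$, the bookkeeping would no longer be constant in the size of $S$, and the $O(n)$ bound would be lost. Once this uniqueness is in hand, both correctness and the time analysis follow immediately, and the lemma sets up the natural outer loop over satellites that will give the full polynomial-time clique-colouring check claimed in Theorem~\ref{thm:nptocheck}.
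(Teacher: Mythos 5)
Your proof is correct, and it rests on the same structural fact that underlies the paper's algorithm: for each non-empty trace $S'\subseteq S$, the only possible clique of $G$ meeting $S$ exactly in $S'$ is $S'\cup\bigl(\bigcap_{v\in S'}N(v)\cap B\bigr)$ --- the set the paper denotes $A_i\cup B_i$. (One notational caveat: for your uniqueness argument to work, $N_G(S')$ must denote the \emph{common} neighbourhood $\bigcap_{v\in S'}N(v)$ rather than the union; your later steps make clear that this is what you intend, but you should say so explicitly.) Where you genuinely diverge is in how the candidate sets are processed. The paper runs a colour-driven recursion that deletes one satellite vertex at a time, descends only while the current subset is polychromatic under $\pi$, and, on reaching a monochromatic subset $A_i$, tests $A_i\cup B_i$ without verifying that this set is maximal; its induction argument only speaks about cliques whose trace on the satellite equals $A_i$. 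You instead enumerate all $2^k-1$ non-empty subsets and explicitly test maximality (no $u\in S\setminus S'$ adjacent to every vertex of $B_{S'}$, together with the size-at-least-2 requirement) before testing monochromaticity. This buys a cleaner soundness argument: if some vertex $u\in S\setminus S'$ is adjacent to all of $B_{S'}$, then $S'\cup B_{S'}$ is not a clique at all, and rejecting the colouring because that set is monochromatic would be a false negative --- for instance, a satellite $\{u,v\}$ with $\pi(u)=1$, $\pi(v)=2$, $N_B(u)=\{b_1\}\subseteq N_B(v)$ and $b_1$ coloured 1 makes $\{u,b_1\}$ monochromatic but non-maximal, while every genuine clique meeting the satellite is polychromatic; your maximality filter handles this case, which the paper's recursion as literally stated does not address. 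The price is a $2^k$ enumeration in place of the paper's pruned recursion (bounded there by $k!$ calls), but for fixed $k$ both yield the claimed $O(n)$ bound, so nothing is lost asymptotically.
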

\begin{proof}
We prove the correctness of Algorithm~\ref{alg:A_i} by induction. Let $A_1 = S$
and $B_1 = \displaystyle\bigcap_{v \in A_1} \left( N(v) \cap B \right)$. Notice
that $A_1 \cup B_1$ is the unique clique of graph $G$ that contains $S$. If $A_1 \cup
B_1$ is monochromatic, then $\pi$ is not a 2-clique-colouring of $G$.
Otherwise, i.e. $A_1 \cup B_1$ is polychromatic, we are done. Now, we need to
check whether every clique of $G$ containing a proper subset of $S$ is
polychromatic.

Let $A_{i+1} = A_1 \setminus \{x_1, \ldots, x_{i}\}$ and $B_{i+1} =
\displaystyle\bigcap_{v \in A_{i+1}} (N(v) \cap B)$, for some $\{x_1,
\ldots, x_{i}\} \subset A_1$. As an induction hyphotesis, suppose that every
clique of $G$ containing $A_{j}$, for every $1 \leq j \leq i$, is
polychromatic.

For the induction step, consider a clique $K$ of graph $G$ containing
$A_{i+1}$. By induction hyphotesis, if $K$ contains $A_{j}$, for some $1
\leq j \leq i$, then $K$ is polychromatic.
Now, consider that $K$ does not contain $A_{j}$, for any $1 \leq j \leq i$.
Then, $K = A_{i+1} \cup B_{i+1}$. If $K$ is monochromatic, then $\pi$ is not a
2-clique-colouring of $G$. Otherwise, i.e. $K$ is polychromatic, then every
clique containing $A_{i+1}$ is polychromatic and the proof of the correctness
of Algorithm~\ref{alg:A_i} is done.
 
Now, we give the time-complexity of Algorithm~\ref{alg:A_i}. First, there are at
most $k!$ recursive calls. Second, the number of steps in an iteration of the
algorithm is upper bounded by the complexity of calculating $B_i$.
One can design an $O(|B| \log k)$-time algorithm to calculate $B_i$. Then,
the algorithm is executed in $O(n)$ steps, since $k$ is a constant and $|B|$ is
upper bounded by $n$.
\end{proof}

	\begin{algorithm}[h!]
	\SetKwInOut{Input}{input}
	\SetKwInOut{Output}{output}
	\Input{
			$G = (A, B)$, a ($k, 1$)-polar graph \\
			$\pi$, a 2-colouring of $G$ \\
			$A_i$, a satellite of $G$
	}
	\Output{$yes$, if every clique of $G$ containing a subset of a satellite
	$A_i$ is polychromatic}
	\caption{$O(n)$-time algorithm to output $yes$, if every
clique of a ($k, 1$)-polar graph containing a subset of a satellite of $G$ is
polychromatic, for a fixed $k \geq 1$.}
	\BlankLine
	\Begin
	{	
		\eIf{$|\pi(A_i)| \geq 2$}
		{
			\For{
				$i=1$ \emph{\KwTo} $|A_i|$
			}
			{
        		$answer \longleftarrow recursive(A_i \setminus \{x_i\})$\;
    	    	\If{$answer = no$}
    	    	{
					\Return{$no$\;}
    	    	}
	        }
			\Return{$yes$\;}
		}
		{
			$B_i \longleftarrow \displaystyle\bigcap_{v \in A_{i}} (N(v) \cap B)$\;
			\eIf{$|\pi(A_i \cup B_i)| \geq 2$}
			{
				\Return{$yes$\;}
	        }
	        {
				\Return{$no$\;}	        	
	        }
		}
	}
	\label{alg:A_i}
	\end{algorithm}

\begin{theorem}
\label{thm:nptocheck}
There exists an $O(n^2)$-time algorithm to check whether a colouring of the
vertices of a ($k, 1$)-polar graph, for a fixed $k \geq 1$, is a clique-colouring.
\end{theorem}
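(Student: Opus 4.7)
The plan is to reduce the verification problem to repeated applications of the subroutine of Lemma~\ref{lem:nptocheck}, one for each satellite, plus one additional check for the complete set $B$.

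The key structural observation is that, in a $(k,1)$-polar graph $G$ with partition $(A,B)$, any clique $K$ of $G$ intersects $A$ in a subset of a single satellite: indeed, distinct satellites are connected components of $G[A]$, so they are pairwise non-adjacent, and therefore $K \cap A$ lies entirely inside some satellite $S$. This gives a clean dichotomy for the cliques of $G$: either $K \cap A \neq \emptyset$, in which case $K$ contains a nonempty subset of some satellite $S$ and is checked by Algorithm~\ref{alg:A_i} applied to $S$; or $K \subseteq B$, in which case $K = B$ and we must separately verify that $B$ is polychromatic whenever $B$ is itself a clique of $G$.

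Accordingly, the algorithm first enumerates the satellites $S_1,\ldots,S_t$ of $G$ (these are simply the connected components of $G[A]$, obtainable in $O(n+m)$ time) and invokes the subroutine of Lemma~\ref{lem:nptocheck} on each $S_i$; if any call returns $no$, the colouring is rejected. Then, as a final step, the algorithm verifies whether $B$ is a maximal complete subgraph of $G$, i.e.\ whether no vertex of $A$ is adjacent to every vertex of $B$; if $B$ is maximal, the algorithm checks whether $|\pi(B)|\geq 2$, rejecting if $B$ is monochromatic. If no rejection is issued, the algorithm returns $yes$. Correctness follows from Lemma~\ref{lem:nptocheck} together with the above dichotomy: every clique of $G$ is inspected by exactly one of the two phases.

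For the running time, note that the satellites partition $A$ and each satellite has at most $k$ vertices, so there are at most $n$ of them; since each invocation of Algorithm~\ref{alg:A_i} runs in $O(n)$ time by Lemma~\ref{lem:nptocheck}, the first phase costs $O(n^2)$. The final phase costs $O(n)$, since checking maximality of $B$ amounts to testing, for each vertex of $A$, whether it is adjacent to all of $B$, and checking $|\pi(B)|\geq 2$ is linear. The overall bound is $O(n^2)$. The only mildly delicate point is the correctness argument for the final step, which relies on the fact that the sets $A_i \cup B_i$ considered inside the satellite subroutine never reduce to a subset of $B$ alone, so the case $K=B$ genuinely requires the separate treatment described above; this is the main place where care is required, but it is a straightforward case analysis rather than a real obstacle.
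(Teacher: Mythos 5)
Your proposal is correct and follows essentially the same route as the paper: run the satellite subroutine of Lemma~\ref{lem:nptocheck} on each satellite (maximal complete set of $A$), then handle the only remaining possible clique, namely $B$ itself, by testing whether $B$ is maximal and, if so, whether it is polychromatic. The only quibble is your claim that the final phase costs $O(n)$ --- testing every vertex of $A$ against all of $B$ is naively $O(n\cdot|B|)$ --- but this does not affect the overall $O(n^2)$ bound.
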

\begin{proof}
The correctness of Algorithm~\ref{alg:test2-clique-colouring} follows.
A clique of $G$ contains at least one vertex of a satellite of $G$ or it is $B$.
The first loop of Algorithm~\ref{alg:test2-clique-colouring} checks whether all
cliques in the former case are polychromatic. The second loop of
Algorithm~\ref{alg:test2-clique-colouring} checks whether $B$ is a clique. If
$B$ is a clique, then we check whether $B$ is polychromatic.

Now, we give the time-complexity of Algorithm~\ref{alg:test2-clique-colouring}.
The first loop of Algorithm~\ref{alg:test2-clique-colouring} runs at most $n$
times the Algorithm~\ref{alg:A_i}, which runs in $O(n)$-time. 
The second loop of Algorithm~\ref{alg:test2-clique-colouring} runs at most $n$
times one comparison, which runs in $O(n)$-time.
Then, Algorithm~\ref{alg:test2-clique-colouring} is executed in at most $O(n^2)$
steps.
\end{proof}
\def\proofname{Proof}

	\begin{algorithm}[h!]
	\SetKwInOut{Input}{input}
	\SetKwInOut{Output}{output}
	\Input{
			$G = (A, B)$, ($k, 1$)-polar graph \\
			$\pi$, a 2-colouring of $G$ \\
	}
	\Output{$yes$, if $\pi$ is a 2-clique-colouring o $G$.}
	\caption{$O(n^2)$-time algorithm to output $yes$, if $\pi$ is a
	clique-colouring of a ($k, 1$)-polar graph, for a fixed $k \geq 1$.}
	\BlankLine
	\Begin
	{	
		\ForEach
		{
			maximal complete set $A^\prime \in A$
		}
		{
        	$answer \longleftarrow Algorithm~\ref{alg:A_i}(G, \pi, A^\prime)$\;
    	   	\If{$answer = no$}
    	   	{
				\Return{$no$\;}
    	   	}
		}
		\ForEach
		{
			maximal complete set $A^\prime \in A$
		}
		{
			\ForEach
			{
				$v \in A^\prime$
			}
			{
	    	   	\If{$|N_B(v)| = |B|$}
	    	   	{
					\Return{$yes$\;}
	    	  	}
    	  	}
	    }
		\eIf{$|\pi(B)| \geq 2$}
		{
			\Return{$yes$\;}
	    }
	    {
			\Return{$no$\;}	        	
		}
	}
	\label{alg:test2-clique-colouring}
	\end{algorithm}

Consider the {\sc NAE-SAT} problem, known to be
$\mathcal{NP}$-complete~\cite{MR521057}.

\begin{prob}
	{
		Not-all-equal satisfiability (NAE-SAT)
	}
	{
		A set $X$ of boolean variables and a collection $C$ of clauses (set of
		literals over $U$), each clause containing at most three different literals. 
	}
	{
		Is there a truth assignment for $X$ such that every clause contains
at least one $true$ and at least one $false$ literal? 
	}
\end{prob}

We first illustrate the framework of Kratochv\'il and Tuza~\cite{Kratochvil} to
argue that 2-clique-colouring is $\mathcal{NP}$-hard with a reduction from {\sc
NAE-SAT}, as follows. Consider an instance $\phi$ of NAE-SAT. We construct
a graph $G$, as follows. For every variable $x$, add an edge between vertices $x$
and $\overline{x}$. For every clause~$c$, add a triangle on vertices $\ell_c$
for all literals $\ell$ occurring in $c$. To finish the construction of $G$,
for every literal $\ell$ and for every clause $c$ containing $\ell$, add an
edge between $\ell$ and $\ell_c$. The (maximal) cliques of $G$ are the edges
$x\overline{x}$, $\ell \ell_c$, and triangles $\{\ell_c \mid \ell \in c\}$.
Hence, $G$ is 2-clique-colourable if, and only if, $\phi$ is not-all-equal
not-all-equal satisfiable. Refer to Fig.~\ref{fig:roadmap} for an example of
such construction, given a formula $\phi = (x_1 \vee \overline{x}_2 \vee y_2) \wedge
(x_1 \vee x_3 \vee \overline{y}_2) \wedge ( \overline{x}_1 \vee \overline{x}_2
\vee y_1)$.

\begin{figure}[t!]
\centering
	\subfloat
		[Graph constructed for a NAE-SAT instance $\phi = ( x_{1} \vee
	\overline{x_{2}} \vee x_{4}) \wedge ( x_{2} \vee \overline{x_{3}} \vee
	\overline{x_{5}} )\wedge ( x_{1} \vee x_{3} \vee x_{5})$] {
			\includegraphics[scale=0.3]{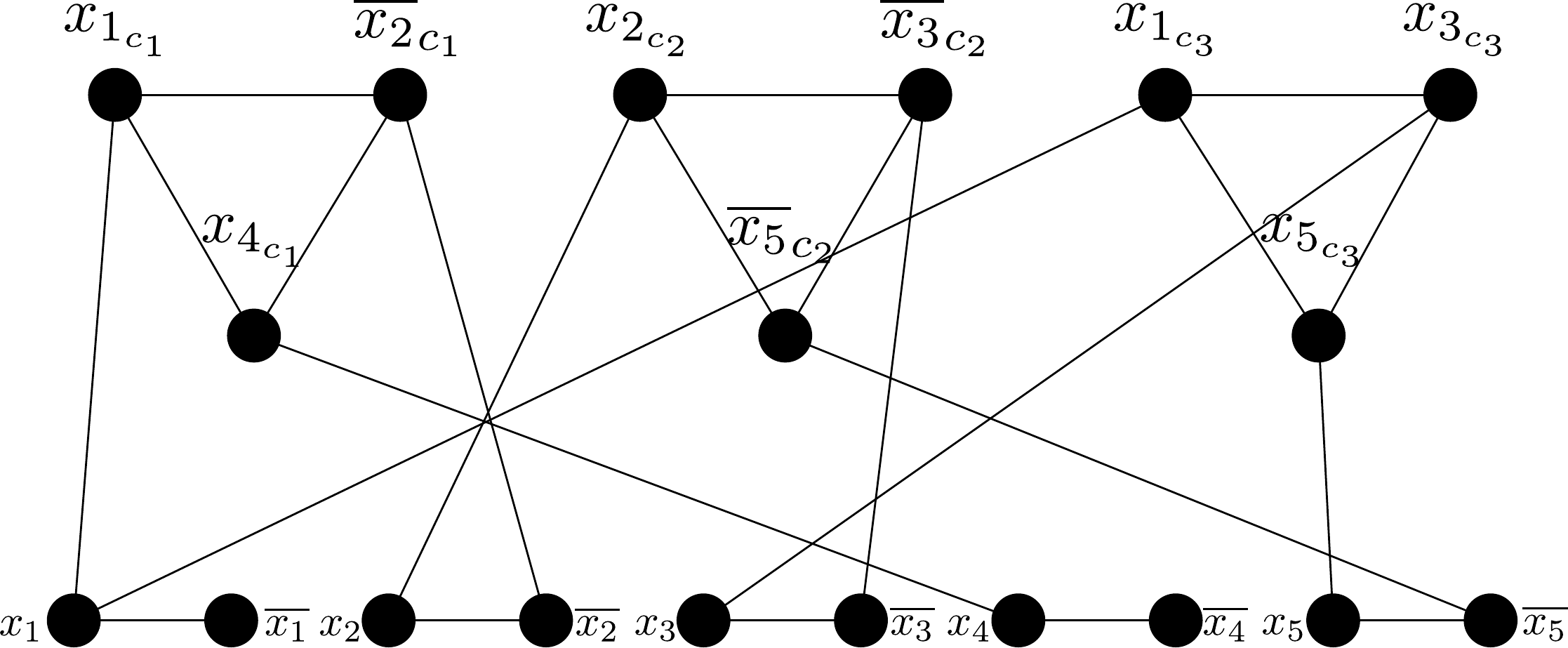}
			\label{fig:roadmap-uncoloured}
		}
	\qquad
	\subfloat
		[A satisfying truth assignment of $x_{1} = x_{2} = x_{3} = x_{4} = \overline{x_{5}} = T$
		implies a 2-clique-colouring of $G$, where $x_{1}, x_{2}, x_{3}, x_{4},
		\overline{x_{5}}$ receive blue colour] {
			\includegraphics[scale=0.3]{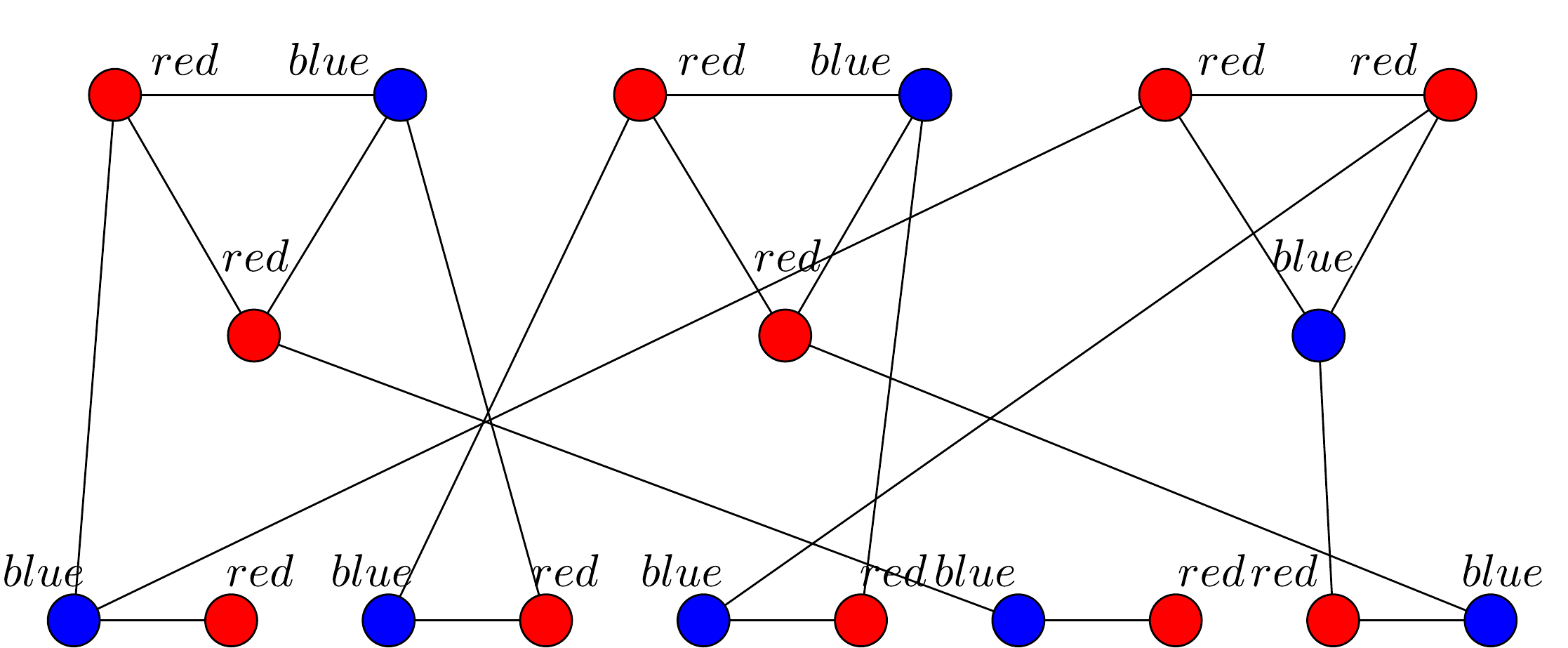}
			\label{fig:roadmap-coloured}
		}
	\caption{Example of a graph constructed following the framework given by
	Kratochv\'il and Tuza~\cite{Kratochvil} for a {\sc NAE-SAT} instance $\phi$}
	\label{fig:roadmap}
\end{figure}

We apply the ideas of the framework of Kratochv\'il and Tuza~\cite{Kratochvil}
to determine the complexity of 2-clique-colouring of (3, 1)-polar graphs.
We prove that 2-clique-colouring (3, 1)-polar graphs is $\mathcal{NP}$-complete
by reducing the {\sc NAE-SAT} problem to it. For a {\sc NAE-SAT} formula
$\phi$, a (3, 1)-polar graph $G$ is constructed such that graph~$G$ is
2-clique-colourable if, and only if, $\phi$ is not-all-equal satisfiable. 
This is an intermediary step to achieve the complexity of
2-clique-colouring of (2, 1)-polar graphs, which are a subclass of weakly
chordal graphs.

%


\begin{theorem}
\label{thm:2cc31polar}
The problem of 2-clique-colouring is $\mathcal{NP}$-complete for
(3,~1)-polar~graphs.
\end{theorem}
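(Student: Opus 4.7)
Our plan for Theorem~\ref{thm:2cc31polar} has two parts. Membership in $\mathcal{NP}$ is immediate from Theorem~\ref{thm:nptocheck}: since $k=3$ is fixed, any 2-colouring of $G$ serves as a polynomial-size certificate that can be checked in polynomial time. For $\mathcal{NP}$-hardness we reduce from {\sc NAE-SAT}, adapting the Kratochv\'il--Tuza framework described just above so that the resulting graph falls inside the $(3, 1)$-polar class. The key obstruction is that the variable edge $x\overline{x}$ of the original framework cannot sit inside a $(3, 1)$-polar graph with both endpoints in $B$ (since $G[B]$ is already complete, such an edge is never a maximal clique), so we have to simulate it by a suitable $A$-satellite.

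Given an {\sc NAE-SAT} formula $\phi$ with variable set $X$ and clauses $C$ (after a preprocessing that drops singleton clauses and clauses containing both a literal and its negation, so every remaining clause has two or three pairwise distinct literals), the plan is to build $G$ as follows. Put two vertices $x^+, x^-$ into $B$ for each $x \in X$, and let $G[B]$ be complete. For each variable $x$ add a size-two satellite $\{e_x, f_x\}$ in $A$, with $e_x$ adjacent in $B$ only to $x^+$ and $f_x$ adjacent in $B$ only to $x^-$. For each clause $c = \{\ell_1, \ldots, \ell_k\}$ with $k \in \{2, 3\}$, add a size-$k$ satellite $\{d_{1,c}, \ldots, d_{k,c}\}$ inducing a $K_k$ in $G[A]$, where $d_{i,c}$ is adjacent in $B$ only to $z_i$, with $z_i := x^+$ if $\ell_i = x$ and $z_i := x^-$ if $\ell_i = \overline{x}$. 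By construction $G[A]$ is a disjoint union of cliques of size at most three and $G[B]$ is complete, so $G$ is $(3, 1)$-polar.

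The next step is to enumerate the maximal cliques of $G$ and read off the polychromatic constraints they impose. The set $B$ itself is a maximal clique because no $A$-vertex is adjacent to more than one vertex of $B$. Inside each satellite a short case analysis shows that the variable gadget contributes the maximal cliques $\{e_x, f_x\}$, $\{e_x, x^+\}$ and $\{f_x, x^-\}$, which together force $c(x^+) \neq c(x^-)$ in any 2-clique-colouring, while the clause gadget contributes the satellite itself $\{d_{1,c}, \ldots, d_{k,c}\}$ together with the edges $\{d_{i,c}, z_i\}$, forcing $c(d_{i,c}) = \neg c(z_i)$ and forbidding all the $c(z_i)$ from coinciding. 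Reading $c(x^+)$ as the truth value of $x$, it follows that $G$ is 2-clique-colourable if and only if $\phi$ admits an NAE-satisfying assignment, both directions being a matter of extending a valuation to the auxiliary vertices and vice versa, in the same style as the proof of Theorem~\ref{thm:weaklychordal}.

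The main technical obstacle is verifying that the list of maximal cliques above is exhaustive. Because distinct satellites are non-adjacent in $G[A]$, every maximal clique of a $(3, 1)$-polar graph that meets $A$ has the form $S \cup \bigcap_{v \in S} N_B(v)$ for some $S$ contained in a \emph{single} satellite, so the per-satellite analysis is indeed complete; the remaining care is in verifying maximality in each subcase---for instance, that the clause triangle $\{d_{1,c}, d_{2,c}, d_{3,c}\}$ does not extend into $B$, which follows at once from the assumption that the literals $\ell_1, \ell_2, \ell_3$ map to pairwise distinct vertices of $B$.
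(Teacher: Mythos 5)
Your proposal is correct and follows essentially the same route as the paper's proof: membership in $\mathcal{NP}$ via Theorem~\ref{thm:nptocheck}, and a reduction from {\sc NAE-SAT} in which the two literal vertices of each variable lie in the complete part $B$, a two-vertex satellite (your $\{e_x,f_x\}$, the paper's $\{x^\prime_i,x^{\prime\prime}_i\}$) forces them to receive opposite colours, and each clause becomes a satellite of size at most three attached to $B$ by pendant $K_2$ cliques --- the only cosmetic difference being that the paper attaches each clause vertex to the vertex of the \emph{complementary} literal, which merely swaps the roles of the two colours. One small repair to your preprocessing: a singleton clause makes the {\sc NAE-SAT} instance trivially unsatisfiable, so such instances should be mapped to a fixed non-2-clique-colourable $(3,1)$-polar graph (or rejected outright) rather than having the clause dropped.
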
 
\begin{proof}
The problem of 2-clique-colouring a (3, 1)-polar graph is in $\mathcal{NP}$:
Theorem~\ref{thm:nptocheck} confirms that to check whether a colouring of a ($3,
1$)-polar graph is a 2-clique-colouring is in $\mathcal{P}$.

We prove that 2-clique-colouring (3, 1)-polar graphs is $\mathcal{NP}$-hard by
reducing {\sc NAE-SAT} to it.
The outline of the proof follows. For every formula $\phi$, a graph $G$
is constructed such that $\phi$ is not-all-equal satisfiable if, and only if,
graph $G$ is 2-clique-colourable. We finish the proof showing that~$G$ is (3,
1)-polar. Let $n$ (resp. $m$) be the number of variables (resp. clauses) in
formula $\phi$. We define graph $G$, as follows.

\begin{itemize}
	\item for each variable $x_i$, $1 \leq i \leq n$, we create four vertices
	$x_i$, $x^\prime_i$, $x^{\prime\prime}_i$, and $\overline{x}_i$ with edges $x_i
	x^\prime_i$, $x^\prime_i x^{\prime\prime}_i$, and $x^{\prime\prime}_i
	\overline{x}_i$. Notice that vertices $x_i$ and $\overline{x}_i$ correspond to the
	literals of variable~$x_i$. Moreover, we create edges so that the set
	$\{x_1, \overline{x}_1, \ldots, x_n, \overline{x}_n\}$ induces a complete subgraph
	of~$G$;
	\item for each clause $c_j = (l_a, l_b, l_c)$, $1 \leq j \leq m$, we create a
	triangle $c_j$ with three vertices $l_{a_{c_j}}$, $l_{b_{c_j}}$, and
	$l_{c_{c_j}}$. Notice that vertices $l_{a_{c_j}}$, $l_{b_{c_j}}$, and
	$l_{c_{c_j}}$ correspond to the literals of clause $c_j$. Moreover, each
	vertex $l \in \{l_{a_{c_j}}, l_{b_{c_j}}, l_{c_{c_j}}\}$ is adjacent to a
	vertex $\overline{l}$ in $\{x_1, \overline{x}_1, \ldots , x_n,
	\overline{x}_n\}$ if, and only if, the literals correspondent to $l$ and
	$\overline{l}$ are distinct literals of the same variable.
\end{itemize}

Refer to Fig.~\ref{fig:reduction-np} for an example of such construction, given
a formula $\phi = (x_{1} \vee \overline{x_{2}} \vee x_{4}) \wedge ( x_{2} \vee
\overline{x_{3}} \vee \overline{x_{5}} )\wedge ( x_{1} \vee x_{3} \vee x_{5})$.

\begin{figure}[t!]
\centering
	\subfloat
		[Graph constructed for a NAE-SAT instance $\phi = ( x_{1} \vee
	\overline{x_{2}} \vee x_{4}) \wedge ( x_{2} \vee \overline{x_{3}} \vee
	\overline{x_{5}} )\wedge ( x_{1} \vee x_{3} \vee x_{5})$] {
			\includegraphics[scale=0.28]{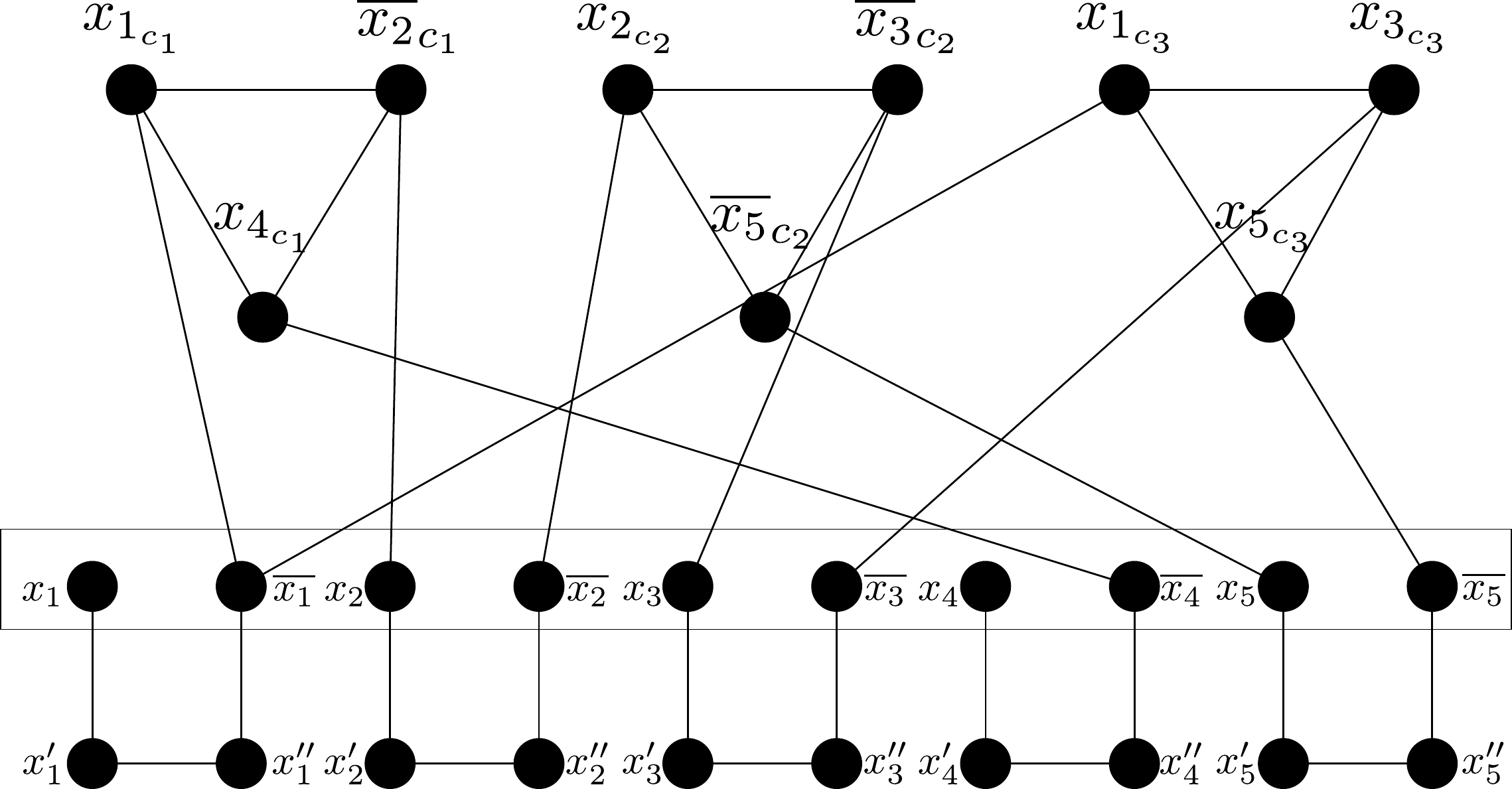}
			\label{fig:reduction-np-uncoloured}
		}
	\qquad 
	\subfloat
		[A satisfying truth assignment of $x_{1} = x_{2} = x_{3} = x_{4} = \overline{x_{5}} = T$
		implies a 2-clique-colouring of $G$, where $x_{1}, x_{2}, x_{3}, x_{4},
		\overline{x_{5}}$ receive blue colour] {
			\includegraphics[scale=0.28]{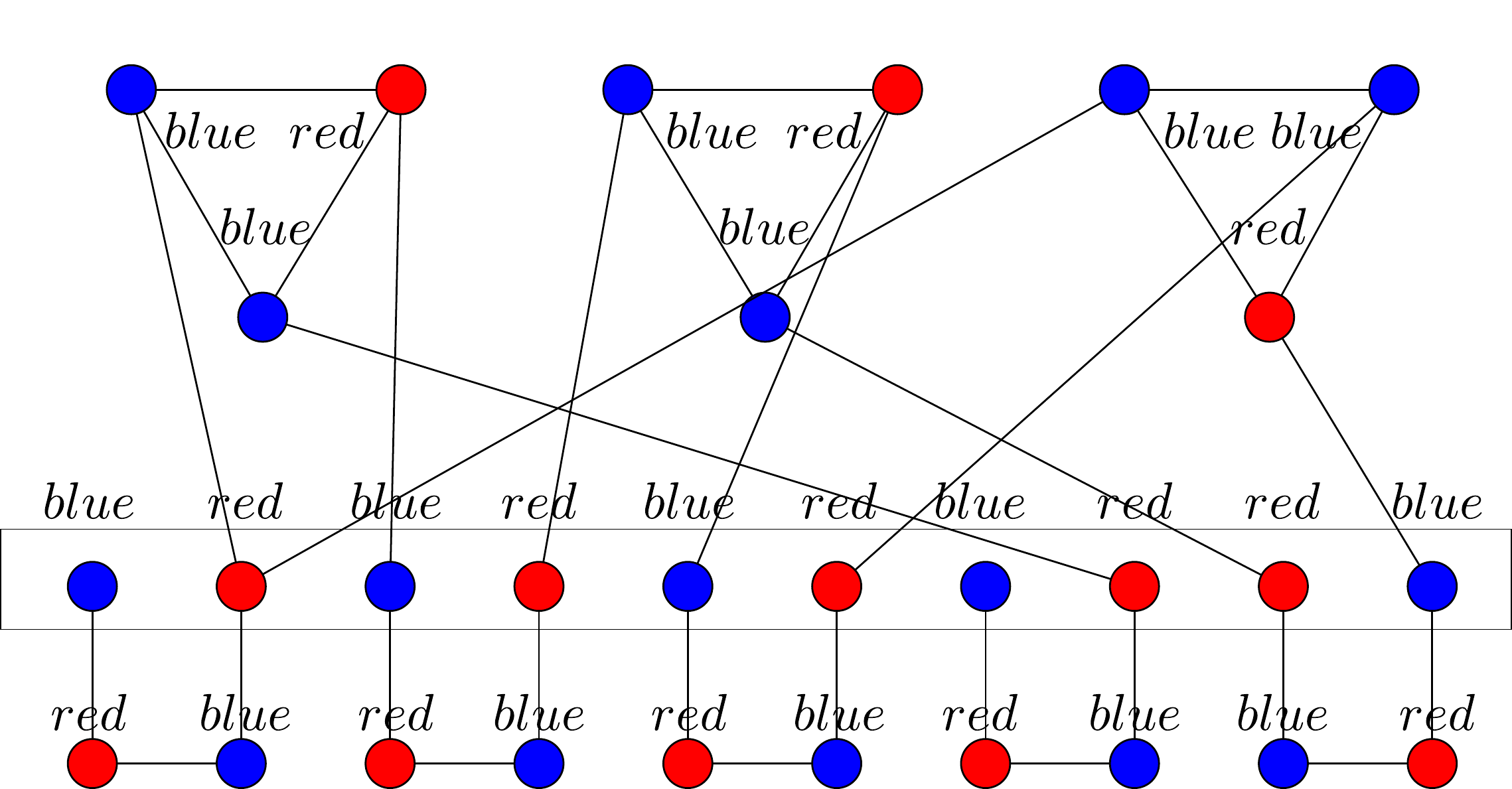}
			\label{fig:reduction-np-coloured}
		}
	\caption{Example of a (3, 1)-polar graph constructed for a NAE-SAT instance}
	\label{fig:reduction-np}
\end{figure}

We claim that there exists a 2-clique-colouring in~$G$ if, and only if, formula
$\phi$ is satisfiable. Assume that there exists a valuation $v_\phi$ such that $\phi$ is
not-all-equal satisfied. We give a colouring to graph~$G$, as follows. 
\begin{itemize}
	\item assign colour 1 to $l \in \{x_1, \overline{x}_1, \ldots, x_n,
	\overline{x}_n\}$ if it corresponds to the literal which receives the $true$
	value in $v_\phi$, otherwise we assign colour 2 to it,
	\item extend the unique 2-clique-colouring to vertices $x^\prime_i$ and
	$x^{\prime\prime}_i$, for each $1 \leq i \leq n$, i.e. assign
	colour 2 to $x^\prime_i$ and colour 1 to $x^{\prime\prime}_i$ if the
	corresponding literal of $x_i$ is $true$ in $v_\phi$. Otherwise, we assign
	colour 2 to $x^\prime_i$ and colour 1 to $x^{\prime\prime}_i$,
	\item extend the unique 2-clique-colouring to vertices $l_{a_{c_j}}$,
	$l_{b_{c_j}}$, and $l_{c_{c_j}}$, for each triangle $c_j = \{l_{a_{c_j}},
	l_{b_{c_j}}, l_{c_{c_j}}\}$, $1 \leq j \leq m$, i.e. assign colour 1 to $l \in
	\{l_{a_{c_j}}, l_{b_{c_j}}, l_{c_{c_j}}\}$ if the corresponding opposite
	literal of the same variable is $false$ in $v_\phi$, or else we assign colour 2 to $l$.
\end{itemize}

It still remains to be proved that this is indeed a 2-clique-colouring.

The cliques of size 2 are $x_i x^\prime_i$, $x^\prime_i x^{\prime\prime}_i$, and
$x^{\prime\prime}_i \overline{x}_i$ and $l\overline{l}$, where $l \in c_j$ and
$\overline{l}$ correspond to distinct literals of the same variable. The above
colouring gives distinct colours to each vertex of a clique of size 2.

The cliques of size at least 3 are $\{x_1, \overline{x}_1, \ldots, x_n,
\overline{x}_n\}$ and triangle $c_j$. The former is
polychromatic, since two vertices which represent distinct literals of the same
variable receive distinct colours. The latter is also polychromatic, since
$(i)$ every vertex of triangle $c_j$ represents a literal of some variable in
$\phi$, $(ii)$ each clause $c_j$ has at least one literal that receives the
$true$ value and at least one literal that receives the $false$ value (recall
we are reducing from {\sc NAE-SAT} problem), and $(iii)$ a vertex in
triangle $c_j$ receives colour 1 if it corresponds to the literal in which the
opposite literal of the same variable receives the $false$ value in $v_\phi$.
Otherwise, we assign colour 2 to it.

For the converse, we now assume that $G$ is 2-clique-colourable and we consider
any 2-clique-colouring. Recall the vertices $x_i$ and $\overline{x}_i$
have distinct colours, since $x_i x^\prime_i$, $x^\prime_i x^{\prime\prime}_i$,
and $x^{\prime\prime}_i \overline{x}_i$ are cliques. Hence, we define $v_\phi$
as follows. The literal $x_i$ is assigned $true$ in $v_\phi$ if the
corresponding vertex has colour 1 in the clique-colouring, otherwise it is
assigned $false$. Since we are considering a 2-clique-colouring, every triangle
$c_j$ is polychromatic. As a consequence, there exists at
least one literal with $true$ value in $c_j$ and at least one literal with
$false$ value in every clause $c_j$. This proves that $\phi$ is satisfied for
valuation $v_\phi$.

It now remains to be proved that $G$ is a (3, 1)-polar graph.
Let $A = \displaystyle \left(\bigcup_{i = 1}^n \{x^\prime_i,
x^{\prime\prime}_i\}\right) \cup \left(\bigcup_{j = 1}^m V(c_j)\right)$ and $B
= \{x_1, \overline{x}_1, \ldots, x_n, \overline{x}_n\}$ be the partition of
$V(G)$ into two sets. Notice that each satellite is either a triangle or an
edge. Hence, $G$ is a (3, 1)-polar graph.
\end{proof}

An additional requirement to the {\sc NAE-SAT} problem is that all variables
must be positive (no negated variables). This defines the known variant {\sc
Positive NAE-SAT}~\cite[Chapter 7]{Moret}. The variant {\sc Positive NAE-SAT}
is $\mathcal{NP}$-complete and the proof is by reduction from {\sc NAE-SAT}. 
Replace every negated variable $\overline{l_i}$ by a fresh
variable $l_j$, and add a new clause $(l_i, l_j)$, to enforce the complement
relationship. Notice that the new clause has only two literals. Hence,
all that is needed here is (i) duplicate a clause with a negated variable, say
$\overline{l_i}$, (ii) replace negated variable $\overline{l_i}$ by fresh
variables $l_j$ and $l_k$, respectively, and (iii) add a new clause $(l_i, l_j,
l_k)$, to enforce the complement relationship. An alternative proof
that 2-clique-colouring is $\mathcal{NP}$-complete for (3, 1)-polar graphs
can be obtained by a reduction from {\sc Positive NAE-SAT}. The proof is
analogous, but the constructed graph is as simple as the input (in constrast to
{\sc NAE-SAT} input).

We use a reduction from 2-clique-colouring of (3, 1)-polar graphs to determine
the complexity of 2-clique-colouring of (2, 1)-polar graphs.
In what follows, we provide some notation to classify the structure of
2-clique-colouring of (2, 1)-polar graphs and of (3, 1)-polar graphs. We
capture their similarities and make it feasible a reduction from
2-clique-colouring ($3, 1$)-polar graphs to 2-clique-colouring (2, 1)-polar graphs.

Let $G=(V, E)$ be a (3, 1)-polar graph. Let $K$ be a
satellite of $G$. Consider the following four cases: ($\ka$) there
exists a vertex of $K$ such that none of its neighbors is in partition $B$;
($\kb$) the complementary case of $\ka$, where there exists a pair of vertices of $K$,
such that the closed neighborhood of one vertex of the pair is contained in the
closed neighborhood of the other vertex of the pair; ($\kc$) the complementary
case of $\kb$, where the intersection of the closed neighborhood of the
vertices of $K$ is precisely~$K$; and ($\kd$) the complementary case of $\kc$.
Clearly, any satellite $K$ is either in case $\ka$, $\kb$,
$\kc$, or $\kd$. Refer to Fig.~\ref{fig:3k} (resp. Fig.~\ref{fig:2k}) for an
example of each case of a triangle (resp. edge) satellite.

\begin{figure}[t!]
\centering
	\subfloat
		[Case $\ka$] {
			\includegraphics[scale=0.23]{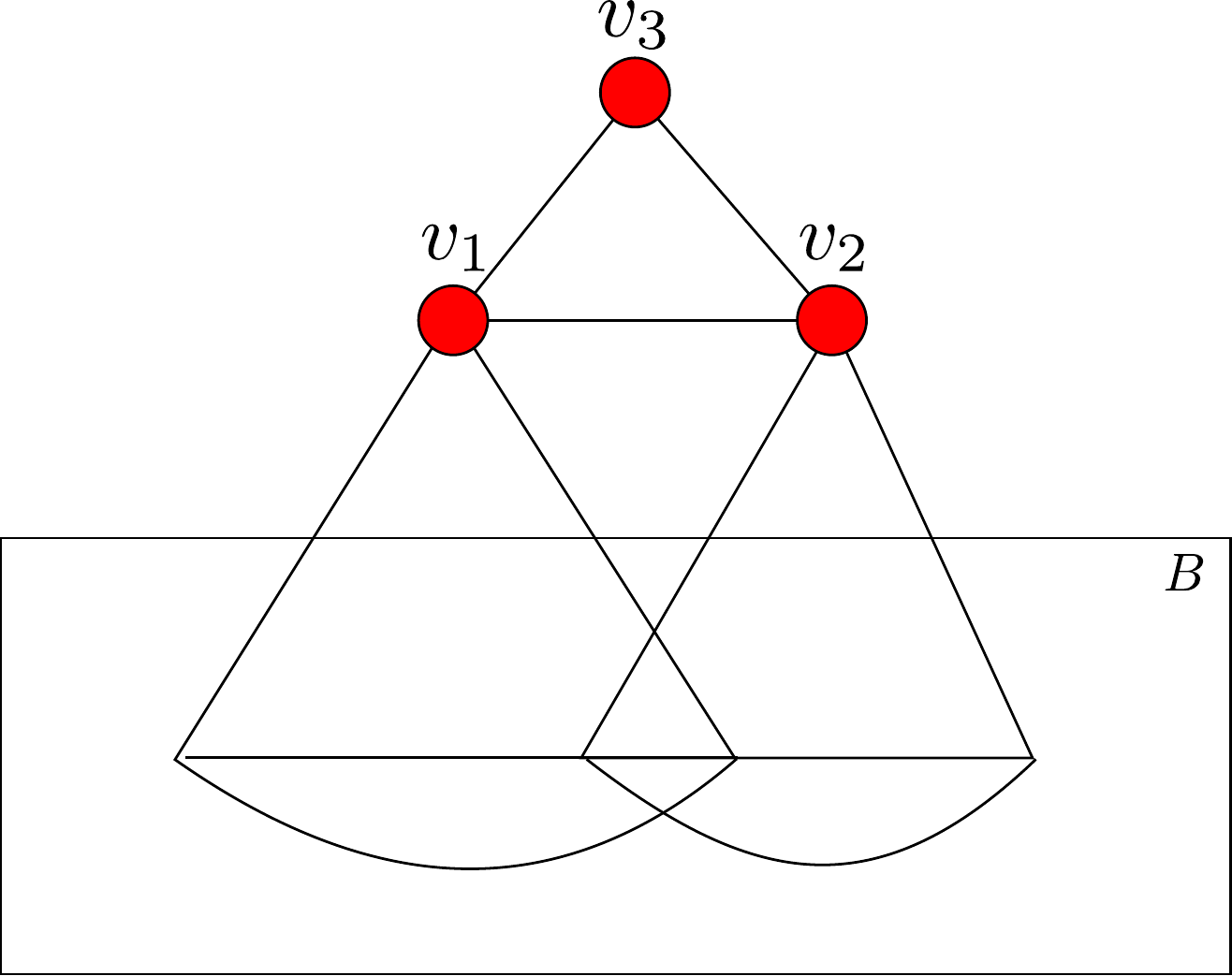}
			\label{fig:3ka}
		}
	\subfloat
		[Case $\kb$] {
			\includegraphics[scale=0.23]{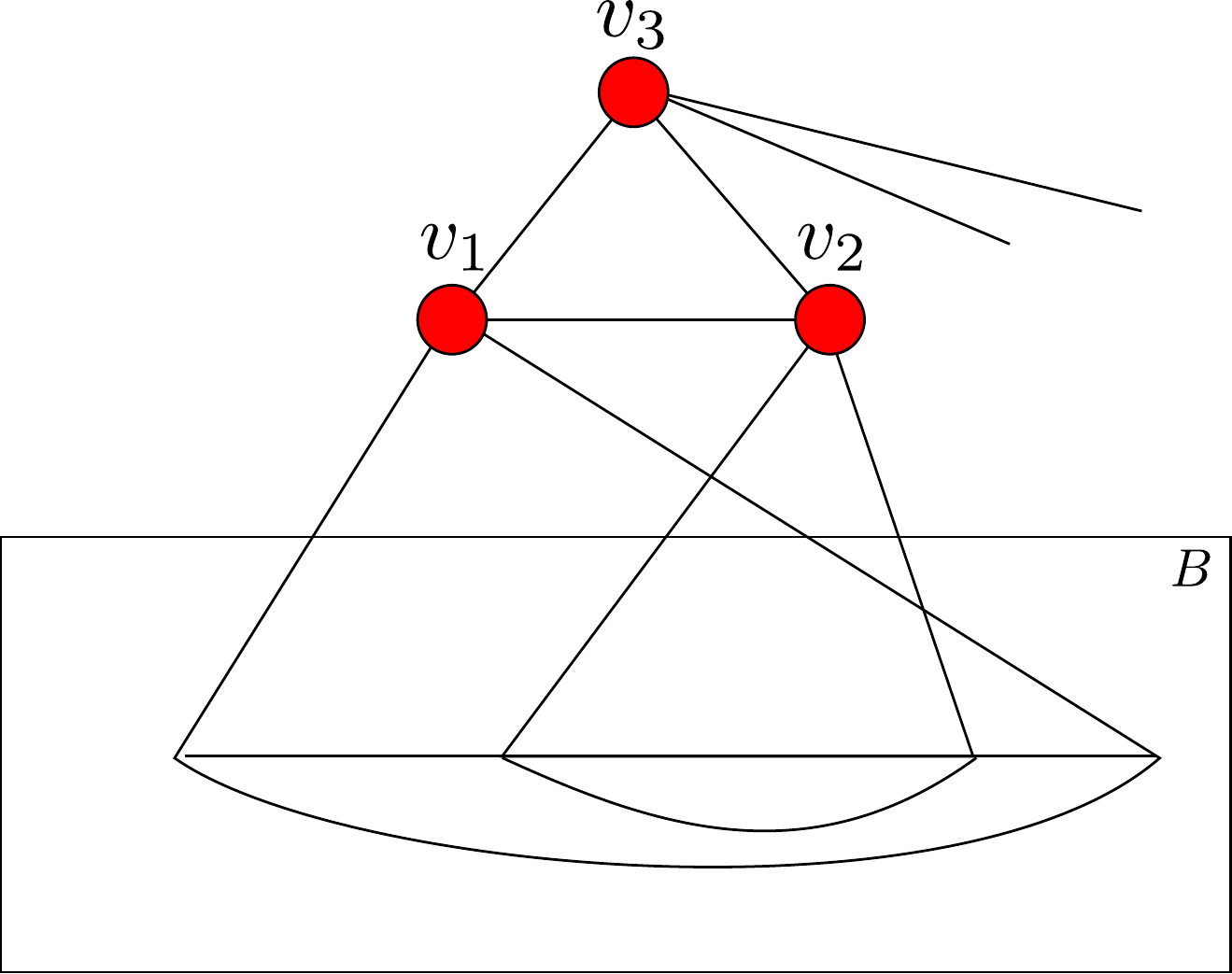}
			\label{fig:3kb}
		}
	\subfloat
		[Case $\kc$] {
			\includegraphics[scale=0.23]{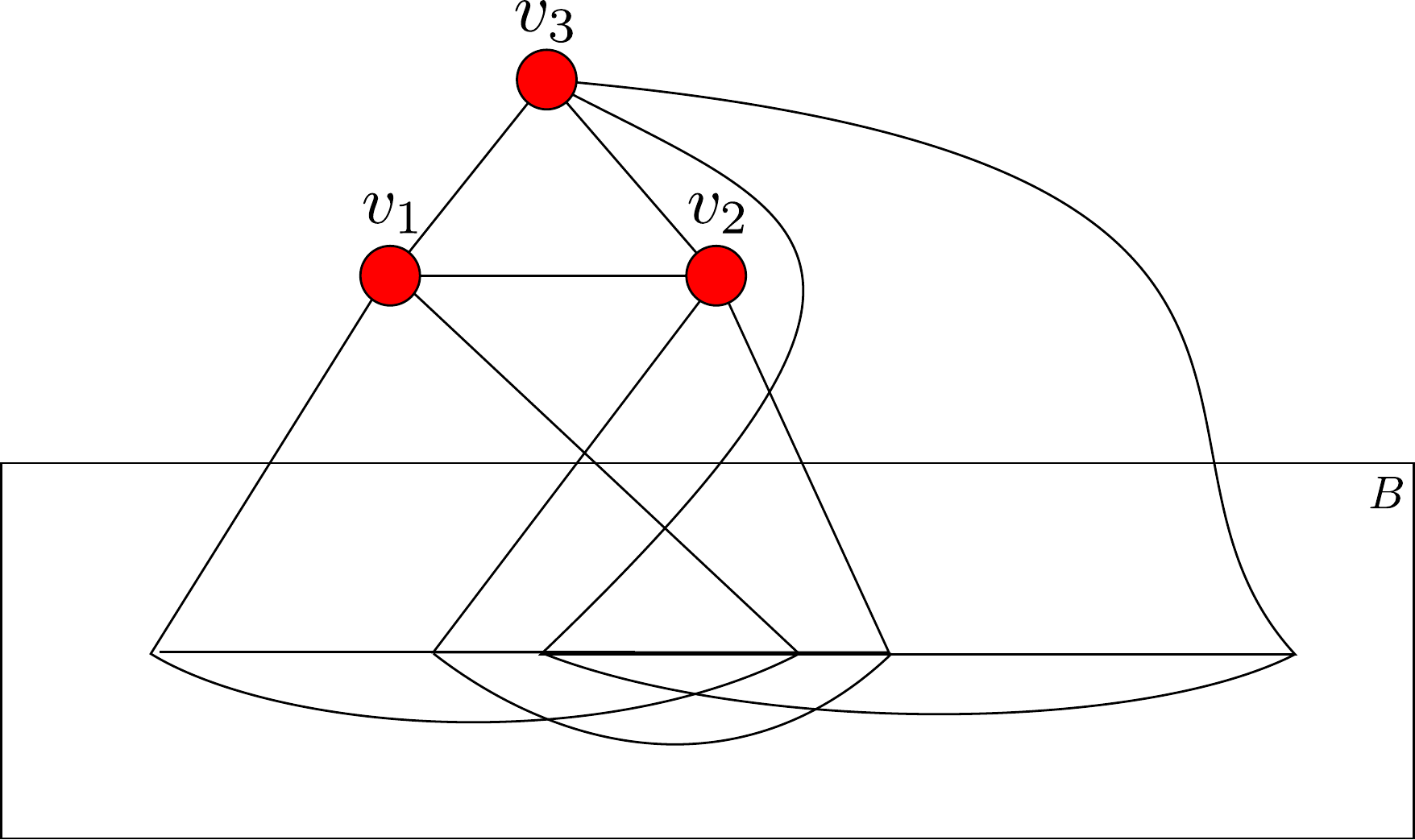}
			\label{fig:3kc}
		}
	\subfloat
		[Case $\kd$] {
			\includegraphics[scale=0.23]{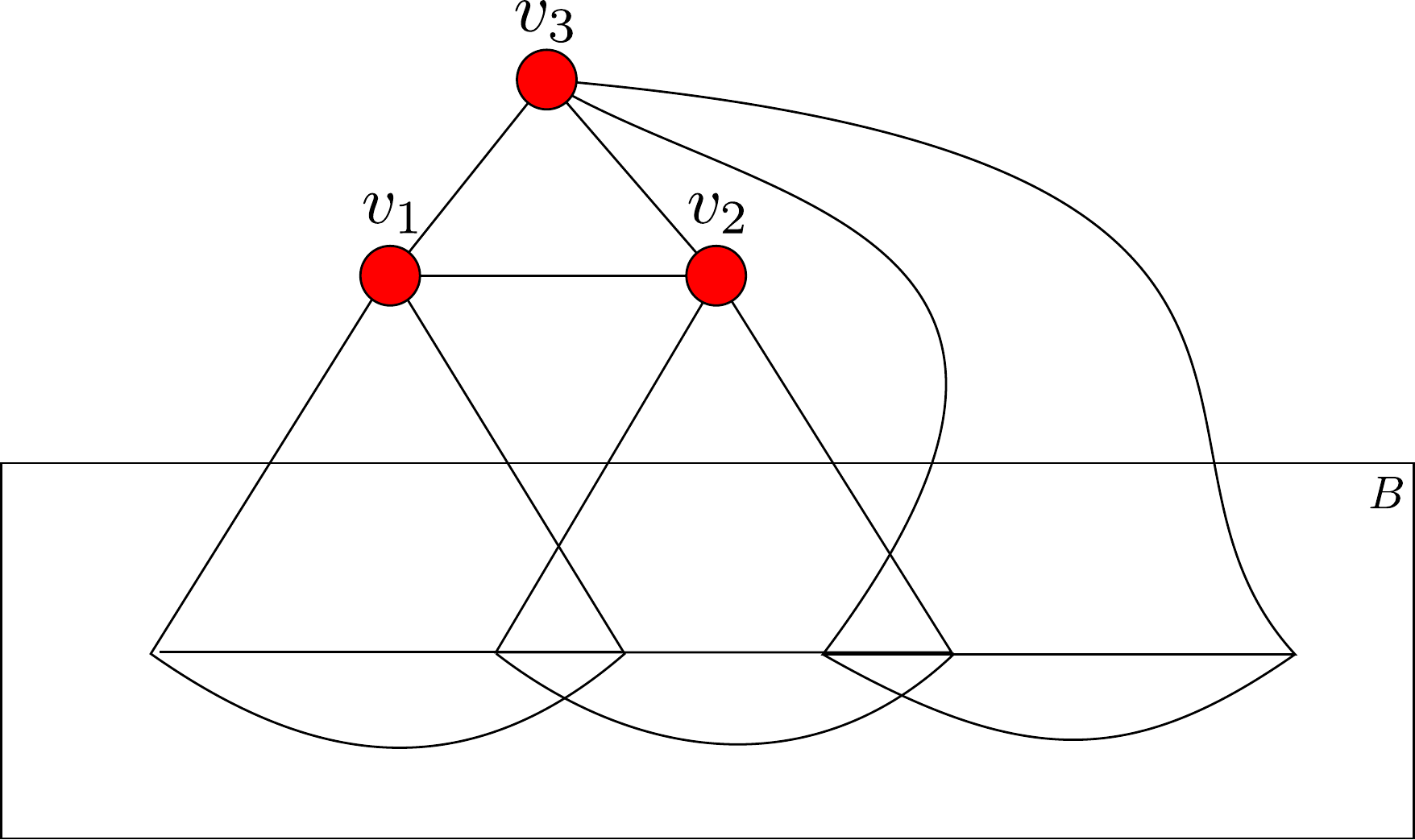}
			\label{fig:3kd}
		}
	\caption{A triangle satellite of an ($\alpha, \beta$)-polar graph}
	\label{fig:3k}
\end{figure}

\begin{figure}[t!]
\centering
	\subfloat
		[Case $\ka$] {
			\includegraphics[scale=0.23]{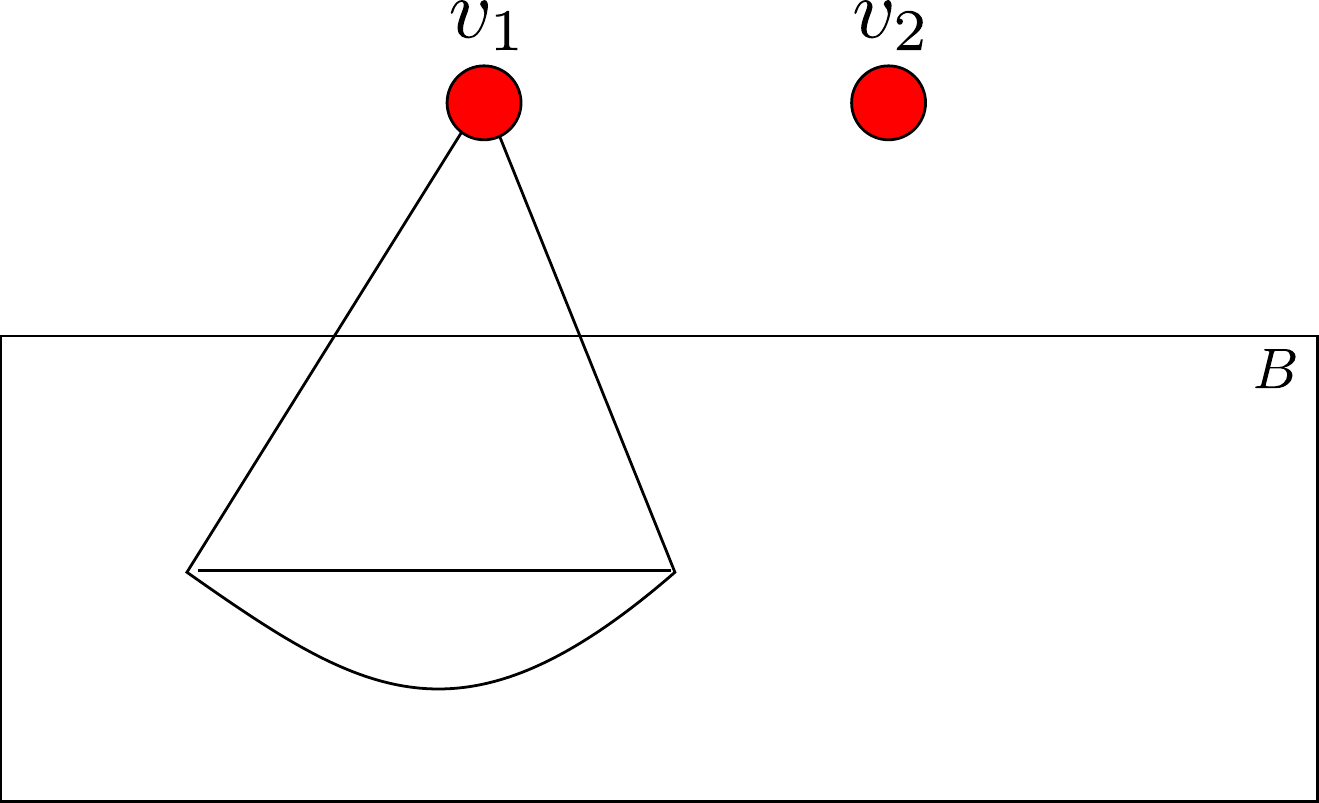}
			\label{fig:2ka}
		}
	\qquad 
	\subfloat
		[Case $\kb$] {
			\includegraphics[scale=0.23]{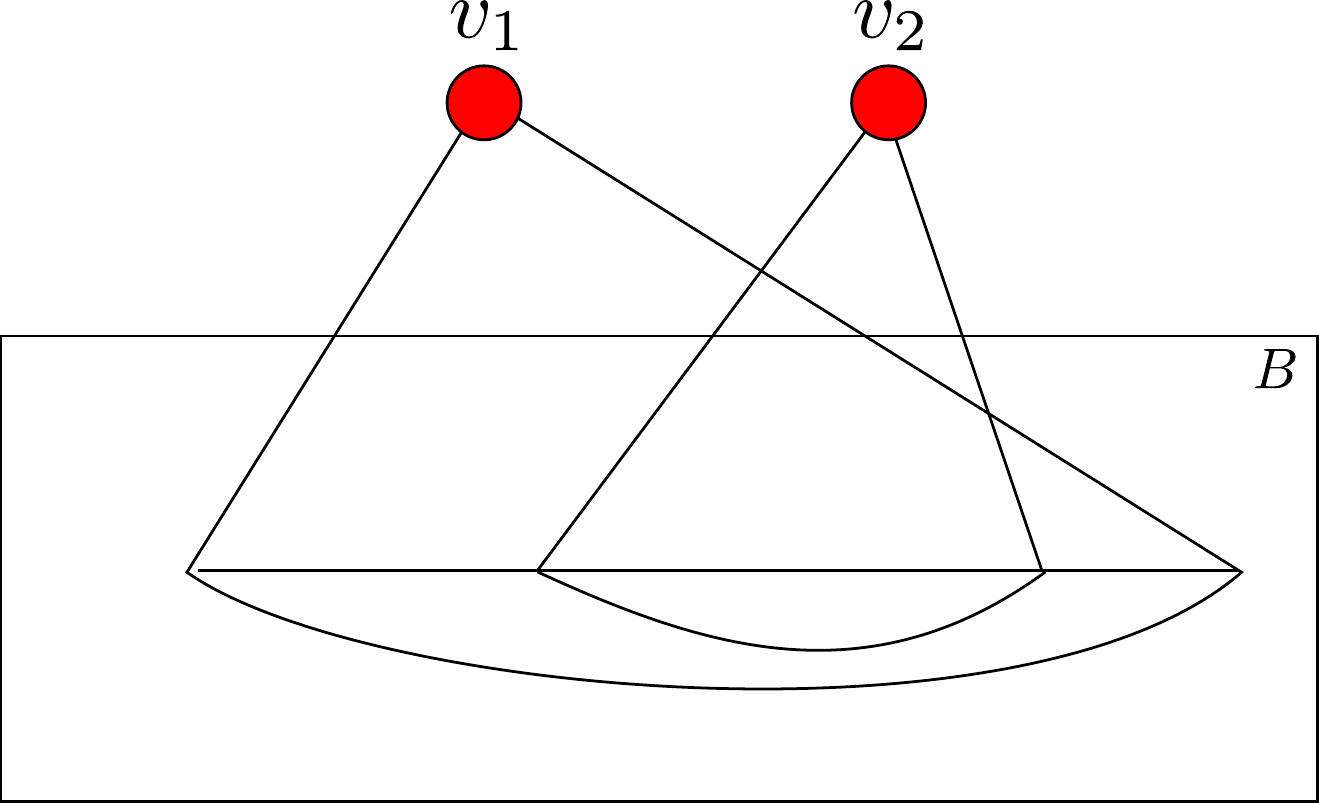}
			\label{fig:2kb}
		}
	\qquad 
	\subfloat
		[Case $\kc$] {
			\includegraphics[scale=0.23]{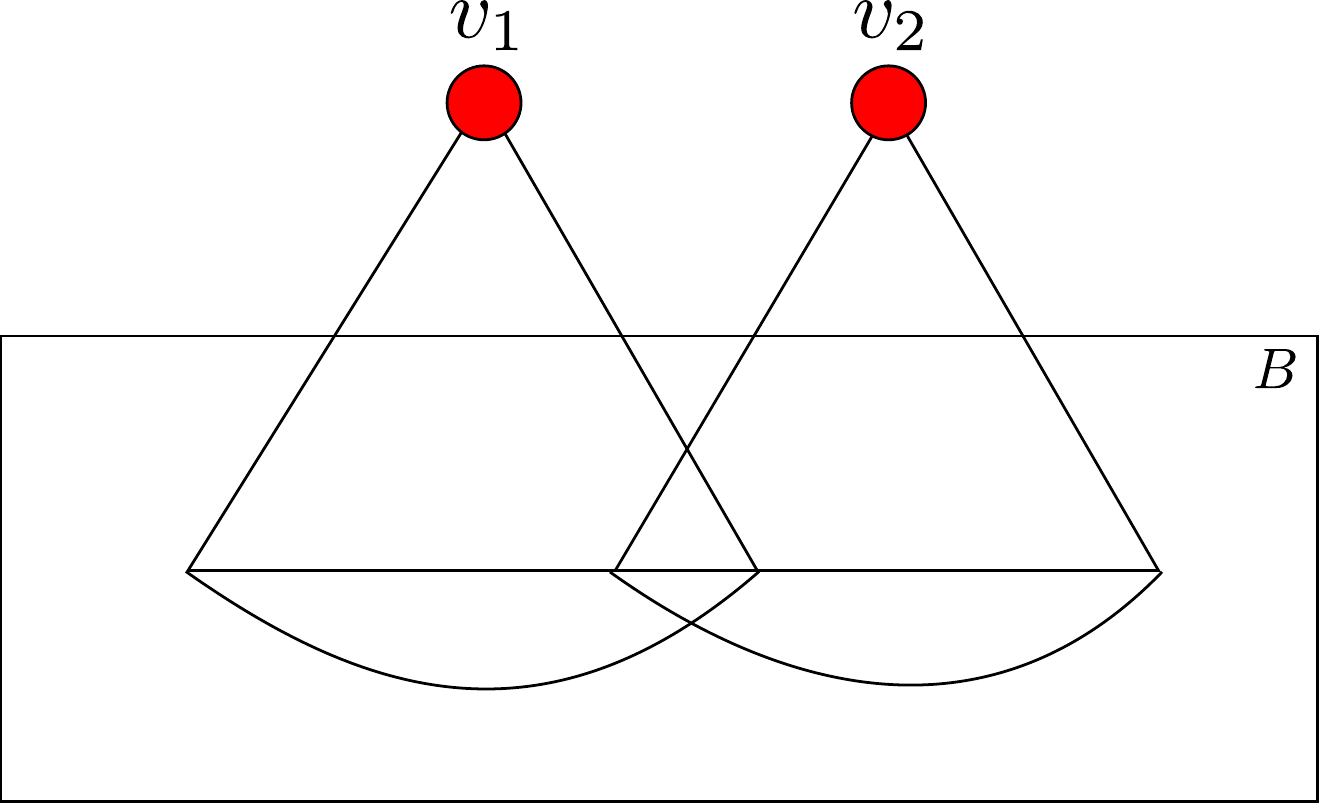}
			\label{fig:2kc}
		}
	\qquad 
	\subfloat
		[Case $\kd$] {
			\includegraphics[scale=0.23]{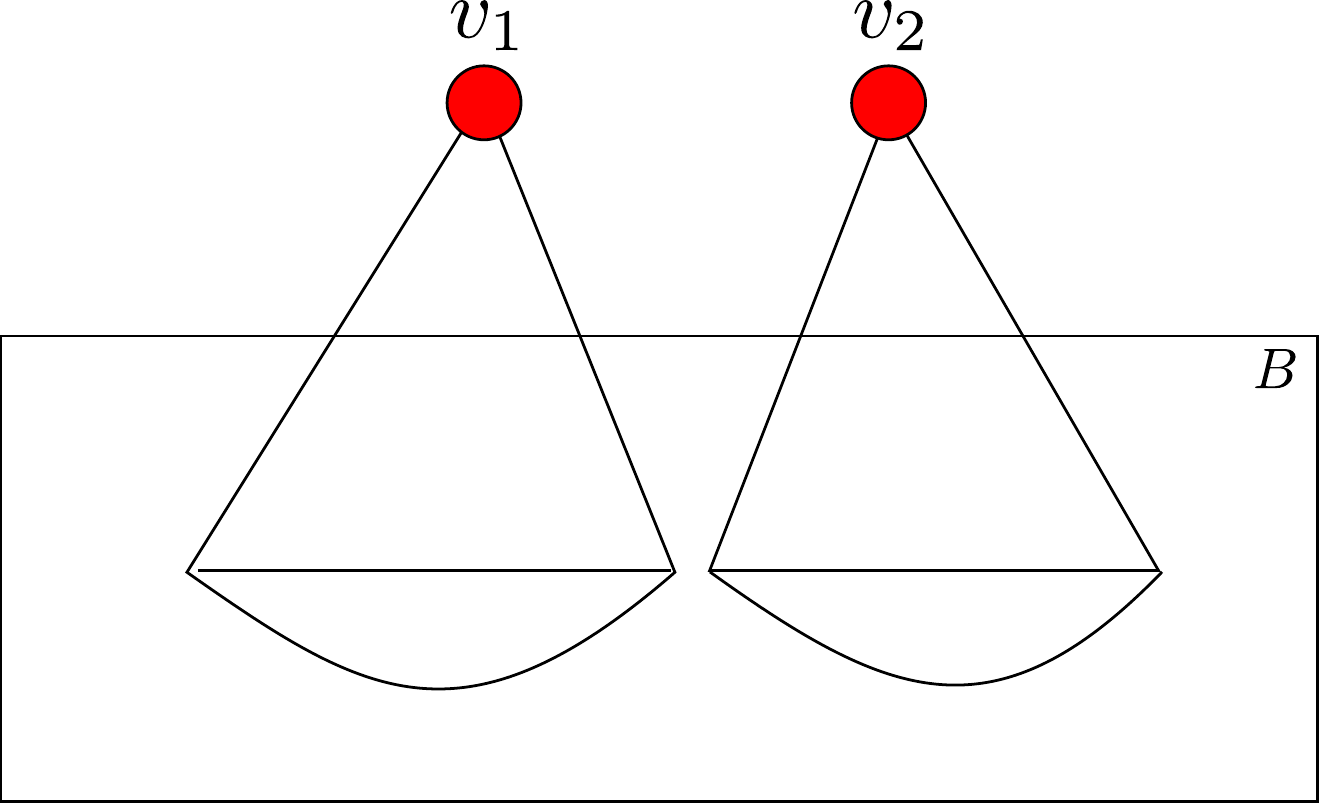}
			\label{fig:2kd}
		}
	\caption{An edge satellite of an ($\alpha, \beta$)-polar graph}
	\label{fig:2k}
\end{figure}

The following lemma is an important step to understand the role of triangles and
edges that are either in case $\ka$, $\kb$, $\kc$, or $\kd$ in a
2-clique-colouring of (3, 1)-polar and of (2, 1)-polar graphs. The following
lemma is also important towards the modification of a (3, 1)-polar graph to
obtain a (2, 1)-polar graph, which is closely related to
Theorem~\ref{thm:2cc21polar}.

\begin{lemma}
\label{lem:31polargraphs}
Let $G=(V, E)$ be a (3, 1)-polar graph, $\mathcal{K}$ be the set of satellites
of $G$ in case $\kd$, and $K \in \mathcal{K}$.
\begin{itemize}
  \item If $G$ has a 2-clique-colouring, then $\displaystyle\bigcup_{v \in K}
  N_B(v)$ is polychromatic.
  \item If $B$ has a 2-colouring that, for every $K^\prime \in
  \mathcal{K}$, $\displaystyle\bigcup_{v \in K^\prime} N_B(v)$ is polychromatic,
  then $G$ is 2-clique-colourable.
\end{itemize}
\end{lemma}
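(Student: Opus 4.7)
The plan is to prove both directions by first cataloguing the maximal cliques of $G$ that touch a satellite $K$ in case $\kd$, and then translating the 2-clique-colouring condition on $G$ into a polychromaticity condition on $\bigcup_{v \in K} N_B(v)$. Case $\kd$ guarantees two structural facts: $T := \bigcap_{v \in K} N_B(v) \neq \emptyset$, and no closed-neighbourhood containment occurs between vertices of $K$. From these I would deduce that $K \cup T$ is a maximal clique and that, for every $v \in K$, the set $\{v\} \cup N_B(v)$ is a maximal clique. Intermediate sets $S \cup \bigcap_{u \in S} N_B(u)$ with $\emptyset \subsetneq S \subsetneq K$ may also be maximal cliques, depending on the precise non-neighbourhood structure among the vertices of $K$.

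For the necessity direction, I would proceed by contradiction. Suppose $G$ admits a 2-clique-colouring $\pi$ under which $\bigcup_{v \in K} N_B(v)$ is monochromatic; without loss of generality, every vertex in the union gets colour $1$. Then each $N_B(v)$ is monochromatic of colour $1$, and polychromaticity of the maximal clique $\{v\} \cup N_B(v)$ forces $\pi(v) = 2$ for every $v \in K$. A case analysis on the remaining maximal cliques then produces the contradiction: either $B$ itself is maximal and has to be polychromatic (impossible once $\bigcup N_B(v)$ is all colour $1$ and one exhibits $B \subseteq \bigcup N_B(v)$ in the relevant configurations), or one of the intermediate cliques $S \cup \bigcap_{u \in S} N_B(u)$ now contains only colour-$2$ vertices (on the $S$-side) or only colour-$1$ vertices (on the $B$-side) and is monochromatic.

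For the sufficiency direction, I would start from a 2-colouring $\pi_B$ of $B$ making $\bigcup_{v \in K'} N_B(v)$ polychromatic for every $K' \in \mathcal{K}$ and extend $\pi_B$ to $A$ satellite by satellite. Satellites outside $\mathcal{K}$ (cases $\ka$, $\kb$, $\kc$) have enough structural slack to be coloured by a direct rule: in $\ka$ one of the vertices of the satellite has no $B$-neighbour and can absorb either colour; in $\kb$ a closed-neighbourhood containment determines the colour of one vertex from that of another; in $\kc$ the satellite itself is a maximal clique and is coloured alternately. For $K' \in \mathcal{K}$, I would use the hypothesised pair of distinct-coloured vertices in $\bigcup_{v \in K'} N_B(v)$ to assign each $v \in K'$ the colour opposite to the unique colour appearing in $N_B(v)$ (or either colour, if $N_B(v)$ is already polychromatic), so that $\{v\} \cup N_B(v)$, $K' \cup T$, and every intermediate clique end up polychromatic.

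The delicate part is handling the intermediate cliques $S \cup \bigcap_{u \in S} N_B(u)$ for proper $S \subsetneq K$, particularly when $|K| = 3$: there are up to six such subsets, and whether each yields a maximal clique depends on fine non-containment conditions among the $N_B(u)$. The main obstacle is to show that the polychromaticity of $\bigcup_{v \in K} N_B(v)$ is strong enough to render all of these cliques polychromatic simultaneously; I expect this to follow from a careful case split on how the two colours distribute across $\bigcup N_B(v)$, exploiting the absence of closed-neighbourhood containment guaranteed by case $\kd$.
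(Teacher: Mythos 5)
Your reading of case $\kd$ is the wrong way around, and the necessity direction collapses because of it. You take $\kd$ to guarantee $T=\bigcap_{v\in K}N_B(v)\neq\emptyset$, but the lemma (as the paper uses and proves it, and as its other uses confirm -- the hypergraph reduction builds $\kd$ satellites whose two vertices have \emph{disjoint} $B$-neighbourhoods, and Theorem~\ref{thm:2cc21polarcliques} concludes case $\kc$ precisely when a common $B$-neighbour exists) requires $\kd$ to be the case $\bigcap_{v\in K}N_B(v)=\emptyset$, i.e.\ $K$ itself is a maximal clique; the paper's wording of $\kc$/$\kd$ is admittedly confusing, but under your reading the first bullet is simply false. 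Counterexample: $B=\{b_1,b_2,b_3,b_4\}$ complete, $K=\{u,v\}$ an edge with $N_B(u)=\{b_1,b_2\}$, $N_B(v)=\{b_2,b_3\}$ (no containment, common neighbour $b_2$, so ``$\kd$'' in your sense); colouring $b_1,b_2,b_3\mapsto 1$, $b_4,u,v\mapsto 2$ is a 2-clique-colouring (the maximal cliques $\{u,b_1,b_2\}$, $\{v,b_2,b_3\}$, $\{u,v,b_2\}$, $B$ are all polychromatic) although $\bigcup_{w\in K}N_B(w)$ is monochromatic. This is exactly why your sketched contradiction never materialises: once every $v\in K$ is forced to colour 2, the cliques $K\cup T$ and $S\cup\bigcap_{u\in S}N_B(u)$ are automatically polychromatic (colour 2 on the $A$-side, colour 1 on the $B$-side), and $B\subseteq\bigcup_{v\in K}N_B(v)$ holds only in special configurations. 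The paper's argument is short precisely because in the correct case $\kd$ the satellite $K$ is maximal: if the union were monochromatic of colour 1, each maximal clique $\{v\}\cup N_B(v)$ (maximal since $\kb$ is excluded) forces colour 2 on every $v\in K$, and then the maximal clique $K$ is monochromatic -- contradiction. That single observation, which your proposal never makes, is the missing idea.

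The sufficiency direction has a related soft spot even after the definition is fixed: since $K'\in\mathcal{K}$ is itself a maximal clique, your rule ``give $v$ the colour opposite to the unique colour of $N_B(v)$, or either colour if $N_B(v)$ is polychromatic'' can assign all of $K'$ the same colour when several $N_B(v)$ are polychromatic, leaving $K'$ monochromatic. The paper's extension handles this by coordinating colours inside each satellite (when some $N_B(u)$ is polychromatic, the remaining vertices are coloured by the opposite-colour rule and $u$'s colour is then chosen to make $K'$ polychromatic), and it likewise spells out the extensions for cases $\ka$, $\kb$, $\kc$, which your outline treats correctly in spirit. So the overall plan (catalogue the maximal cliques meeting a satellite, then extend a colouring of $B$ satellite by satellite) matches the paper's, but the characterisation of $\kd$ and the resulting contradiction in bullet one must be corrected, and the colouring rule on $\kd$-satellites must be tightened to guarantee $K'$ receives both colours.
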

\begin{proof}
We begin by proving the former assertion. For the sake of
contradiction, suppose $\displaystyle \bigcup_{v \in K} N_B(v)$ is
monochromatic for some 2-clique-colouring of $G$. Without loss of generality,
suppose that all vertices in $\displaystyle \bigcup_{v \in K} N_B(v)$ have
colour 1. Hence, every vertex $v \in K$ has colour 2, otherwise $\{v\} \cup
N_B(v)$ is a monochromatic clique. On the other hand, if every vertex $v \in K$
has the same colour, then $K$ is a monochromatic clique, which is a
contradiction.

Now, we prove the latter assertion. For each singleton satellite $K$, we extend
the 2-colouring of $B$, as follows. Assign colour 1 to the vertex $v$ of $K$ if
there exists a vertex of $N_B(v)$ with colour 2. Otherwise, assign colour 2
to $v$. For each edge satellite $K$, we extend the 2-colouring of $G$, as
follows.

\begin{enumerate}
  \item $K$ is in case $\ka$. Let $u$ be a vertex of $K$, such that $N_B(u) =
  \emptyset$.  Let $v$ be a vertex of $K$, $u \neq v$.
  If $N_B(v) = \emptyset$, then assign colour 1 to $v$ and colour 2 to $u$.  
  Otherwise, i.e. $N_B(v) \neq \emptyset$, assign colour 1 to $v$ if there
  exists a vertex in $N_B(v)$ with colour 2, otherwise assign colour 2 to $v$.
  Moreover, give colour 1 to $u$, if vertex $v$ has colour 2, or else we assign
  colour 1 to $u$.
  \item $K$ is in case $\kb$. Let $v, w$ be distinct vertices of $K$, such that
  $N_B(v) \subseteq N_B(w)$. Assign colour 1 to $v$ if there exists a vertex of
  $N_B(v)$ with colour 1, otherwise assign colour 2 to $v$. Assign colour 2 to
  $w$ if vertex $v$ received colour 1, otherwise assign colour 1 to $w$.
  \item $K$ is in case $\kc$. Assign colour 1 to every vertex of $K$, if there
  exists a vertex in $\displaystyle \bigcap_{v \in K} N_B(v)$ with colour 2,
  otherwise assign colour 2 to every vertex of $K$.
  \item $K$ is in case $\kd$. Let $u, v$ be distinct vertices of $K$. Suppose
  $N_B(u)$ and $N_B(v)$ are monochromatic. Assign colour 1 to $u$ if the
  vertices of $N_B(u)$ have colour 2. Otherwise, assign colour 2 to $u$. Assign
  colour 1 to $v$ if the vertices of $N_B(v)$ have colour 2. Otherwise, assign
  colour 2 to $v$. Now, suppose, without loss of generality, that $N_B(u)$ is
  polychromatic. Then, assign colour 1 to $v$ if there exists a vertex of
  $N_B(v)$ with colour 2, otherwise assign colour 2 to $v$. Moreover, assign
  colour 2 to $u$ if vertex $v$ received colour 1, otherwise assign colour 2 to
  $u$.
\end{enumerate}
   
For each triangle satellite $K$, we extend the 2-colouring of $B$, as follows.

\begin{enumerate}
  \item $K$ is in case $\ka$. Let $u, v, w$ be the distinct vertices of $K$,
  such that $N_B(u) = \emptyset$. If $N_B(v) \cup N_B(w) = \emptyset$, then
  assign colours 1, 1, and 2 to $u$, $v$, and $w$ respectively. Now, if $N_B(v)
  \cup N_B(w)$ is monochomatic, then assign colour 1 to $v$ and $w$ if the
  colour assigned to every vertex of $N_B(v) \cup N_B(w)$ is 2, otherwise we
  assign colour 1 to $v$ and $w$. Moreover, assign colour 1 to $u$ if $v$ and
  $w$ have colour 2, otherwise assign colour 2 to $u$. Now, if $N_B(v) \cup
  N_B(w)$ is polychromatic, we have two cases. If $N_B(v) \cap N_B(w) \neq
  \emptyset$, then assign colour 1 to $v$ and $w$ if there exists a vertex in
  $N_B(v) \cap N_B(w)$ with colour 2, otherwise assign colour 1 to $v$ and
  $w$. Moreover, assign colour 1 to $u$ if $v$ and $w$ have colour 2, otherwise
  assign colour 2 to $u$. If $N_B(v) \cap N_B(w) = \emptyset$, assign colour
  1 to $v$ (resp. to $w$) if there exists a vertex of $N_B(v)$ (resp. $N_B(w)$)
  with colour 2, otherwise assign colour 2 to $v$ (resp. $w$). Moreover, assign
  colour 1 to $u$ if $v$ or $w$ have colour 2, otherwise assign colour 2 to
  $u$.
  \item $K$ is in case $\kb$. Let $v, w$ be distinct vertices of $K$, such that
  $N_B(v) \subseteq N_B(w)$. Assign colour 1 to $v$ if there exists a vertex in
  $N_B(v)$ with colour 1, otherwise assign colour 2 to $v$. Assign colour 2 to
  $w$ if vertex $v$ received colour 1, otherwise assign colour 1 to $w$. Let $u
  \in K \setminus \{v, w\}$. By hyphotesis, $N_B(u) \neq \emptyset$. Assign
  colour 1 to $u$ if there exists a vertex of $N_B(u)$ with colour 2, otherwise
  assign colour 2 to $u$.
  \item $K$ is in case $\kc$. Assign colour 1 to every vertex of $K$, if there
  exists a vertex of $\displaystyle \bigcap_{v \in K} N_B(v)$ with colour 2,
  otherwise assign colour 2 to every vertex of $K$.
  \item $K$ is in case $\kd$. Let $u, v, w$ be distinct vertices of $K$. Suppose
  $N_B(u)$, $N_B(v)$, and $N_B(w)$ are monochromatic. Assign colour 1 to $u$
  if the vertices of $N_B(u)$ have colour 2. Otherwise, assign colour 2 to $u$.
  Assign colour 1 to $v$ if the vertices of $N_B(v)$ have colour 2. Otherwise,
  assign colour 2 to $v$. Assign colour 1 to $w$ if the vertices of $N_B(w)$
  have colour 2. Otherwise, assign colour 2 to $w$. Now, suppose, without loss
  of generality, that $N_B(u)$ is polychromatic. Assign colour 1 to $v$ if there
  exists a vertex of $N_B(v)$ with colour 2, otherwise assign colour 2 to $v$.
  Assign colour 1 to $w$ if there exists a vertex of $N_B(w)$ with colour 2,
  otherwise assign colour 2 to $w$. Finally, assign colour 2 to $u$ if vertex
  $v$ and $w$ received colour 1, otherwise assign colour 2 to $u$.
\end{enumerate}

We invite the reader to check that the given colouring is a
2-clique-colouring to graph~$G$.
\end{proof}

For a given a (3, 1)-polar graph $G$, we proceed to obtain a (2, 1)-polar
graph $G^\prime$ that is 2-clique-colourable if, and only if, $G$ is
2-clique-colourable, as follows. For each triangle satellite, if it is in case
$\kd$, we replace it by an edge in which (i) both complete sets have the same
neighboorhod contained in $B$ and (ii) the edge is also in case $\kd$,
otherwise we just delete triangle $K$. See Fig.~\ref{fig:k1234} for examples. 
Such construction is done in polynomial-time and we depict it as
Algorithm~\ref{alg:reducion-np-2}. See Fig.~\ref{fig:31polar3cc} for an
application of Algorithm~\ref{alg:reducion-np-2} given as input a (3,1)-polar
graph, which is not (2,1)-polar, with clique-chromatic number 3.
Algorithm~\ref{alg:reducion-np-2} and Theorem~\ref{thm:nptocheck} imply the
following theorem.

\begin{figure}[t!]
\centering
	\subfloat
		{
			\includegraphics[scale=0.22]{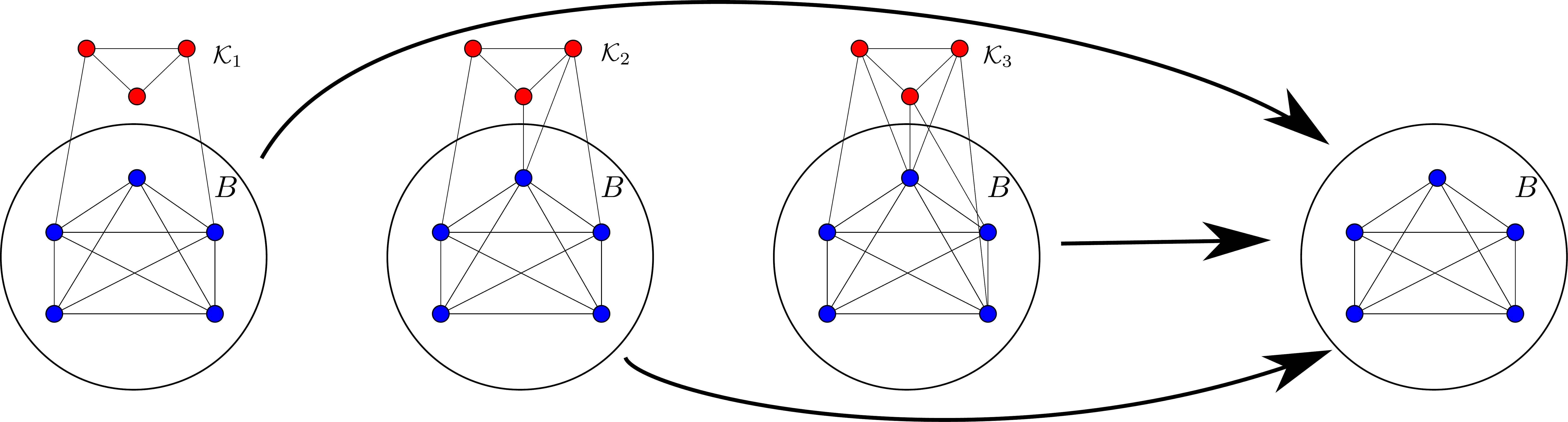}
			\label{fig:k123}
		}
	\qquad
	\subfloat
		{
			\includegraphics[scale=0.22]{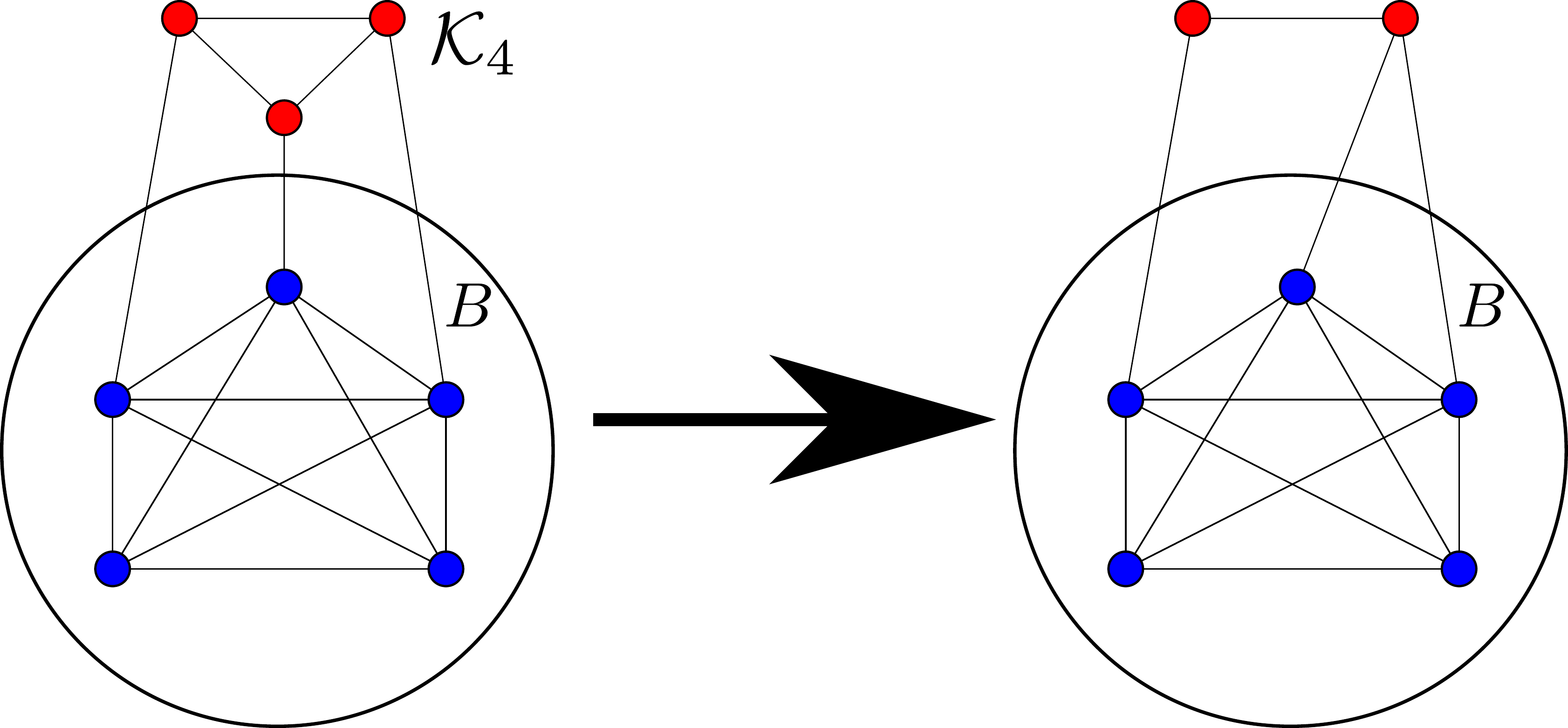}
			\label{fig:k4}
		}
	\caption{An iteration to obtain a (2, 1)-polar graph $G^\prime$, given a
	(3, 1)-polar graph $G$, such that $G$ is 2-clique-colourable if and
	only if $G^\prime$ is 2-clique-colourable}
	\label{fig:k1234}
\end{figure}

	\begin{algorithm}[h]
	\SetKwInOut{Input}{input}
	\SetKwInOut{Output}{output}
	\Input{$G = (A, B)$, a (3, 1)-polar graph.}
	\Output{$G^\prime$, a (2, 1)-polar graph that is 2-clique-colourable if, and
	only if, $G$ is 2-clique-colourable.} 
	\caption{An $O(n^2)$-time algorithm to output a (2, 1)-polar graph $G^\prime$,
	such that graph $G$ is 2-clique-colourable if, and only if, graph $G^\prime$ is
	2-clique-colourable.}
	\BlankLine
	\Begin
	{	
		\ForEach
		{
			satellite $K = \{v_1, v_2, v_3\}$
		}
		{
			$V^\prime \longleftarrow \emptyset$\;
			$E^\prime \longleftarrow \emptyset$\;
			\If{$K$ is in case $\kd$}
			{
					$V^\prime \longleftarrow \{u_1, u_2\}$\;
					$E^\prime \longleftarrow \{(u_1, u_2)\}$\;
					$E^\prime \longleftarrow E^\prime \cup \{(u_1, x) \mid x \in N_B(v_1)\}$\; 
					$E^\prime \longleftarrow E^\prime \cup \{(u_2, x) \mid x \in ((N_B(v_2)
					\cup N_B(v_3)) \setminus N_B(v_1))\}$\; 
			}
			$G \longleftarrow G[V(G) \setminus K]$\;
			$V(G) \longleftarrow V(G) \cup V^\prime$\;
			$E(G) \longleftarrow E(G) \cup E^\prime$\; 
			
	    }
	    \Return{$G$\;}
	}
	\label{alg:reducion-np-2}
	\end{algorithm}

\begin{figure}[t!]
\centering
	\subfloat
		[A (3,1)-polar graph, which is not (2,1)-polar, with clique-chromatic
		number 3.] {
			\hspace{1.2cm}
			\includegraphics[scale=0.2]{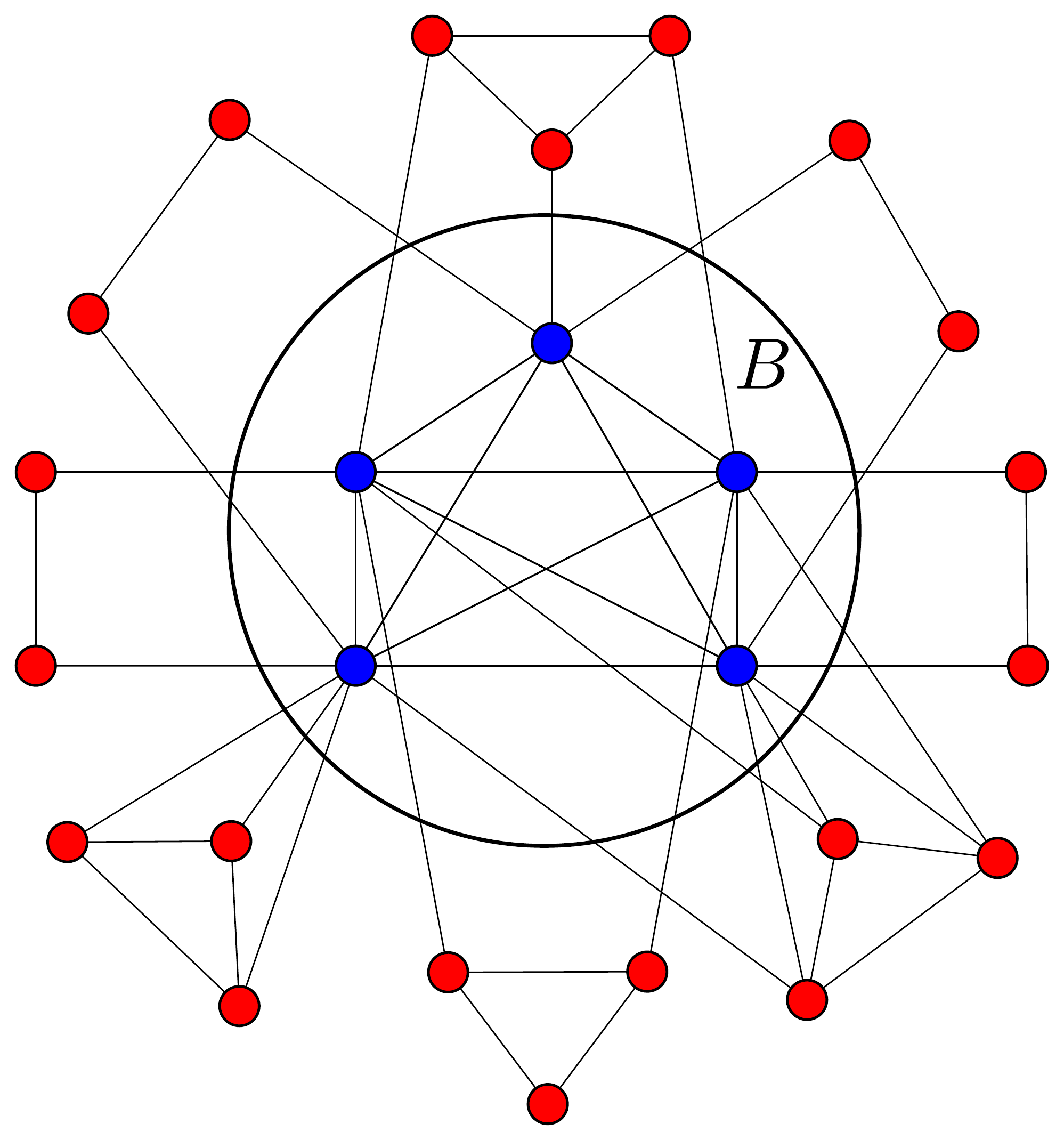}
			\hspace{1.2cm}
			\label{fig:31polar3cc-0}
		}
	\qquad 
	\subfloat
		[A satellite $K_1$ that belongs to $\ka$.] {
			\hspace{1.2cm}
			\includegraphics[scale=0.2]{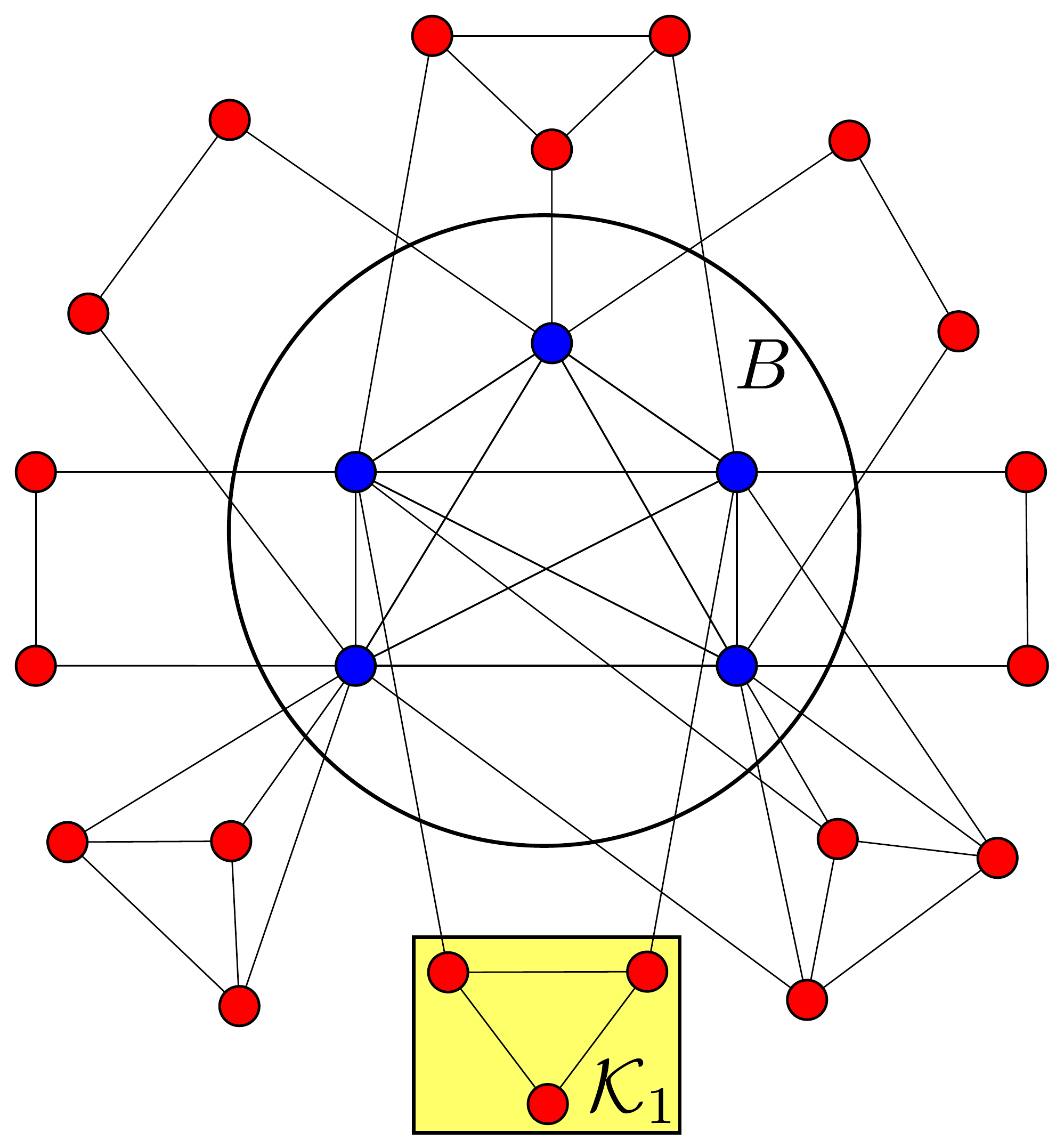}
			\hspace{1.2cm}
			\label{fig:31polar3cc-1}
		}
	\qquad 
	\subfloat
		[A satellite $K_2$ that belongs to $\kb$.] {
			\hspace{1.2cm}
			\includegraphics[scale=0.2]{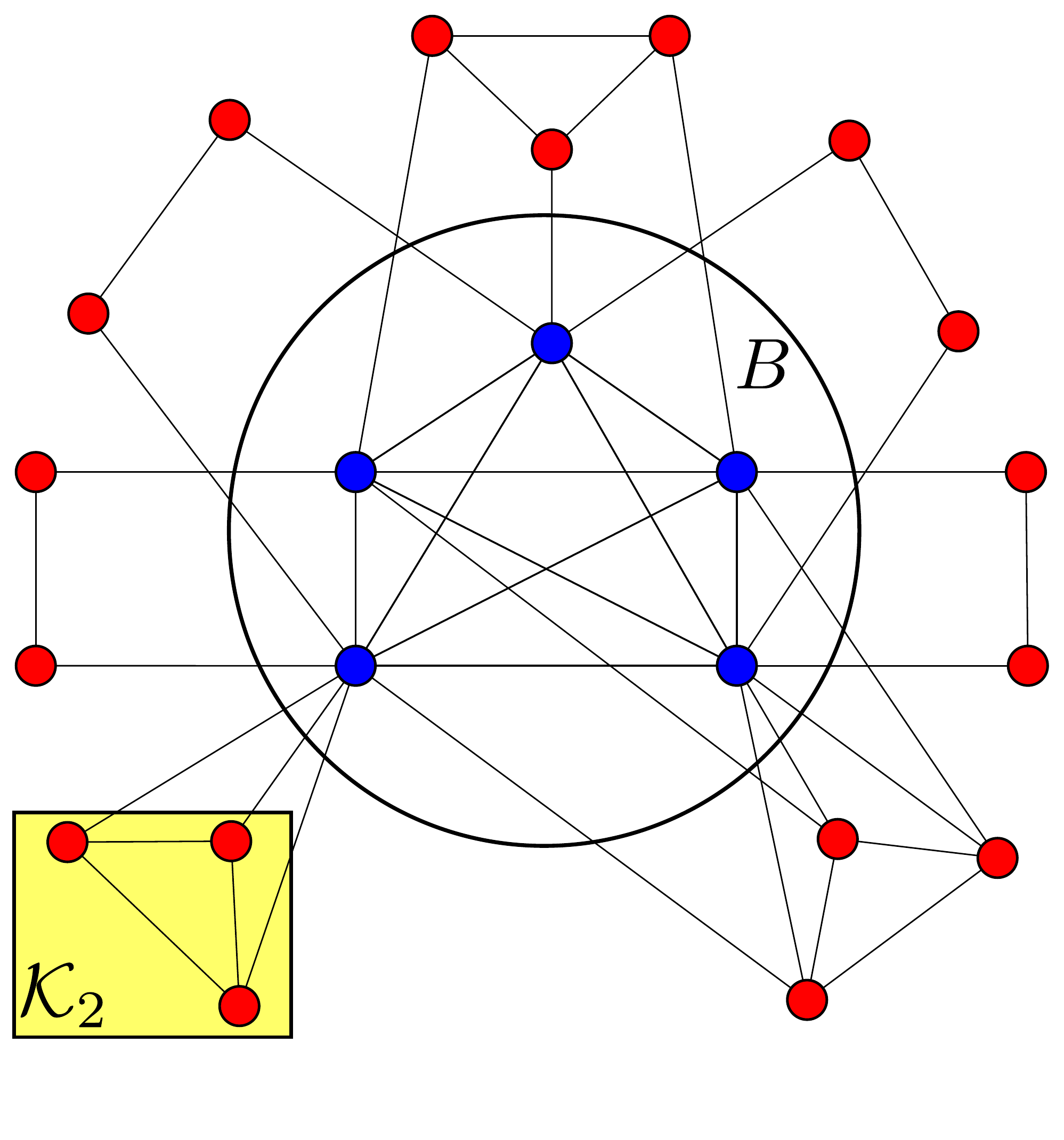}
			\hspace{1.2cm}
			\label{fig:31polar3cc-2}
		}
	\qquad 
	\subfloat
		[A satellite $K_3$ that belongs to $\kc$.] {
			\hspace{1.2cm}
			\includegraphics[scale=0.2]{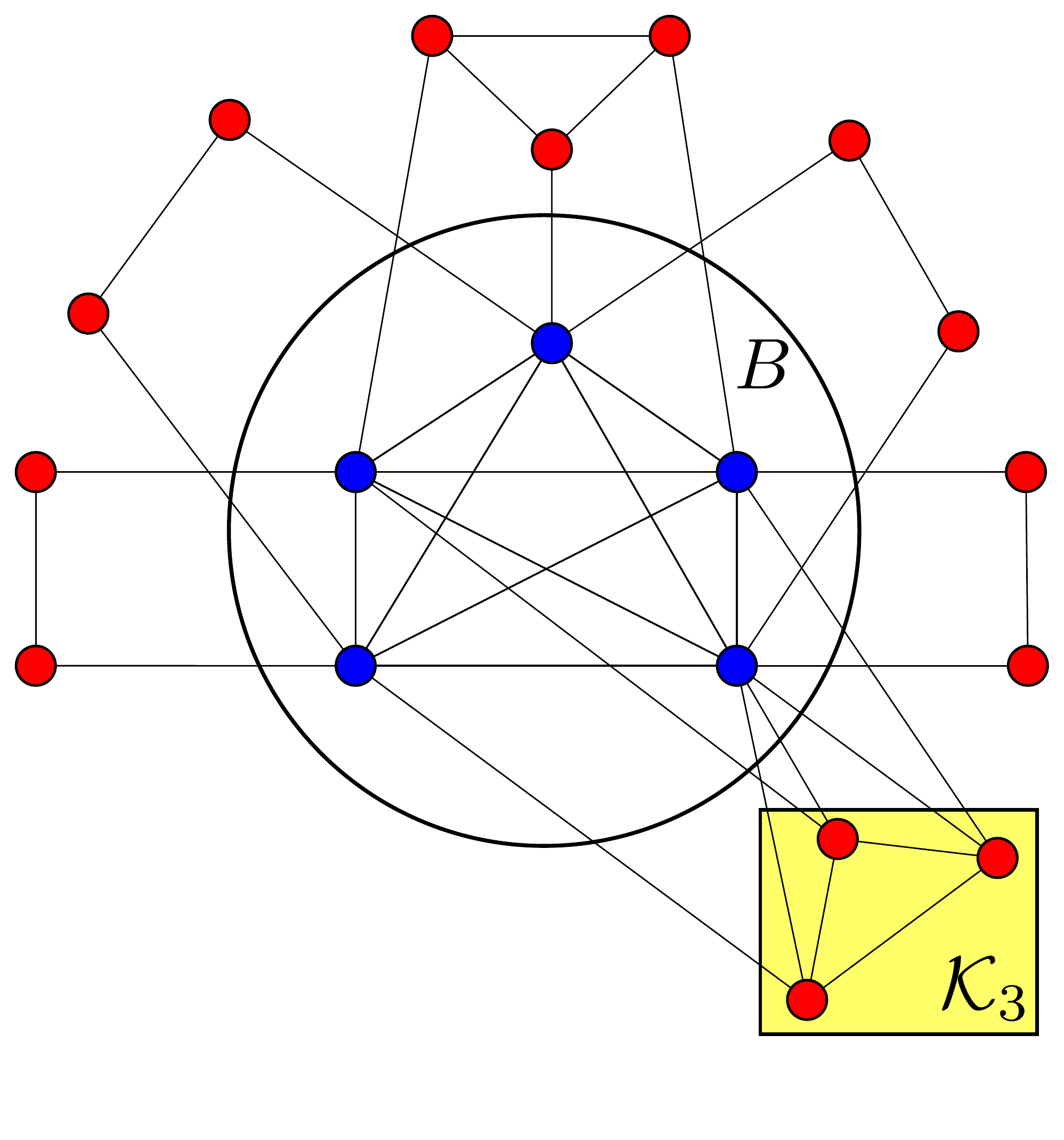}
			\hspace{1.2cm}
			\label{fig:31polar3cc-3}
		}
	\qquad 
	\subfloat
		[A satellite $K_4$ that belongs to $\kd$.] {
			\hspace{1.2cm}
			\includegraphics[scale=0.2]{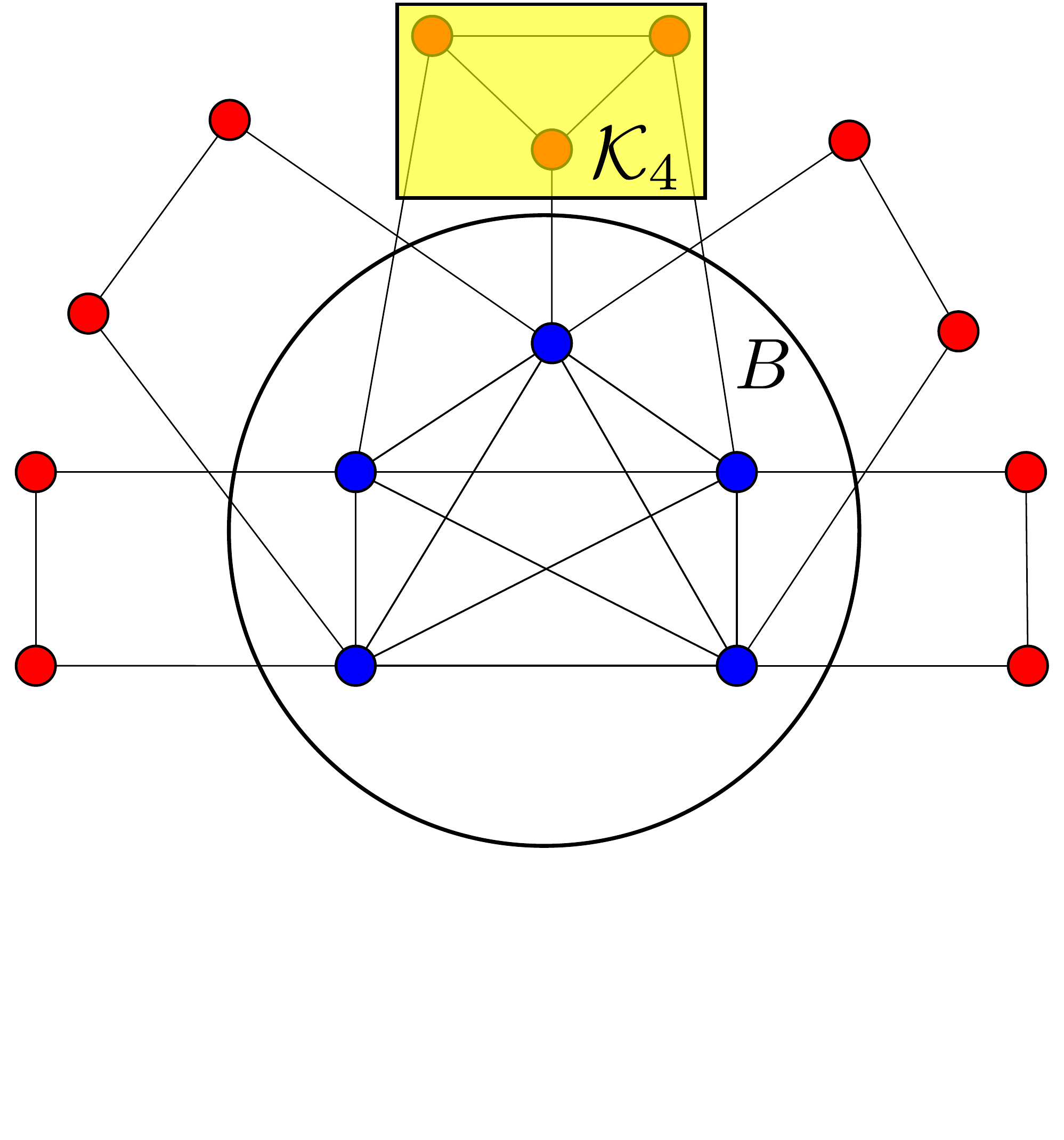}
			\hspace{1.2cm}
			\label{fig:31polar3cc-4}
		}
	\qquad 
	\subfloat
		[A (2,1)-polar graph with clique-chromatic number 3.] {
			\hspace{1.2cm}
			\includegraphics[scale=0.2]{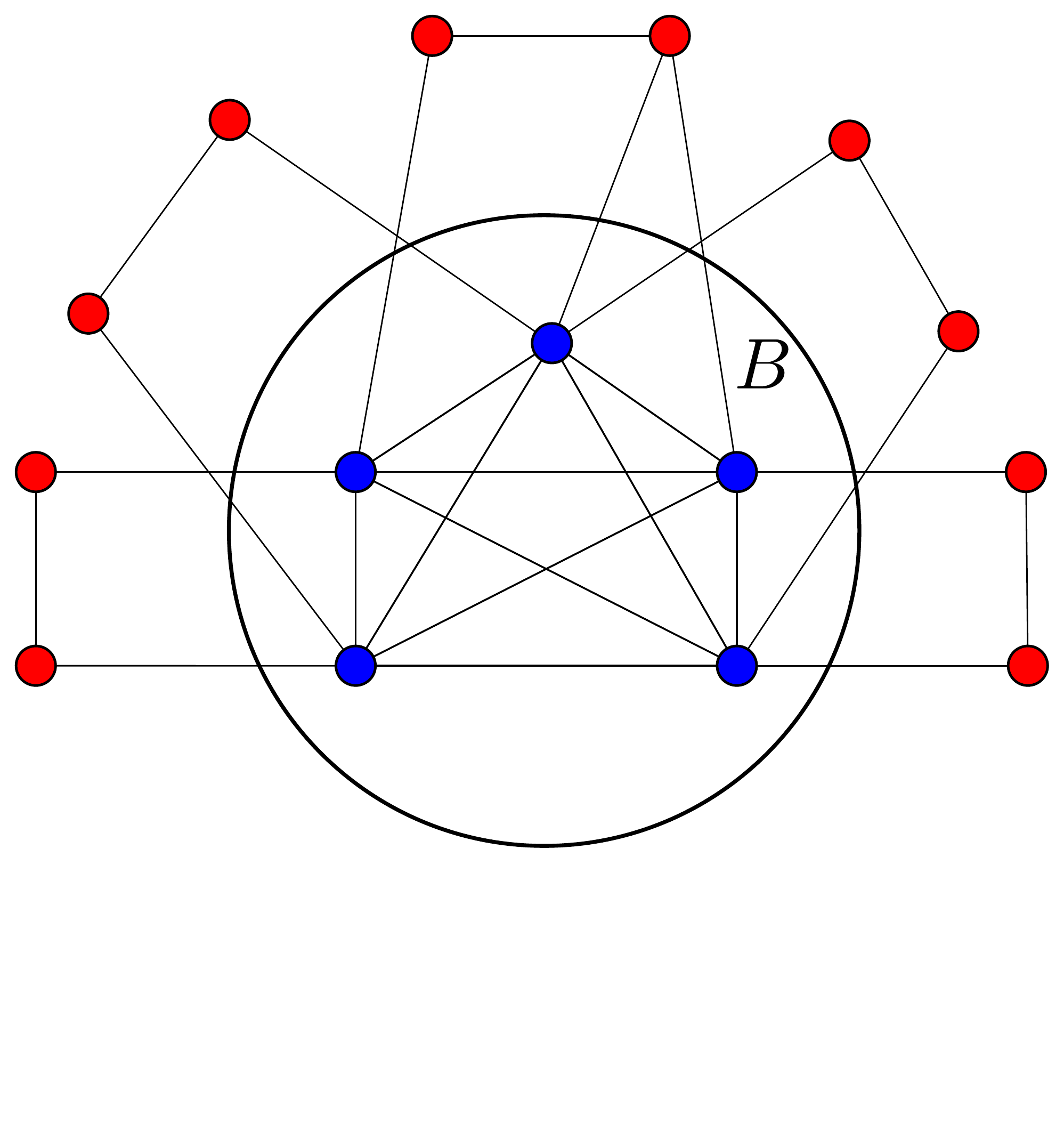}
			\hspace{1.2cm}
			\label{fig:31polar3cc-5}
		}
	\caption{Application of Algorithm~\ref{alg:reducion-np-2} given as input a
	(3,1)-polar graph, which is not (2,1)-polar, with clique-chromatic number 3.}
	\label{fig:31polar3cc}
\end{figure}

\begin{theorem}
\label{thm:2cc21polar}
The problem of 2-clique-colouring is $\mathcal{NP}$-complete for
(2, 1)-polar graphs.
\end{theorem}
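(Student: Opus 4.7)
The plan is to prove membership in $\mathcal{NP}$ and $\mathcal{NP}$-hardness separately, exploiting the machinery already built up. For membership, I would appeal directly to Theorem~\ref{thm:nptocheck}: a (2, 1)-polar graph is a $(k, 1)$-polar graph with $k = 2$, so a guessed 2-colouring can be verified as a clique-colouring in $O(n^2)$ time. Hence 2-clique-colouring of (2, 1)-polar graphs is in $\mathcal{NP}$.

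For $\mathcal{NP}$-hardness, I would reduce from 2-clique-colouring of (3, 1)-polar graphs, which is $\mathcal{NP}$-hard by Theorem~\ref{thm:2cc31polar}. The reduction is Algorithm~\ref{alg:reducion-np-2} itself: given a (3, 1)-polar graph $G = (A, B)$, it iterates over the triangle satellites, replacing each triangle $K$ in case $\kd$ by a carefully chosen edge $\{u_1, u_2\}$ and deleting each triangle in case $\ka$, $\kb$, or $\kc$; singletons and edges are left untouched. After the transformation, every satellite has at most two vertices and $B$ is still complete, so the output $G^\prime$ is (2, 1)-polar, and the construction runs in polynomial time.

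The key correctness claim is that $G$ is 2-clique-colourable if and only if $G^\prime$ is, and this is where the argument really happens. I would invoke Lemma~\ref{lem:31polargraphs}: for a $(3, 1)$-polar graph (and the same statement will hold for the $(2, 1)$-polar output), 2-clique-colourability is equivalent to the existence of a 2-colouring of $B$ that makes $\bigcup_{v \in K} N_B(v)$ polychromatic for every satellite $K$ in case $\kd$. Triangle satellites of $G$ in cases $\ka$, $\kb$, $\kc$ contribute no such constraint, so their deletion does not alter the family of admissible $B$-colourings. A triangle satellite $K = \{v_1, v_2, v_3\}$ in case $\kd$ is replaced by an edge $\{u_1, u_2\}$ with $N_B(u_1) \cup N_B(u_2) = N_B(v_1) \cup N_B(v_2) \cup N_B(v_3)$; one then checks that this edge again lies in case $\kd$ in $G^\prime$, so it imposes exactly the same polychromatic-union constraint on $B$ as the original triangle did. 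Consequently the 2-colourings of $B$ satisfying the lemma's condition for $G$ coincide with those satisfying it for $G^\prime$, and the equivalence follows.

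The main obstacle, and the place where I expect to spend most of the writing, is the classification verification: one must certify that the new edge $\{u_1, u_2\}$ really falls into case $\kd$ of $G^\prime$ (using that the original $K$ was in $\kd$, hence not in $\ka$ or $\kb$, and some vertex of $B$ is common to all three $N_B(v_i)$), and that the surgery does not accidentally change the case of any other satellite or create a new maximal clique whose monochromaticity is not already controlled by Lemma~\ref{lem:31polargraphs}. Once this routine but case-heavy bookkeeping is done, Algorithm~\ref{alg:reducion-np-2} becomes the sought polynomial reduction, and the theorem follows.
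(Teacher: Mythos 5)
Your proposal is correct and follows essentially the same route as the paper: membership via Theorem~\ref{thm:nptocheck}, and $\mathcal{NP}$-hardness by using Algorithm~\ref{alg:reducion-np-2} as a reduction from 2-clique-colouring of (3,~1)-polar graphs (Theorem~\ref{thm:2cc31polar}), with correctness resting on Lemma~\ref{lem:31polargraphs} exactly as you describe (the replacement edge has the same union of $B$-neighbourhoods and is again in case $\kd$, while non-$\kd$ triangles impose no constraint). The only difference is presentational: the paper organizes the correctness claim as an induction on the number of triangle satellites and tracks explicitly the case where $B$ is a maximal clique, whereas you phrase it as invariance of the admissible $B$-colourings -- the substance, including the case-checking you flag as remaining bookkeeping, is the same.
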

\begin{proof}
The problem of 2-clique-colouring a (2, 1)-polar graph is in $\mathcal{NP}$:
Theorem~\ref{thm:nptocheck} confirms that it is in $\mathcal{P}$ to check
whether a colouring of a (2, 1)-polar graph is a 2-clique-colouring.

We claim that Algorithm~\ref{alg:reducion-np-2} is an $O(n^2)$-time
algorithm to output a (2, 1)-polar graph $G^\prime$, given a (3, 1)-polar graph
$G$, such that $G^\prime$ is 2-clique-colourable if, and only if, graph $G$ is
2-clique-colourable.

Without loss of generality, suppose that $B$ is a (maximal) clique of the graph.
We use induction on the number of triangle satellites of $G$ in order to prove
that $G$ is 2-clique-colourable if, and only if, $G^\prime$ is
2-clique-colourable. For the sake of conciseness, we prove only one way of the
basis induction. The converse and the step induction follows analogously.

Suppose that there exists only one triangle satellite $K = \{v_1, v_2, v_3\}$ of
$G$.
Let $K^\prime = \{u_1, u_2\}$ be the edge that replaced satellite $K$ to obtain
graph $G^\prime$. Suppose that there exists a 2-clique-colouring $\pi$ of $G$.
Every clique of $G$ containing a subset of a satellite $S \neq K$ of $G$ is
polychromatic. Let $\pi^\prime$ be a 2-colouring of $G^\prime$, such that
$\pi^\prime(v) = \pi(v)$, for each $v \in V(G) \cap V(G^\prime)$. Hence, every
clique of $G^\prime$ containing a subset of a satellite $S^\prime \neq K^\prime$
of $G$ is polychromatic. If $B(G)$ is a clique of $G$, then $B$ is
polychromatic. Clearly, $B(G^\prime)$ is polychromatic. Hence, we are left to
prove that every clique of $G^\prime$ containing a subset of $K^\prime$ is
polychromatic. If $K$ is either in case $\ka$, $\kb$ or $\kc$, then we are done.
If $K$ is in case $\kd$, then $\displaystyle \bigcup_{v \in K} N_B(v)$ is
polychromatic. Since $\displaystyle \bigcup_{v \in K} N_B(v) = \displaystyle
\bigcup_{v \in K^\prime} N_B(v)$ and $\pi^\prime(v) = \pi(v)$, for each $v \in
B$, $\displaystyle \bigcup_{v \in K^\prime} N_B(v)$ is polychromatic. By
Lemma~\ref{lem:31polargraphs}, $G^\prime$ is 2-clique-colourable.
\end{proof}


As a remark, we noticed strong connections between hypergraph 2-colorability and
2-clique-colouring (2, 1)-polar graphs. Indeed, we have a simpler alternative
proof showing that 2-clique-colouring (2, 1)-polar graphs is
$\mathcal{NP}$-complete by a reduction from hypergraph 2-colouring.
In constrast to graphs, deciding if a given hypergraph is 2-colourable is
an $\mathcal{NP}$-complete problem, even if all edges have cardinality at
most~3~\cite{MR0363980}. The reader may ask why we did not
exploit only the alternative proof that is quite shorter than the original
proof. The reason is related to be consistent with the next section, where we
show that even restricting the size of the cliques to be at least 3, the 2-clique-colouring of
(3, 1)-polar graphs is still $\mathcal{NP}$-complete, while 2-clique-colouring
of (2, 1)-polar graphs becomes a problem in $\mathcal{P}$.

\def\proofname{Alternative proof of Theorem~\ref{thm:2cc21polar}}
\begin{proof}
The problem of 2-clique-colouring a (2, 1)-polar graph is in $\mathcal{NP}$:
Theorem~\ref{thm:nptocheck} confirms that it is in $\mathcal{P}$ to check
whether a colouring of a (2, 1)-polar graph is a 2-clique-colouring. 

We prove that 2-clique-colouring (2, 1)-polar graphs is $\mathcal{NP}$-hard by
reducing hypergraph 2-colouring to it. The outline of the proof follows. For
every hypergraph $\mathcal{H}$, a (2, 1)-polar graph~$G$ is constructed such
that $\mathcal{H}$ is 2-colourable if, and only if, graph $G$ is
2-clique-colourable. Let~$n$ (resp. $m$) be the number of hypervertices (resp.
hyperedges) in hypergraph $\mathcal{H}$. We define graph $G$, as follows.

\begin{itemize}
	\item for each hypervertex $v_i$, $1 \leq i \leq n$, we create a vertex $v_i$
	in $G$, so that the set $\{v_1, \ldots, x_n\}$ induces a complete subgraph of
	$G$, which is the partition $B$ of graph $G$;
	\item for each hyperedge $e_j = \{v_1, \ldots, v_l\}$, $1 \leq j \leq m$,  
	we create two vertices $u_{j_{1}}$ and $u_{j_{2}}$. Moreover, we create edges
	$u_{j_{1}} v_1, \ldots, u_{j_{1}} v_{l-1}$, and $u_{j_{2}} v_l$ so that 
    $\{u_{j_{1}} u_{j_{2}}\}$ is a satellite in case~$\kd$.
\end{itemize}

Clearly, $G$ is a (2, 1)-polar graph and such construction is done in 
polynomial-time. Refer to Fig.~\ref{fig:reduction-np-hypergraph} for an example
of such construction.

\begin{figure}[t!]
\centering
	\subfloat
		[Hypergraph instance and its corresponding constructed (2, 1)-polar graph] {
			\includegraphics[scale=0.3]{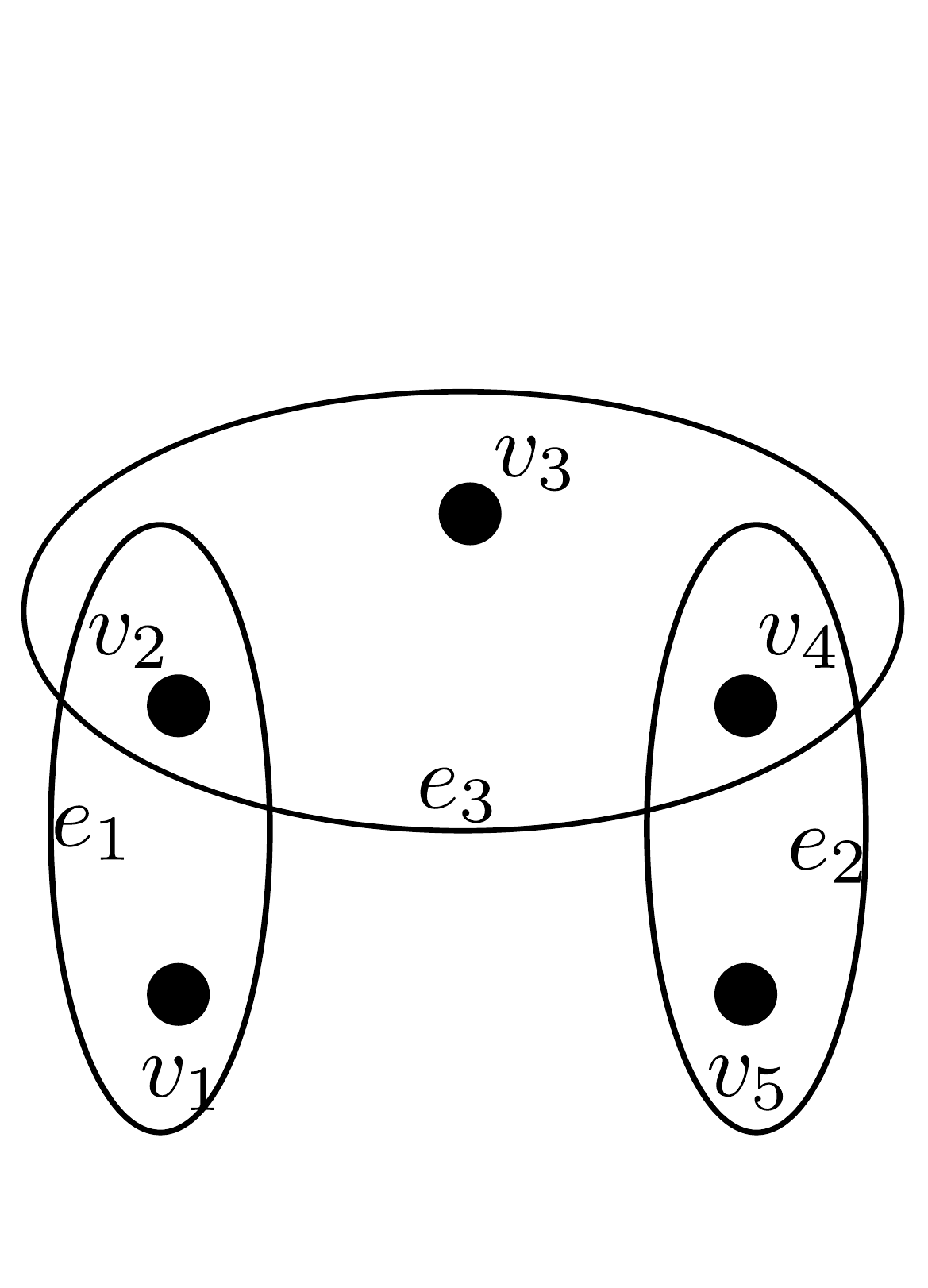}
			\hspace{2cm}
			\includegraphics[scale=0.3]{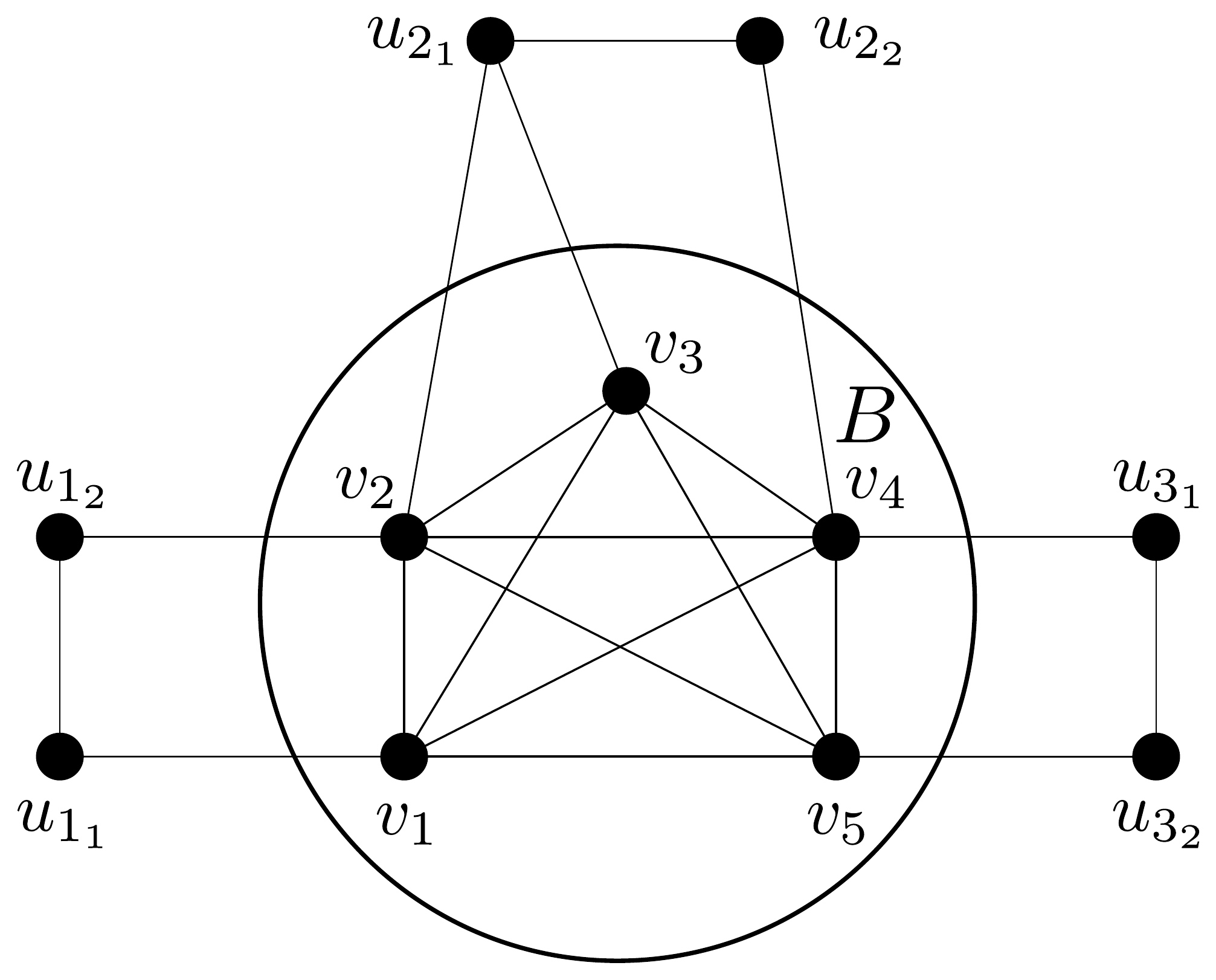}
			\label{fig:reduction-np-hypergraph-uncoloured}
		}
	\qquad 
	\subfloat
		[A proper 2-colouring assignment of the hypergraph and its corresponding
		2-colouring assignment of the constructed graph, as well as a
		2-clique-colouring assignment of the constructed graph and its corresponding
		2-colouring assignment of the hypergraph] {
			\includegraphics[scale=0.3]{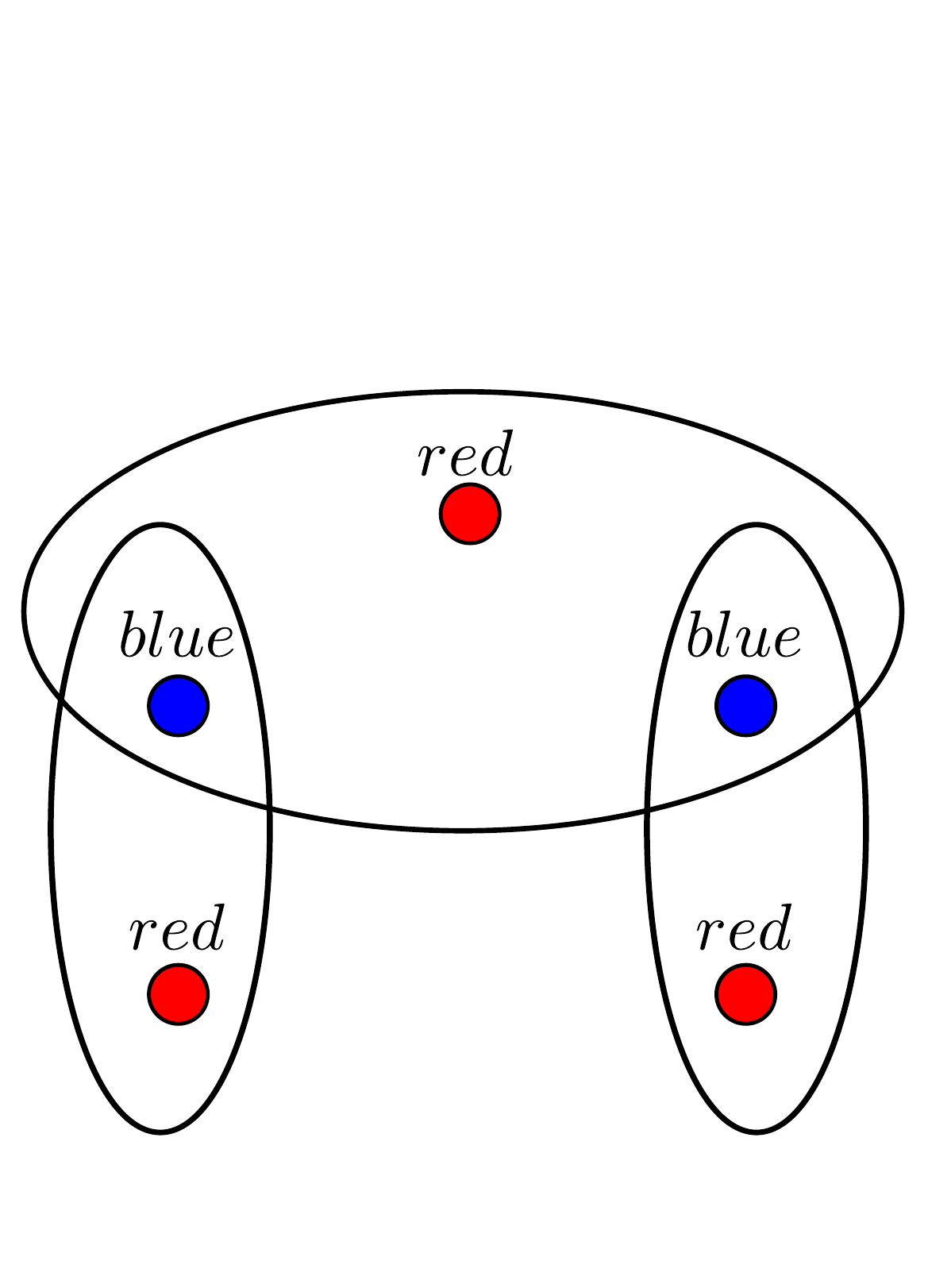}
			\hspace{2cm}
			\includegraphics[scale=0.3]{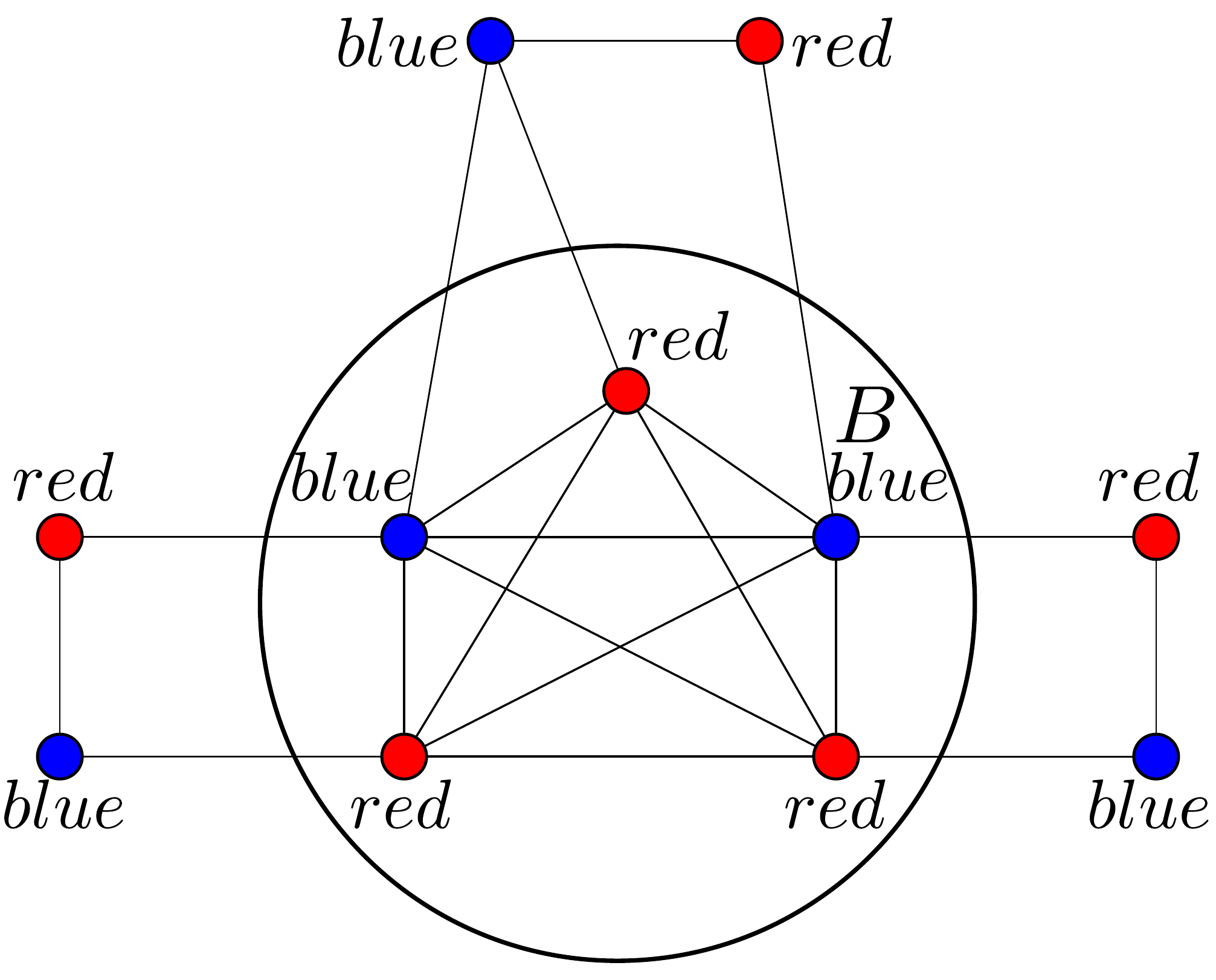}
			\label{fig:reduction-np-hypergraph-coloured}
		}
	\caption{Example of a (2, 1)-polar graph constructed for a given hypergraph
	instance}
	\label{fig:reduction-np-hypergraph}
\end{figure}

We claim that hypergraph $\mathcal{H}$ is 2-colourable if, and only if, graph
$G$ is 2-clique-colourable. Assume that there exists a proper 2-colouring $\pi$
of $\mathcal{H}$. We give a colouring to the graph $G$, as follows.
 
\begin{itemize}
	\item assign colour $\pi(v)$ for each $v$ of partition $B$,
	\item extend the 2-clique-colouring for each clique ($\{u_{j_{1}},
	u_{j_{2}}\}$) that is a satellite of $G$.
\end{itemize}

It still remains to be proved that this is indeed a 2-clique-colouring. Consider
the partition $B = \{x_1, \overline{x}_1, \ldots, x_n, \overline{x}_n\}$.
Clearly, the above colouring assigns 2 colours to this set.
Each satellite $K$ of $G$ is in case $\kd$ and $\displaystyle\bigcup_{v \in K}
N_B(v)$ is polychromatic, since $\displaystyle\bigcup_{v \in K} N_B(v) = e_j$.
By Lemma~\ref{lem:31polargraphs}, graph $G$ is 2-clique-colourable.

For the converse, we now assume that $G$ is 2-clique-colourable and we consider
any 2-clique-colouring $\pi^\prime$ of $G$. We give a colouring to hypergraph
$\mathcal{H}$, as follows. Assign colour $\pi^\prime(v)$ for each hypervertex
$v$. By Lemma~\ref{lem:31polargraphs} $\displaystyle\bigcup_{v \in K} N_B(v)$
is polychromatic for each satellite $K$ of $G$. Then, hypergraph $\mathcal{H}$
is 2-colourable, since $\displaystyle\bigcup_{v \in K} N_B(v) = e_j$ for every
hyperedge $e_j$.
\end{proof}
\def\proofname{Proof}


\section{Restricting the size of the cliques}
\label{sec:restrictingthesize}


Kratochv\'il and Tuza~\cite{Kratochvil} are interested in determining the
complexity of 2-clique-colouring of perfect graphs with all cliques having size
at least 3. We determine what happens with the complexity of 2-clique-colouring
of (2, 1)-polar graphs, of (3, 1)-polar graphs, and of weakly chordal graphs,
respectively, when all cliques are restricted to have size at least 3. The
latter result address Kratochv\'il and Tuza's question.


Given graph $G$ and $b_1, b_2, b_3 \in V(G)$, we say that we add to $G$ a copy
of an auxiliary graph $BP(b_1, b_2, b_3)$ of order $6$  -- depicted in
Fig.~\ref{fig:bpolychromatic} -- if we change the definition of~$G$ by doing the
following: we first change the definition of $V$ by adding to it copies of the
vertices $a_1$, $a_2$, $a_3$ of the auxiliary graph $BP(b_1, b_2, b_3)$; second,
we change the definition of $E$ by adding to it copies of the edges~$(u, v)$ of
$BP(b_1, b_2, b_3)$. 

Similarly, given a graph~$G$ and $b_1, b_2 \in V(G)$, we
say that we add to $G$ a copy of an auxiliary graph $BS(b_1, b_2)$ of order
$17$  -- depicted in Fig.~\ref{fig:bswitch} -- if we change the definition of
$G$ by doing the following: we first change the definition of $V$ by adding to
it copies of the vertices $b^{\prime}$, $b^{\prime\prime}$,
$b^{\prime\prime\prime}$ of the auxiliary graph $BS(b_1, b_2)$; second, we
change the definition of $E$ by adding to it edges so that $B(G)
\cup \{b_1, b_2, b^{\prime}, b^{\prime\prime}, b^{\prime\prime\prime}\}$ is a
complete set; finally, we add copies of the auxiliary graphs $BP(b_1, b_2,
b^{\prime})$, $BP(b_1, b_2, b^{\prime\prime})$, $BP(b_1, b_2,
b^{\prime\prime\prime})$, $BP(b^{\prime}, b^{\prime\prime},
b^{\prime\prime\prime})$.

\begin{figure}[t!]
\centering
	\subfloat
		[Auxiliary graph $BP(b_1, b_2, b_3)$] {
			\hspace{1cm}
			\includegraphics[scale=0.34]{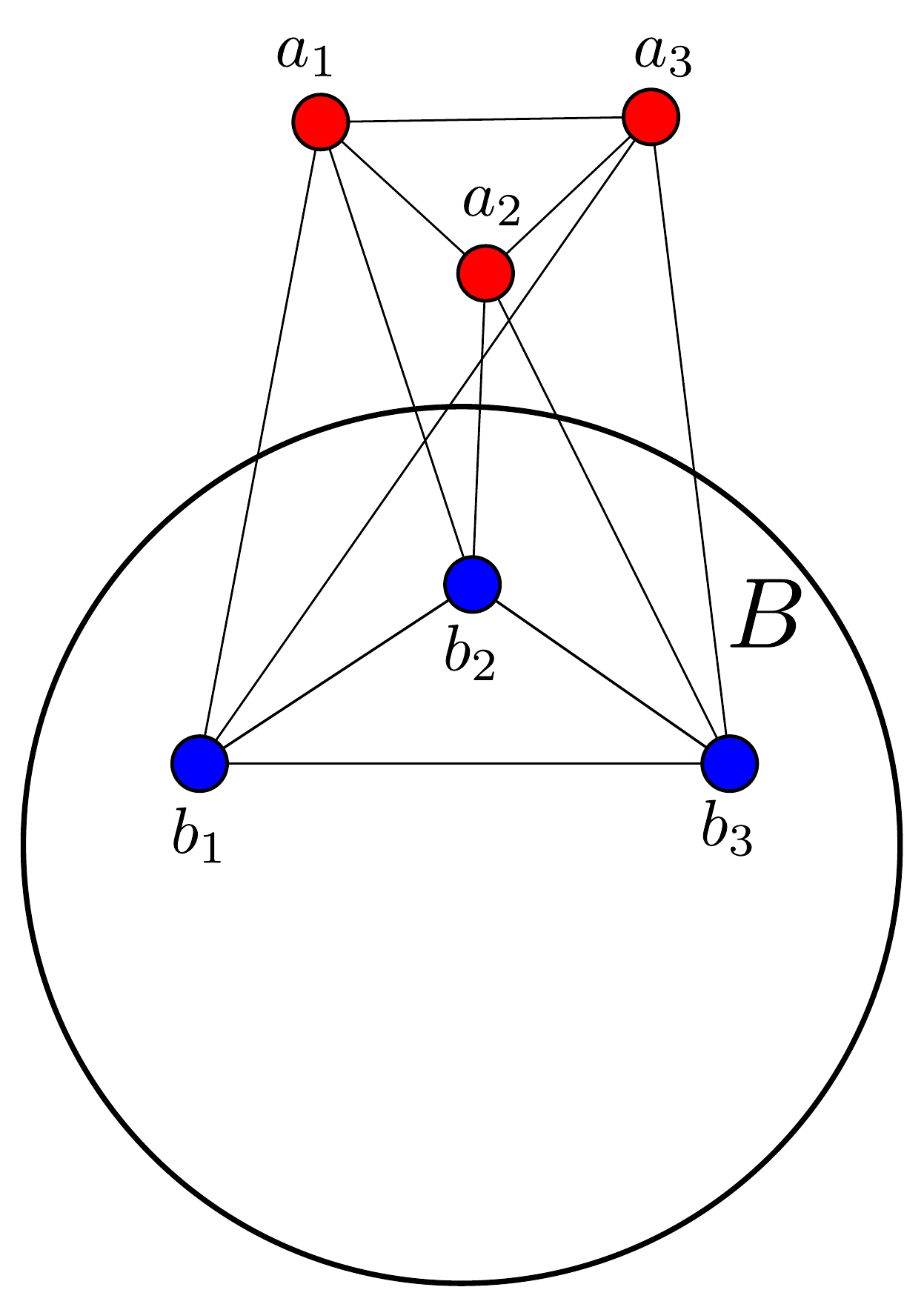}
			\hspace{1cm}
			\label{fig:bpolychromatic}
		}
	\qquad 
	\subfloat
		[Auxiliary graph $BS(b_1, b_2)$] {
			\includegraphics[scale=0.31]{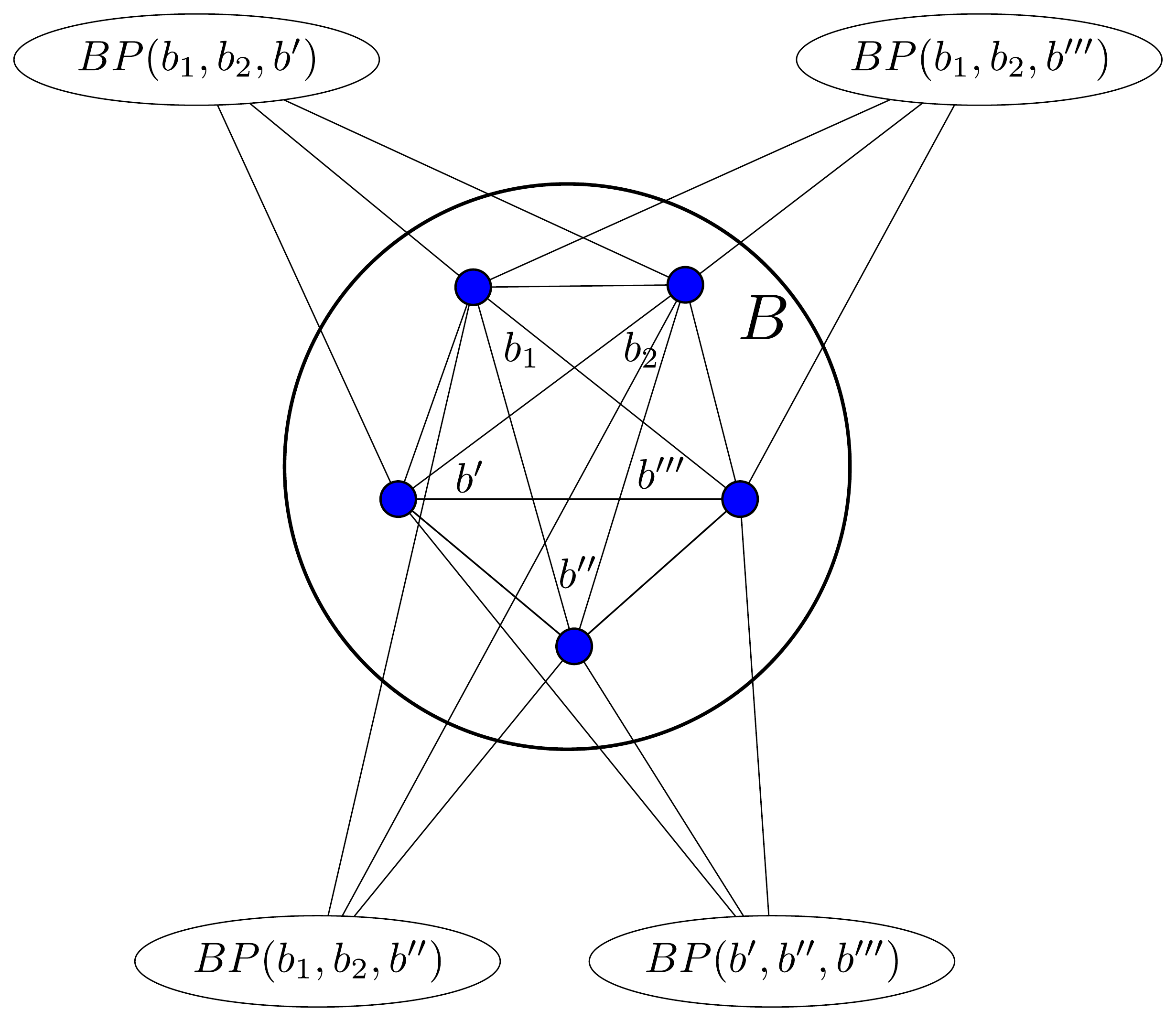}
			\label{fig:bswitch}
		}
	\caption{Auxiliary graphs $BP(b_1, b_2, b_3)$ and $BS(b_1, b_2)$}
\end{figure}


\begin{lemma}
\label{lem:auxiliarygraphbp}
Let $G$ be a weakly chordal graph (resp. (3, 1)-polar graph) and $b_1, b_2,
b_3 \in V(G)$ (resp. $b_1, b_2, b_3 \in B(G)$). If we add to $G$ a copy of an
auxiliary graph $BP(b_1, b_2, b_3)$, then the following assertions are true.

\begin{itemize}
  \item The resulting graph $G^\prime$ is weakly chordal (resp.
  (3, 1)-polar).
  \item If all cliques of $G$ have size at least 3, then all cliques of
  $G^\prime$ have size at least~3.
  \item Any 2-clique-colouring of $G^\prime$ assigns at least 2 colours to
  $b_1, b_2, b_3$.
  \item $G$ is 2-clique-colourable if $G^\prime$ is
  2-clique-colourable.
  \item $G^\prime$ is 2-clique-colourable if there exists a
  2-clique-colouring of $G$ that assigns at least 2 colours to $b_1, b_2, b_3$.
\end{itemize}
\end{lemma}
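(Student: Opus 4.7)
The plan is to verify the five bullets in turn, reading off the structure of $BP(b_1,b_2,b_3)$ from Fig.~\ref{fig:bpolychromatic}. The gadget's three new vertices $a_1,a_2,a_3$ together with $b_1,b_2,b_3$ induce a small configuration whose maximal internal cliques are all triangles --- specifically the triangle $\{a_1,a_2,a_3\}$ and, for each $\{i,j,k\}=\{1,2,3\}$, a triangle containing $a_i$ together with two of the $b$-vertices. The first step is to enumerate these triangles and confirm they are indeed maximal in $G'$; this makes the second bullet immediate, since no $2$-clique is introduced and any clique of $G$ either persists as a clique of $G'$ or is absorbed into a triangle with a gadget vertex, remaining of size at least $3$.

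For the first bullet, the only vertices of the gadget adjacent to anything outside it are $b_1,b_2,b_3$, and every $a_i$ is adjacent to two of these $b$-vertices. Hence any induced chordless cycle of length $\ge 5$ in $G'$ that uses some $a_i$ must enter and leave the gadget at two distinct $b$-vertices that are both neighbours of $a_i$, and $a_i$ then provides a chord. A dual argument on the complement (where $a_i$ has exactly one non-neighbour among $b_1,b_2,b_3$) rules out long induced antiholes, so $G'$ is weakly chordal whenever $G$ is. For the $(3,1)$-polar case, $b_1,b_2,b_3\in B(G)$, and taking $\{a_1,a_2,a_3\}$ as a fresh satellite of $A$ of order at most $3$ preserves the $(A,B)$-polar structure.

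For the third bullet, suppose for contradiction that a $2$-clique-colouring of $G'$ assigns the same colour, say $1$, to all of $b_1,b_2,b_3$. Each $b$-containing triangle of the gadget then forces its $a$-vertex to take colour $2$, so all three $a_i$ become colour $2$; the triangle $\{a_1,a_2,a_3\}$ is thereby monochromatic --- contradiction. The fourth bullet follows by restriction: any $2$-clique-colouring of $G'$ restricted to $V(G)$ remains a $2$-clique-colouring of $G$, because every clique of $G$ is a clique of $G'$ or is contained in one whose polychromaticity forces polychromaticity of its $V(G)$-part. For the fifth bullet, a $2$-clique-colouring of $G$ with $\{b_1,b_2,b_3\}$ polychromatic --- after relabelling, $b_1,b_2$ receive colour $1$ and $b_3$ receives colour $2$ --- extends to the gadget via $a_3\mapsto 2$, $a_1\mapsto 1$, $a_2\mapsto 1$; a direct check shows every gadget triangle becomes polychromatic.

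The main obstacle is the first bullet, specifically the case analysis verifying that the chord-provision argument handles every route a hypothetical induced $C_{\ge 5}$ or $\overline{C_{\ge 5}}$ in $G'$ could take through the gadget; this requires distinguishing the configurations based on how many of $b_1,b_2,b_3$ actually appear on the putative hole or antihole. The colouring arguments in bullets $3$--$5$ are then short direct consequences of the listing of the maximal cliques of $BP(b_1,b_2,b_3)$.
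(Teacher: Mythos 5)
Your overall plan coincides with the paper's: list the maximal cliques created by the gadget, derive the colouring statements (bullets 2--5) from that list, and verify chordality separately. The colouring part is essentially right, but with concrete slips. Your clique inventory is incomplete: besides $\{a_1,a_2,a_3\}$ and the three triangles $\{a_i,b,b'\}$, the gadget also creates three maximal cliques of the form $\{a_i,a_j,b_\ell\}$ (in the paper's notation $\{a_1,a_2,b_2\}$, $\{a_2,a_3,b_3\}$, $\{a_1,a_3,b_1\}$), and the ``direct check'' in bullet 5 must include them. Moreover, with the adjacency pattern used in the paper's proof ($a_1\sim b_1,b_2$; $a_2\sim b_2,b_3$; $a_3\sim b_1,b_3$), your explicit extension ($b_1=b_2=1$, $b_3=2$, $a_1=a_2=1$, $a_3=2$) leaves $\{a_1,b_1,b_2\}$ monochromatic; the fix is trivial (give $a_1$ colour 2), but as written the check fails. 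In bullet 4, the clause ``contained in one whose polychromaticity forces polychromaticity of its $V(G)$-part'' is not a valid inference --- a polychromatic clique can perfectly well contain a monochromatic subclique; what is actually needed (and what the paper asserts) is that every clique of $G$ remains a maximal clique of $G'$, which holds because each $a_i$ sees only two of the $b$'s and nothing else of $G$, the only exception being a pair $\{b_i,b_j\}$ that is itself a maximal clique of $G$, a situation excluded in every application of the lemma (the $b$'s lie in the complete set $B$, resp.\ all cliques have size at least 3).

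The genuine gap is the first bullet, which you yourself defer. For holes your argument is stated too locally: a long induced cycle can traverse the gadget through two $a$-vertices (say $b_1,a_1,a_2,b_3,\dots$), and then the chord is not supplied by a single $a_i$ but by the edge $b_1b_3$; the clean argument, and the one the paper uses, is that $\{b_1,b_2,b_3\}$ is a clique cutset separating $\{a_1,a_2,a_3\}$ from $V(G)\setminus\{b_1,b_2,b_3\}$, and since $G$ and $BP(b_1,b_2,b_3)$ are themselves weakly chordal, any hole of $G'$ would have to cross this clique cutset and therefore acquires a chord. For antiholes your ``dual argument'' is not an argument: complementation does not turn the clique cutset into anything useful, and the observation that $a_i$ has exactly one non-neighbour among the $b$'s rules out nothing by itself. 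The paper instead bounds the length of a putative antihole by the small degree of the $a_i$'s (every vertex of the complement of a chordless $n$-cycle has degree $n-3$) and then runs a short case analysis on how many of $a_1,a_2,a_3$ lie on the antihole, using that each of its vertices has exactly two non-neighbours on it. Some counting argument of this kind is indispensable, and its absence is what keeps your proposal from being a proof of the first assertion.
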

\begin{proof}
Let $G$ be a weakly chordal graph and $b_1, b_2, b_3 \in V(G)$. Add to
$G$ a copy of an auxiliary graph $BP(b_1, b_2, b_3)$ in order to obtain graph
$G^\prime$.

Suppose, by contradiction, that
$G^\prime$ has a chordless cycle $H$ with an odd number of vertices greater
than 4 or the complement $\overline{H}$ of a chordless cycle with an odd
number of vertices greater than 5. Clearly, $BP(b_1, b_2, b_3)$
is a weakly chordal graph. Since $G$ and $BP(b_1, b_2, b_3)$ are weakly chordal
graphs, $H$ and $\overline{H}$ contains a vertex of $BP(b_1, b_2, b_3)
\setminus G$ and a vertex of $G \setminus BP(b_1, b_2, b_3)$. Since $\{b_1, b_2,
b_3\}$ is a complete set that is a cutset of $G^\prime$ that disconnects $BP(b_1, b_2, b_3)
\setminus G$ from $G \setminus BP(b_1, b_2, b_3)$. Then, every
cycle with vertices of $BP(b_1, b_2, b_3) \setminus G$ and of $G \setminus
BP(b_1, b_2, b_3)$ contains a chord, i.e. there is no such $H$. Since
$a_1$, $a_2$, and $a_3$ have at most 3 neighbors, $|\overline{H}| \leq 6$.  If
$\overline{H}$ has only vertex $a_i$ of $BP(b_1, b_2, b_3) \setminus G$, then
$a_i$ has at most 2 neighbors in $\overline{H}$, which is a contradiction. If
$\overline{H}$ has only vertices $a_i$, $a_j$, $i \neq j$, of $BP(b_1, b_2,
b_3) \setminus G$, then $\overline{H}$ contains $\{b_1, b_2, b_3\}$, otherwise
$a_i$ or $a_j$ have at most 2 neighbors in $\overline{H}$. Let $u$ be a vertex
of $G \setminus BP(b_1, b_2, b_3)$ in $\overline{H}$. Since $a_i$ and $a_j$ are
not neighbors of $u$ and a vertex in $\overline{H}$ has at most two
non-neighbors in $\overline{H}$, $\{u, b_1, b_2, b_3\}$ is a complete set,
which is a contradiction. If $\overline{H}$ has all three vertices of $BP(b_1,
b_2, b_3) \setminus G$, then a vertex of $G \setminus BP(b_1, b_2, b_3)$ in $H$
has 3 non-neighbors in $\overline{H}$, which is a contradiction. Hence, there
is no such $\overline{H}$. Finally, $G^\prime$ is weakly chordal. If $G$ is a
(3, 1)-polar graph and $b_1, b_2, b_3 \in B(G)$, then~$G^\prime$ is a (3,
1)-polar graph with $A(G^\prime) = A(G) \cup \{a_1, a_2, a_3\}$ and
$B(G^\prime) = B(G)$ as the partition of $V(G^\prime)$ into two sets. Notice
that the added satellite is a triangle. Hence, $G^\prime$ is a (3, 1)-polar
graph.

Let $\mathcal{C}(G)$ be the set of cliques of graph~$G$. We have
$\mathcal{C}(G) \cap \mathcal{C}(G^\prime) = \mathcal{C}(G)$ and
$\mathcal{C}(G^\prime) \setminus \mathcal{C}(G) = \{\{a_1, b_1, b_2\}, \{a_2,
b_2, b_3\}, \{a_3, b_1, b_3\}, \{a_1, a_2, b_2\}, \{a_2, a_3, b_3\}, \{a_1, a_3,
b_1\}$, $\{a_1, a_2, a_3\}\}$. Clearly, if all cliques of $G$ have size at
least 3, then all cliques of $G^\prime$ have size at least~3.

Since $\{a_1, a_2, a_3\}$ is a clique of $G^\prime$, any 2-clique-colouring
$\pi^\prime$ of $G^\prime$ assigns at least 2 colours to $a_1, a_2, a_3$. Let
$i, j, k, \ell \in \{1, 2, 3\}$ and $\pi^\prime(a_i) \neq \pi^\prime(a_j)$.
Since $\{a_i, b_i, b_k\}$ (resp. $\{a_j, b_j, b_\ell\}$) is a clique of
$G^\prime$, $\pi^\prime$ assigns a colour which is not $\pi^\prime(a_i)$ to
$b_i$ or $b_k$ (resp. $\pi^\prime$ assigns a colour which is not
$\pi^\prime(a_j)$ to $b_j$ or $b_\ell$). Hence, $\pi^\prime$ assigns 2 distinct
colours to $b_1, b_2, b_3$.

Finally, $\pi^\prime(G)$ is a 2-clique-colouring of $G$, since $\mathcal{C}(G)
\subset \mathcal{C}(G^\prime)$. Now, consider a 2-clique-colouring of $G$ that
assigns 2 colours to $b_1, b_2, b_3$. It is easy to extend $\pi$ in order to
assign colours to the vertices of $BP(b_1, b_2, b_3) \setminus G$, such that
all cliques of $\mathcal{C}(G^\prime) \setminus \mathcal{C}(G)$ are
polychromatic.
\end{proof}

\begin{lemma}
\label{lem:auxiliarygraphbs}
Let $G$ be a weakly chordal graph (resp. (3, 1)-polar graph) and $b_1, b_2
\in V(G)$ (resp. $b_1, b_2 \in B(G)$). If we add to $G$ a copy of an auxiliary
graph $BS(b_1, b_2)$, then the following assertions are true.

\begin{itemize}
  \item The resulting graph $G^\prime$ is weakly chordal (resp. (3, 1)-polar).
  \item If all cliques of $G$ have size at least 3, then all cliques of
  $G^\prime$ have size at least~3.
  \item Any 2-clique-colouring of $G^\prime$ assigns 2 colours to
  $b_1$ and $b_2$.
  \item $G$ is 2-clique-colourable if $G^\prime$ is 2-clique-colourable.
  \item $G^\prime$ is 2-clique-colourable if there exists a
  2-clique-colouring of $G$ that assigns 2 colours to $b_1$ and $b_2$.
\end{itemize}
\end{lemma}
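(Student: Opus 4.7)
The plan is to obtain each of the five assertions by iteratively applying Lemma~\ref{lem:auxiliarygraphbp} to the four copies of $BP$ that compose $BS(b_1,b_2)$, together with a short combinatorial argument for the forcing property. Let $G_0 := G$; let $G_0^+$ be obtained from $G_0$ by adding the three vertices $b', b'', b'''$ together with the edges that turn $B(G) \cup \{b_1,b_2,b',b'',b'''\}$ into a complete set; and let $G_1,G_2,G_3,G_4$ be obtained from $G_0^+$ by successively pasting on the four $BP$-copies $BP(b_1,b_2,b')$, $BP(b_1,b_2,b'')$, $BP(b_1,b_2,b''')$, $BP(b',b'',b''')$, so $G_4 = G'$.

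For the first two bullets, I would first check the preliminary step $G_0 \to G_0^+$. In the $(3,1)$-polar case, $b',b'',b'''$ simply join the clique side $B$ and the partition is otherwise unchanged; the new cliques have size $\ge 3$ provided $|B(G)\cup\{b_1,b_2\}|\ge 1$, which is immediate. In the weakly chordal case, $\{b_1,b_2,b',b'',b'''\}$ is a module of twins over a complete set, so any long chordless cycle or its complement in $G_0^+$ would restrict to such an obstruction in $G_0$, contradicting the weak chordality of $G_0$; again every new maximal clique has size at least~$3$ since $b'$, $b''$, $b'''$ sit in the enlarged complete set. Each subsequent step $G_i \to G_{i+1}$ then inherits both class membership and the "all cliques of size $\ge 3$" property from the first two bullets of Lemma~\ref{lem:auxiliarygraphbp}.

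The heart of the argument is the third bullet, and this is the step I expect to be the main obstacle to state cleanly. Suppose $\pi'$ is a $2$-clique-colouring of $G'$ with, for contradiction, $\pi'(b_1) = \pi'(b_2)$; without loss of generality this common colour is~$1$. Applying the third bullet of Lemma~\ref{lem:auxiliarygraphbp} to each of $BP(b_1,b_2,b')$, $BP(b_1,b_2,b'')$, $BP(b_1,b_2,b''')$ forces each of the triples $\{b_1,b_2,b'\}$, $\{b_1,b_2,b''\}$, $\{b_1,b_2,b'''\}$ to use both colours, hence $\pi'(b') = \pi'(b'') = \pi'(b''') = 2$. But then the triple $\{b',b'',b'''\}$ is monochromatic, contradicting the same bullet applied to $BP(b',b'',b''')$. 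Therefore $\pi'(b_1)\ne\pi'(b_2)$.

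The fourth bullet follows at once: since every clique of $G$ is still a clique of $G'$ (no new edge is added among vertices of $G$ apart from possibly one edge $b_1b_2$, which if anything destroys an old clique rather than creating a monochromatic one), restricting any $2$-clique-colouring of $G'$ to $V(G)$ yields a $2$-clique-colouring of $G$. For the fifth bullet, starting from a $2$-clique-colouring $\pi$ of $G$ with $\pi(b_1)\ne\pi(b_2)$, I would extend by colouring the new $B$-vertices as $\pi(b') = \pi(b'') = 1$ and $\pi(b''') = 2$; then $\{b',b'',b'''\}$ is polychromatic, every clique contained in the enlarged complete set $B(G)\cup\{b_1,b_2,b',b'',b'''\}$ already uses both colours through $b_1,b_2$, and each of the four triples fed into a $BP$-gadget is polychromatic, so the fifth bullet of Lemma~\ref{lem:auxiliarygraphbp} applied four times completes the colouring to all of $G'$.
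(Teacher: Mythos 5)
Your proposal follows essentially the same skeleton as the paper's proof: view $BS(b_1,b_2)$ as the complete set $B(G)\cup\{b_1,b_2,b',b'',b'''\}$ together with four copies of $BP$, and delegate class membership, clique sizes, the forcing property and the colouring extension to Lemma~\ref{lem:auxiliarygraphbp}. The two places where you genuinely deviate are harmless or equivalent. For weak chordality you first add $b',b'',b'''$ as simplicial vertices over a complete set and then invoke the first bullet of Lemma~\ref{lem:auxiliarygraphbp} four times, whereas the paper argues directly on $G'$ that $\{b_1,b_2,b',b'',b'''\}$ is a complete cutset (killing long holes) and that an antihole crossing it would need a 1- or 2-cutset, contradicting triconnectivity of antiholes; both work, and your incremental version arguably reuses the $BP$ lemma more cleanly. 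Your forcing argument (assume $\pi'(b_1)=\pi'(b_2)$, force $b',b'',b'''$ all to the other colour, contradict $BP(b',b'',b''')$) is the contrapositive of the paper's direct argument and is fine.

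The one step you should patch is the fourth bullet. Your claim that ``every clique of $G$ is still a clique of $G'$'' is not true in the cases the lemma is actually used for: in the $(3,1)$-polar case $B(G)$ is typically a maximal clique of $G$ (it is in the reduction of Theorem~\ref{thm:2cc31polarcliques}) and it ceases to be maximal in $G'$, being absorbed into $B(G)\cup\{b',b'',b'''\}$; similarly, in the weakly chordal application the $K_2$ clique $\{b_1,b_2\}$ is absorbed into the new $K_5$. A clique of $G$ that is no longer a clique of $G'$ is not automatically polychromatic under the restriction of $\pi'$, so ``destroys an old clique rather than creating a monochromatic one'' does not by itself close the argument. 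The repair is immediate and is exactly what the paper does via its case analysis on $B(G)$: every absorbed clique contains both $b_1$ and $b_2$, which by your (already proved) third bullet receive distinct colours, so the restriction of $\pi'$ to $V(G)$ still 2-clique-colours $G$. With that one sentence added, the proposal matches the paper's proof in substance.
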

\begin{proof}
Let $G$ be a weakly chordal graph and $b_1, b_2 \in V(G)$. Add to
$G$ a copy of an auxiliary graph $BS(b_1, b_2)$ in order to obtain graph
$G^\prime$.

Suppose, by contradiction, that~$G^\prime$ has a chordless cycle $H$ with an
odd number of vertices greater than 4 or the complement $\overline{H}$ of a
chordless cycle with an odd number of vertices greater than 5. First, we prove
that $BS(b_1, b_2)$ is a weakly chordal (3, 1)-polar graph. A complete
graph $K_5$ with vertices $b_1, b_2, b^{\prime}, b^{\prime\prime},
b^{\prime\prime\prime}$ is a weakly chordal (3, 1)-polar graph with $A(K_5) =
\emptyset$ and $B(K_5) = \{b_1, b_2, b^{\prime}, b^{\prime\prime},
b^{\prime\prime\prime}\}$. By Lemma~\ref{lem:auxiliarygraphbp}, if we add
copies of the auxiliary graphs $BP(b_1, b_2, b^{\prime})$, $BP(b_1, b_2,
b^{\prime\prime})$, $BP(b_1, b_2, b^{\prime\prime\prime})$, and $BP(b^{\prime},
b^{\prime\prime}, b^{\prime\prime\prime})$, then we have a (weakly chordal) (3,
1)-polar graph that corresponds to $BS(b_1, b_2)$. Hence, $BS(b_1, b_2)$ is a
(weakly chordal) (3, 1)-polar graph. Since $G$ and $BS(b_1, b_2)$ are weakly
chordal graphs, $H$ and $\overline{H}$ contains a vertex of $BS(b_1, b_2)
\setminus G$ and a vertex of $G \setminus BS(b_1, b_2)$. Since $\{b_1, b_2,
b^{\prime}, b^{\prime\prime}, b^{\prime\prime\prime}\}$ is a complete set that
is a cutset of $G^\prime$ that disconnects $BS(b_1, b_2) \setminus G$ from
$G \setminus BS(b_1, b_2)$. Then, every cycle with vertices of $BS(b_1, b_2)
\setminus G$ and of $G \setminus BS(b_1, b_2)$ contains a chord, i.e
there is no such $H$. Subgraph $\overline{H}$ have vertices $b_1$ or
$b_2$, otherwise $\overline{H}$ is disconnected.
Hence, $\overline{H}$ has a 1-cutset or a 2-cutset, which is a
contradiction since the complement of a chordless cycle is triconnected.
Hence, $G^\prime$ is weakly chordal. If $G$ is a (3, 1)-polar graph and $b_1,
b_2 \in B(G)$, then $G^\prime$ is a (3, 1)-polar graph with $A(G^\prime) =
A(G) \cup (V(BS(b_1, b_2)) \setminus \{b_1, b_2, b^{\prime}, b^{\prime\prime},
b^{\prime\prime\prime}\})$ and $B(G^\prime) = B(G) \cup
\{b^{\prime}, b^{\prime\prime}, b^{\prime\prime\prime}\}$ as the
partition of $V(G^\prime)$ into two sets. Notice that all added satellites are
triangles. Hence, $G^\prime$ is a (3, 1)-polar graph.

Let $\mathcal{C}(G)$ be the set of cliques of graph $G$. If $B(G) =\emptyset$,
then we have $\mathcal{C}(G) \cap \mathcal{C}(G^\prime) = \mathcal{C}(G)
\setminus \{b_1, b_2\}$ and $\mathcal{C}(G^\prime) \setminus \mathcal{C}(G)$ is
precisely $\{b_1, b_2, b^{\prime}, b^{\prime\prime}, b^{\prime\prime\prime}\}$,
and all cliques added by the inclusion of copies of the auxiliary graphs
$BP(b_1, b_2, b^{\prime})$, $BP(b_1, b_2, b^{\prime\prime})$, $BP(b_1, b_2,
b^{\prime\prime\prime})$, and $BP(b^{\prime}, b^{\prime\prime},
b^{\prime\prime\prime})$. Otherwise, i.e.
$B(G) \neq \emptyset$, then we have $\mathcal{C}(G) \cap \mathcal{C}(G^\prime) =
\mathcal{C}(G) \setminus \{B(G)\}$ and $\mathcal{C}(G^\prime) \setminus
\mathcal{C}(G)$ is precisely $B(G^\prime)$, and all cliques added by the
inclusion of copies of the auxiliary graphs $BP(b_1, b_2, b^{\prime})$, $BP(b_1,
b_2, b^{\prime\prime})$, $BP(b_1, b_2, b^{\prime\prime\prime})$, and
$BP(b^{\prime}, b^{\prime\prime}, b^{\prime\prime\prime})$. By
Lemma~\ref{lem:auxiliarygraphbp}, all cliques added by the auxiliary graphs
have size at least 3. Then, all cliques of $G^\prime$ have size at least~3.

Consider any 2-clique-colouring $\pi^\prime$ of $G^\prime$. Since we added a
copy of the auxiliary graph $BP(b^{\prime}, b^{\prime\prime}, b^{\prime\prime\prime})$,
Lemma~\ref{lem:auxiliarygraphbp} states that $\pi^\prime$ assigns at least
2 colours to $b^{\prime}, b^{\prime\prime}, b^{\prime\prime\prime}$. Without
loss of generality, suppose that $\pi^\prime$ assigns distinct colours to $b^{\prime}$
and $b^{\prime\prime}$. Since we added copies of the auxiliary graphs
$BP(b^{\prime}, b_1, b_2)$ and $BP(b^{\prime\prime}, b_1, b_2)$,
Lemma~\ref{lem:auxiliarygraphbp} states that $\pi^\prime$ assigns at least 2 colours
to $\{b^{\prime}, b_1, b_2\}$ and at least 2 colours to $\{b^{\prime\prime},
b_1, b_2\}$, i.e. $\pi^\prime$ assigns a colour which is not $\pi^\prime(b^{\prime})$ to
$b_1$ or $b_2$ and a colour which is not $\pi^\prime(b^{\prime\prime})$ to $b_1$ or
$b_2$. Hence, $\pi^\prime$ assigns 2 distinct colours to $b_1, b_2$.

If $B(G) = \emptyset$, then $\pi^\prime(G)$ is a 2-clique-colouring of $G$, since
$\mathcal{C}(G) \setminus \mathcal{C}(G^\prime) = \mathcal{C}(G) \setminus
\{b_1, b_2\}$ and $\pi^\prime$ assigns distinct colours to $b_1, b_2$.
Otherwise, i.e. $B(G) \neq \emptyset$, $\pi^\prime(G)$ is a 2-clique-colouring
of $G$ since $\pi^\prime$ assigns at least 2 colours to $\{b_1, b_2\} \subset
B(G)$ and to every clique of $\mathcal{C}(G) \cap \mathcal{C}(G^\prime) =
\mathcal{C}(G) \setminus B(G)$.

Now, consider a 2-clique-colouring of $G$ that assigns 2 colours to $b_1,
b_2$. Assign the same colour of $b_1$ to $b^{\prime}$ and $b^{\prime\prime}$.
Assign the same colour of $b_2$ to $b^{\prime\prime\prime}$. 
The sets $\{b_1, b_2, b^{\prime}\}$, $\{b_1, b_2, b^{\prime\prime}\}$, $\{b_1,
b_2, b^{\prime\prime\prime}\}$, and $\{b^{\prime}, b^{\prime\prime},
b^{\prime\prime\prime}\}$ have 2 colours each. By
Lemma~\ref{lem:auxiliarygraphbp}, all cliques added by the
inclusion of copies of the auxiliary graphs $BP(b_1, b_2, b^{\prime})$, $BP(b_1,
b_2, b^{\prime\prime})$, $BP(b_1, b_2, b^{\prime\prime\prime})$, and
$BP(b^{\prime}, b^{\prime\prime}, b^{\prime\prime\prime})$ are polychromatic. 
Hence, we have a 2-clique-colouring of $G^\prime$.
\end{proof}

We strengthen the result that 2-clique-colouring of (3, 1)-polar graphs is
$\mathcal{NP}$-complete, now even restricting all cliques to have size at
least~3, which gives a subclass of weakly chordal graphs.

\begin{theorem}
\label{thm:2cc31polarcliques}
The problem of 2-clique-colouring is $\mathcal{NP}$-complete for (weakly
chordal) (3, 1)-polar graphs with all cliques having size at least~3.
\end{theorem}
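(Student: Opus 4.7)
The plan is to strengthen the reduction from {\sc NAE-SAT} used in Theorem~\ref{thm:2cc31polar} by rebuilding it directly on the complete clique $B$, so that every constraint is imposed using only the gadgets $BS$ and $BP$, which by Lemmas~\ref{lem:auxiliarygraphbs} and~\ref{lem:auxiliarygraphbp} preserve weak chordality, (3, 1)-polarity, and the condition that all cliques have size at least~3. Membership in $\mathcal{NP}$ follows immediately from Theorem~\ref{thm:nptocheck}.

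For hardness, given a {\sc NAE-SAT} formula $\phi$ with variables $x_1, \ldots, x_n$ and clauses $c_1, \ldots, c_m$ of size exactly~3 (which is WLOG), I would build $G$ as follows. Start from $G_0 = K_{2n}$ on vertex set $B_0 = \{x_1, \overline{x_1}, \ldots, x_n, \overline{x_n}\}$ with $A_0 = \emptyset$; this is trivially weakly chordal and (3, 1)-polar, and its unique clique $B_0$ has size $2n \geq 3$ (adding a dummy variable if $n = 1$). For each variable $x_i$ add a copy of $BS(x_i, \overline{x_i})$, and for each clause $c_j = (l_1, l_2, l_3)$ add a copy of $BP(l_1, l_2, l_3)$ using the $B$-vertices corresponding to the three literals. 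Because every anchor vertex of each added gadget lies in $B$, Lemmas~\ref{lem:auxiliarygraphbs} and~\ref{lem:auxiliarygraphbp} apply inductively and show that $G$ is weakly chordal, (3, 1)-polar, and has all cliques of size at least~3.

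For correctness, given a NAE-satisfying assignment $v_\phi$, colour each literal vertex in $B$ with colour~1 if it is $true$ under $v_\phi$ and colour~2 otherwise. Then $B$ is polychromatic, each pair $\{x_i, \overline{x_i}\}$ is bicoloured, and each clause triple is polychromatic, so the colouring extends to the $A$-vertices of every gadget via Lemmas~\ref{lem:auxiliarygraphbs} and~\ref{lem:auxiliarygraphbp}. Conversely, given any 2-clique-colouring of $G$, the $BS$ gadgets force $\pi(x_i) \neq \pi(\overline{x_i})$, which defines a truth assignment, and the $BP$ gadgets force every clause triple to be polychromatic, which is precisely the NAE condition.

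The main obstacle is organising the construction so that no $K_2$ clique is ever created. The reduction in Theorem~\ref{thm:2cc31polar} introduces such cliques both along the variable path $x_i x^\prime_i x^{\prime\prime}_i \overline{x_i}$ and between each clause triangle and $B$; the reformulation above avoids them by collapsing each variable path into a single $BS$ and replacing each clause triangle together with its three edges to $B$ by a single $BP$ on the relevant $B$-literals. Since the anchors of every gadget lie in $B$, Lemmas~\ref{lem:auxiliarygraphbs} and~\ref{lem:auxiliarygraphbp} can be applied in their stronger (3, 1)-polar form, which is what makes the whole argument go through.
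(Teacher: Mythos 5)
Your proposal is correct and follows essentially the same route as the paper: membership via Theorem~\ref{thm:nptocheck}, and hardness by reducing {\sc NAE-SAT} with the literal vertices forming the complete part $B$, a copy of $BS(x_i,\overline{x}_i)$ per variable and a copy of $BP(l_a,l_b,l_c)$ per clause, with weak chordality, (3,1)-polarity, the size-at-least-3 condition, and the forcing properties all supplied by Lemmas~\ref{lem:auxiliarygraphbp} and~\ref{lem:auxiliarygraphbs} exactly as in the paper's argument.
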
 
\begin{proof}
The problem of 2-clique-colouring a (3, 1)-polar graph with all cliques having
size at least 3 is in $\mathcal{NP}$: Theorem~\ref{thm:nptocheck} confirms that
to check whether a colouring of a (3, 1)-polar graph is a 2-clique-colouring is
in $\mathcal{P}$.

We prove that 2-clique-colouring (3, 1)-polar graphs with all cliques having
size at least 3 is $\mathcal{NP}$-hard by reducing {\sc NAE-SAT} to it.
The outline of the proof follows. For every formula~$\phi$, a (3, 1)-polar graph
$G$ with all cliques having size at least 3 is constructed such that $\phi$ is
satisfiable if, and only if, graph $G$ is 2-clique-colourable. Let $n$ (resp.
$m$) be the number of variables (resp. clauses) in formula $\phi$. We define
graph $G$ as follows.

\begin{itemize}
	\item for each variable $x_i$, $1 \leq i \leq n$, we create two vertices
	$x_i$ and $\overline{x}_i$. Moreover, we create edges so that the set $\{x_1,
	\overline{x}_1, \ldots, x_n, \overline{x}_n\}$ induces a complete subgraph
	of~$G$.
	\item for each variable $x_i$, $1 \leq i \leq n$, add a copy of the auxiliary
	graph $BS(x_i, \overline{x}_i)$. Vertices $x_i$ and $\overline{x}_i$
	correspond to the literals of variable~$x_i$.
	\item for each clause $c_j = (l_a, l_b, l_c)$, $1 \leq j \leq m$, we add a copy
	of the auxiliary graph $BP(l_a, l_b, l_c)$.	 
\end{itemize}

Refer to Fig.~\ref{fig:reduction-np-cliques} for an example of such construction, given
a formula $\phi = (x_{1} \vee \overline{x_{2}} \vee x_{4}) \wedge ( x_{2} \vee
\overline{x_{3}} \vee \overline{x_{5}} )\wedge ( x_{1} \vee x_{3} \vee x_{5})$.

\begin{figure}[t!]
\centering
	\includegraphics[scale=0.4]{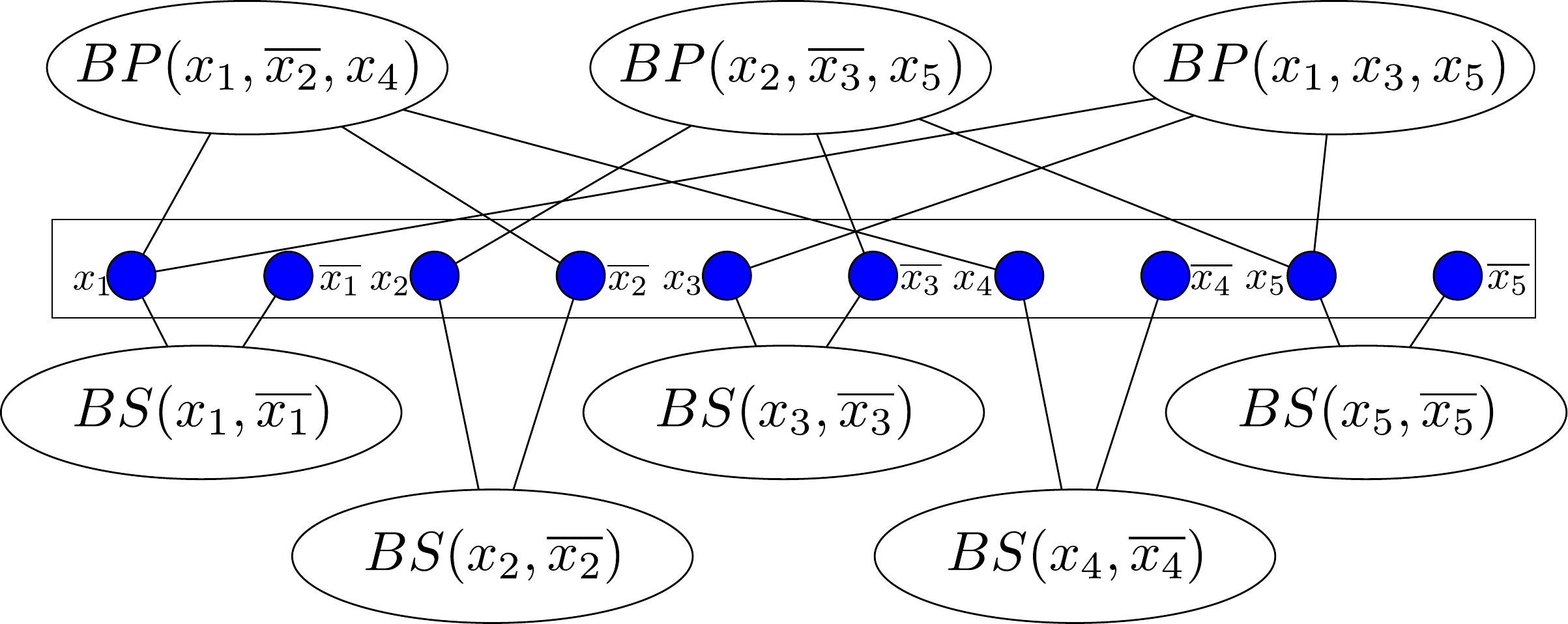}
	\caption{Example of a (3, 1)-polar graph with all cliques having size at least
	3 constructed for a NAE-SAT instance $\phi = (x_{1} \vee \overline{x_{2}} \vee
	x_{4}) \wedge ( x_{2} \vee \overline{x_{3}} \vee \overline{x_{5}} )\wedge (
	x_{1} \vee x_{3} \vee x_{5})$}
	\label{fig:reduction-np-cliques}
\end{figure}

First, we prove that the graph $G$ is a (3, 1)-polar graph with all cliques
having size at least~3.

Consider the set $\{x_1, \overline{x}_1, \ldots, x_n, \overline{x}_n\}$.
Clearly, this set is a clique with size at least 3 and also a (3, 1)-polar
graph. Lemma~\ref{lem:auxiliarygraphbs} states that, for each added auxiliary
graph $BS(x_i, \overline{x}_i)$ to a (3, 1)-polar graph with
all cliques having size at least 3, every obtained graph remains in the class.
Lemma~\ref{lem:auxiliarygraphbp} states that, for each added auxiliary graph
$BP(l_{a_{c_j}}, l_{b_{c_j}}, l_{c_{c_j}})$ to a (3, 1)-polar
graph with all cliques having size at least 3, every obtained graph remains in
the class. Hence, $G$ is a (3, 1)-polar graph with all cliques
having size at least 3.

Such construction is done in polynomial-time. One can check with
Lemmas~\ref{lem:auxiliarygraphbp} and~\ref{lem:auxiliarygraphbs} that~$G$ has
$3m + 17n$ vertices.

We claim that formula $\phi$ is satisfiable if, and only if, there exists a 
2-clique-colouring of~$G$. Assume there exists a valuation $v_\phi$ such that
$\phi$ is satisfied. We give a colouring to graph $G$, as follows.
 
\begin{itemize}
	\item assign colour 1 to $l \in \{x_1, \overline{x}_1, \ldots, x_n,
	\overline{x}_n\}$ if it corresponds to the literal which receives the $true$
	value in $v_\phi$, otherwise we assign colour 2 to it.
	
	\item extend the 2-clique-colouring to the copy of the auxiliary graph
	$BS(x_i, \overline{x}_i)$, for each variable $x_i$, $1 \leq i \leq n$,
	according to Lemma~\ref{lem:auxiliarygraphbs}. Notice that the necessary
	condition to extend the 2-clique-colouring is satisfied.
	
	\item extend the 2-clique-colouring to the copy of the auxiliary graph
	$BP(l_{a}, l_{b}, l_{c})$, for each triangle $c = \{l_{a}, l_{b}, l_{c}\}$, $1
	\leq j \leq m$, according to Lemma~\ref{lem:auxiliarygraphbp}. Notice that the
	necessary condition to extend the 2-clique-colouring is satisfied.
\end{itemize}

It still remains to be proved that this is indeed a 2-clique-colouring.

Consider the set $\{x_1, \overline{x}_1, \ldots, x_n, \overline{x}_n\}$.
Clearly, the above colouring assigns 2 colours to this set.
Lemma~\ref{lem:auxiliarygraphbs} states that, for each added auxiliary graph
$BS(x_i, \overline{x}_i)$ to a 2-clique-colourable weakly chordal (3, 1)-polar
graph, we obtain a 2-clique-colourable graph. Lemma~\ref{lem:auxiliarygraphbp}
states that, for each added auxiliary graph $BP(l_{a}, l_{b}, l_{c})$  to a
2-clique-colourable weakly chordal (3, 1)-polar graph, we obtain a
2-clique-colourable graph. Hence, graph $G$ is 2-clique-colourable.

For the converse, we now assume that $G$ is 2-clique-colourable and we consider
any 2-clique-colouring. Recall that the vertices $x_i$ and $\overline{x}_i$ have
distinct colours, since we added the auxiliary graph $BS(x_i, \overline{x}_i)$,
for each variable $x_i$. Hence, we define $v_\phi$ as
follows. The literal $x_i$ is assigned $true$ in $v_\phi$ if the corresponding
vertex has colour 1 in the clique-colouring, otherwise it is assigned $false$.
Since we are considering a 2-clique-colouring, every triangle (clique) $c_j$, $1
\leq j \leq m$, is polychromatic. As a consequence, there exists at least one literal
with $true$ value in $c_j$ and at least one literal with $false$ value in every
clause $c_j$. This proves that $\phi$ is satisfied for valuation~$v_\phi$.
\end{proof}

As an remark, a shorter alternative proof that 2-clique-colouring is
$\mathcal{NP}$-complete for weakly chordal (3, 1)-polar graphs with all cliques
having size at least 3 can be obtained by a reduction from {\sc Positive
NAE-SAT}. The alternative proof follows analogously to alternative proof of
Theorem~\ref{thm:2cc31polar}.

On the other hand, we prove that 2-clique-colouring (2, 1)-polar graphs becomes
polynomial when all cliques have size at least 3. 

\begin{theorem}
\label{thm:2cc21polarcliques}
The problem of 2-clique-colouring is polynomial for
(2, 1)-polar graphs with all cliques having size at least~3.
\end{theorem}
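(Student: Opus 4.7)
The plan is to show that every $(2,1)$-polar graph $G=(A,B)$ all of whose cliques have size at least $3$ is $2$-clique-colourable; the polynomial algorithm then merely constructs an explicit witness. First I would use the hypothesis to restrict which of $\ka$, $\kb$, $\kc$, $\kd$ each satellite can inhabit. A singleton satellite $\{v\}$ is either isolated (case $\ka$, contributing no clique) or lies in case $\kd$ with $|N_B(v)|\geq 2$, since the unique clique $\{v\}\cup N_B(v)$ must have size $\geq 3$. For an edge satellite $\{u,v\}$, case $\ka$ is ruled out because it would make $\{u,v\}$ itself a maximal clique of size $2$, and case $\kc$ is ruled out because $N_B(u)\cap N_B(v)=\emptyset$ would again make $\{u,v\}$ a maximal clique of size $2$; only case $\kb$ (with $|N_B(u)|\geq 1$ and $|N_B(v)|\geq 2$) and case $\kd$ (with $|N_B(u)|,|N_B(v)|\geq 2$ and $N_B(u)\cap N_B(v)\neq\emptyset$) survive.

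Second, I would colour $B$ so that both colours occur whenever $B$ is itself a maximal clique of $G$ (which is possible since then $|B|\geq 3$), and arbitrarily otherwise; isolated satellite vertices receive either colour. Then I would extend the colouring satellite by satellite via purely local rules chosen to polychromatise each clique: for a singleton $\{v\}\in\kd$, assign $v$ the colour opposite to a monochromatic $N_B(v)$ (otherwise $v$ is free); for an edge $\{u,v\}\in\kb$ with $N_B(u)\subseteq N_B(v)$, first choose $v$'s colour to polychromatise $\{v\}\cup N_B(v)$ (or, in the degenerate case $N_B(u)=N_B(v)$, just ensure $u$ and $v$ receive distinct colours) and then choose $u$'s colour to polychromatise $\{u,v\}\cup N_B(u)$; for an edge $\{u,v\}\in\kd$, assign $u$ (resp.\ $v$) the colour opposite to a monochromatic $N_B(u)$ (resp.\ $N_B(v)$), and if both are polychromatic while $N_B(u)\cap N_B(v)$ is monochromatic, additionally choose $u$ opposite to that intersection's colour.

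The main obstacle is verifying that every clique of a $\kd$-edge satellite indeed becomes polychromatic. The cliques $\{u\}\cup N_B(u)$ and $\{v\}\cup N_B(v)$ are handled directly by the local rule on $u$ and $v$. For $\{u,v\}\cup(N_B(u)\cap N_B(v))$, the delicate configuration is when both $N_B(u)$ and $N_B(v)$ are monochromatic in the chosen $B$-colouring. If they share a colour, both $u$ and $v$ are forced by the rule to the opposite colour, and the clique is polychromatic since $N_B(u)\cap N_B(v)$ carries the shared colour. If they carry opposite colours, then every vertex of $N_B(u)\cap N_B(v)$ would have to be simultaneously both colours; since $N_B(u)\cap N_B(v)\neq\emptyset$ by the first step, such a $B$-colouring cannot exist. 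All remaining sub-cases (at least one of $N_B(u)$, $N_B(v)$ polychromatic) are immediate by direct inspection. Putting these pieces together yields an $O(n^2)$-time algorithm that always outputs a valid $2$-clique-colouring.
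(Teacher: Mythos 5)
Your proposal is correct and takes essentially the same route as the paper: the all-cliques-of-size-at-least-3 hypothesis forces every non-isolated satellite to have a nonempty common neighbourhood in $B$, hence every such graph is 2-clique-colourable, and the algorithm simply 2-colours $B$ (using both colours when $B$ is a maximal clique) and extends the colouring satellite by satellite with local rules. Your write-up is in fact somewhat more careful than the paper's short argument: you also retain the nested-neighbourhood case $\kb$ and verify explicitly that $\{u,v\}\cup\bigl(N_B(u)\cap N_B(v)\bigr)$ becomes polychromatic, whereas the paper places every satellite in a single case and leaves its one-line extension rule unchecked (note that your $\kc$/$\kd$ labels follow the paper's written definition, which is swapped relative to how the paper actually uses these labels here and in Lemma~\ref{lem:31polargraphs}).
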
 
\begin{proof}
Let $S$ be a satellite of $G$. First, $S$ is an edge, since $G$ is a (2,
1)-polar graph. Second, every clique of $G$ has size at least~3. Then, there is
a vertex of partition $B$ that is a neighbor of both vertices of $S$. This
implies that every satellite of $G$ is in case $\kc$. Notice that
Algorithm~\ref{alg:reducion-np-2} outputs a complete graph, when $G$ is given
as input. Hence, $G$ is 2-clique-colourable (see Theorem~\ref{thm:2cc21polar}).

The polynomial-time algorithm to give a 2-clique-colouring for (2, 1)-polar
graphs with all cliques having size at least~3 follows. Give any 2-colouring to
the vertices of $B$. For each satellite $S$, assign colour 1 to a vertex of $S$,
if it has a neighbor in $B$ with colour 2, otherwise assign colour 2. It is easy
to check that it is a 2-clique-colouring of $G$, since every satellite of $G$
is in case $\kc$.
\end{proof}

In the proof that 2-clique-colouring weakly chordal graphs is a
$\Sigma_2^P$-complete problem (Theorem~\ref{thm:weaklychordal}), we constructed
a weakly chordal graph with $K_2$ cliques to force distinct colours in their
extremities (in a 2-clique-colouring).
We can obtain a weakly chordal graph with no cliques of size~2 by adding
copies of the auxiliary graph $BS(u, v)$, for every $K_2$ clique $\{u, v\}$.
Auxiliary graphs $AK$ and $NAS$ become $AK^\prime$ and $NAS^\prime$,
both depicted in Fig.~\ref{fig:auxiliarygraphscliques3}. 

\begin{figure}[t!]
\centering
	\subfloat
		[$AK^\prime(a, g)$]
		{
			\includegraphics[scale=0.4]{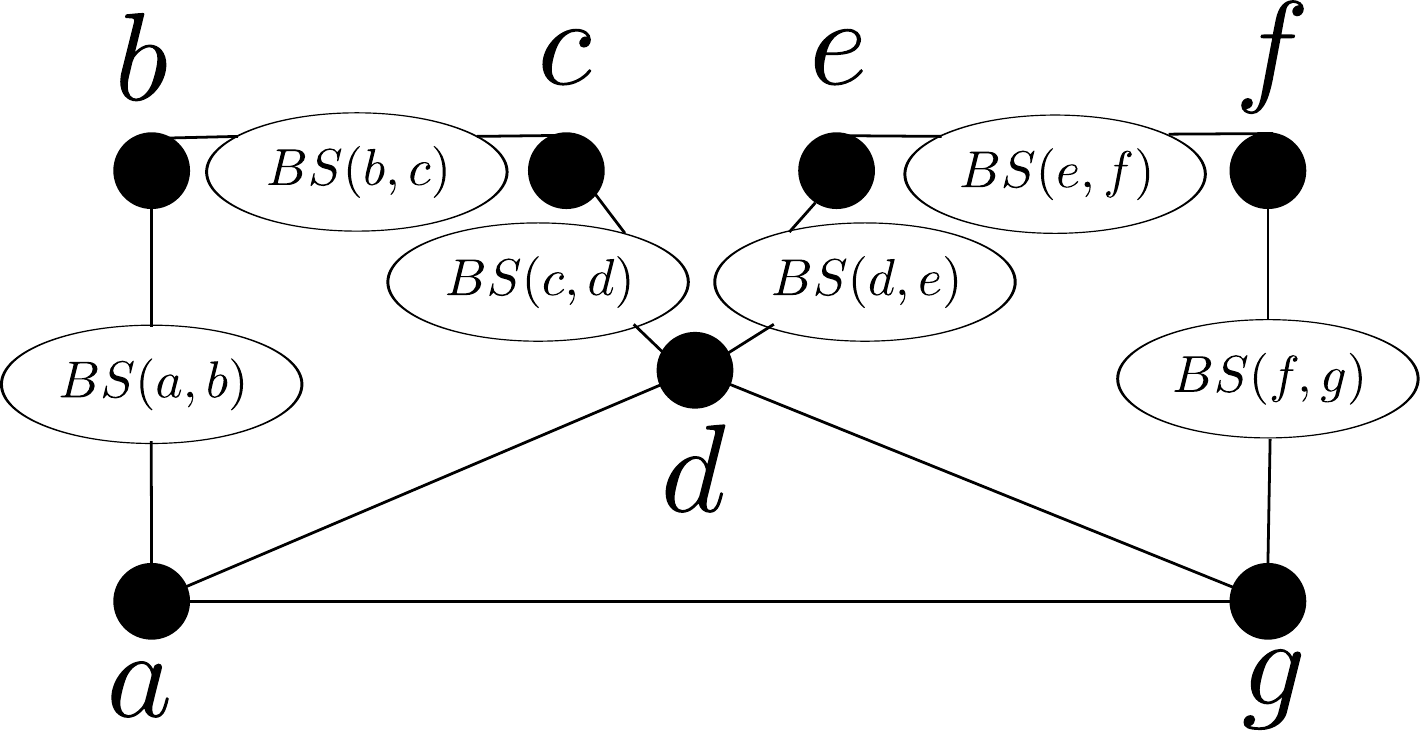}
			\label{fig:adjacentkeepercliques3}
		}
	\qquad
	\subfloat
		[$NAS^\prime(a, j)$] 
		{
			\includegraphics[scale=0.4]{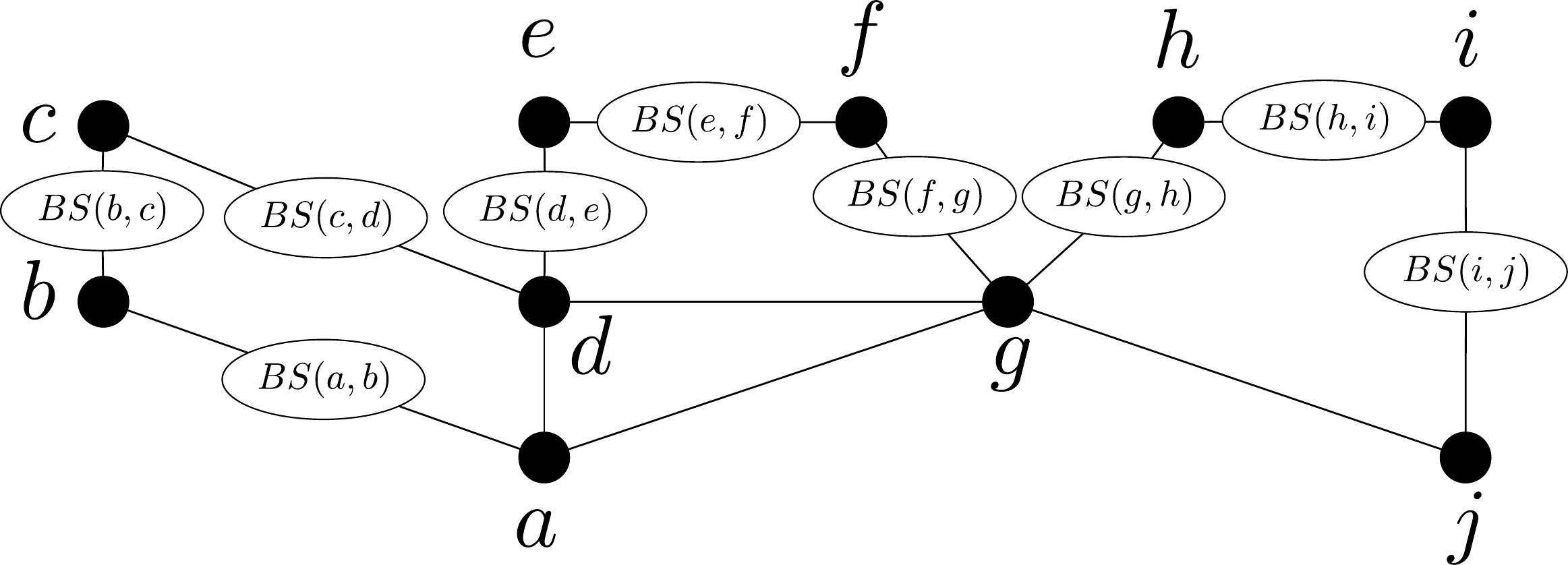}
			\label{fig:nonadjacentswitchercliques3}
		}
	\caption{Auxiliary graphs $AK^\prime(a, g)$ and $NAS^\prime(a, j)$}
	\label{fig:auxiliarygraphscliques3}
\end{figure}

Finally, the
weakly chordal graph constructed in Theorem~\ref{thm:weaklychordal} becomes a
weakly chordal graph with no $K_2$ clique, depicted in
Fig.~\ref{fig:reduction-kratochvil-uncoloured}. 

\begin{figure}[t!]
\centering
	\includegraphics[scale=0.42]{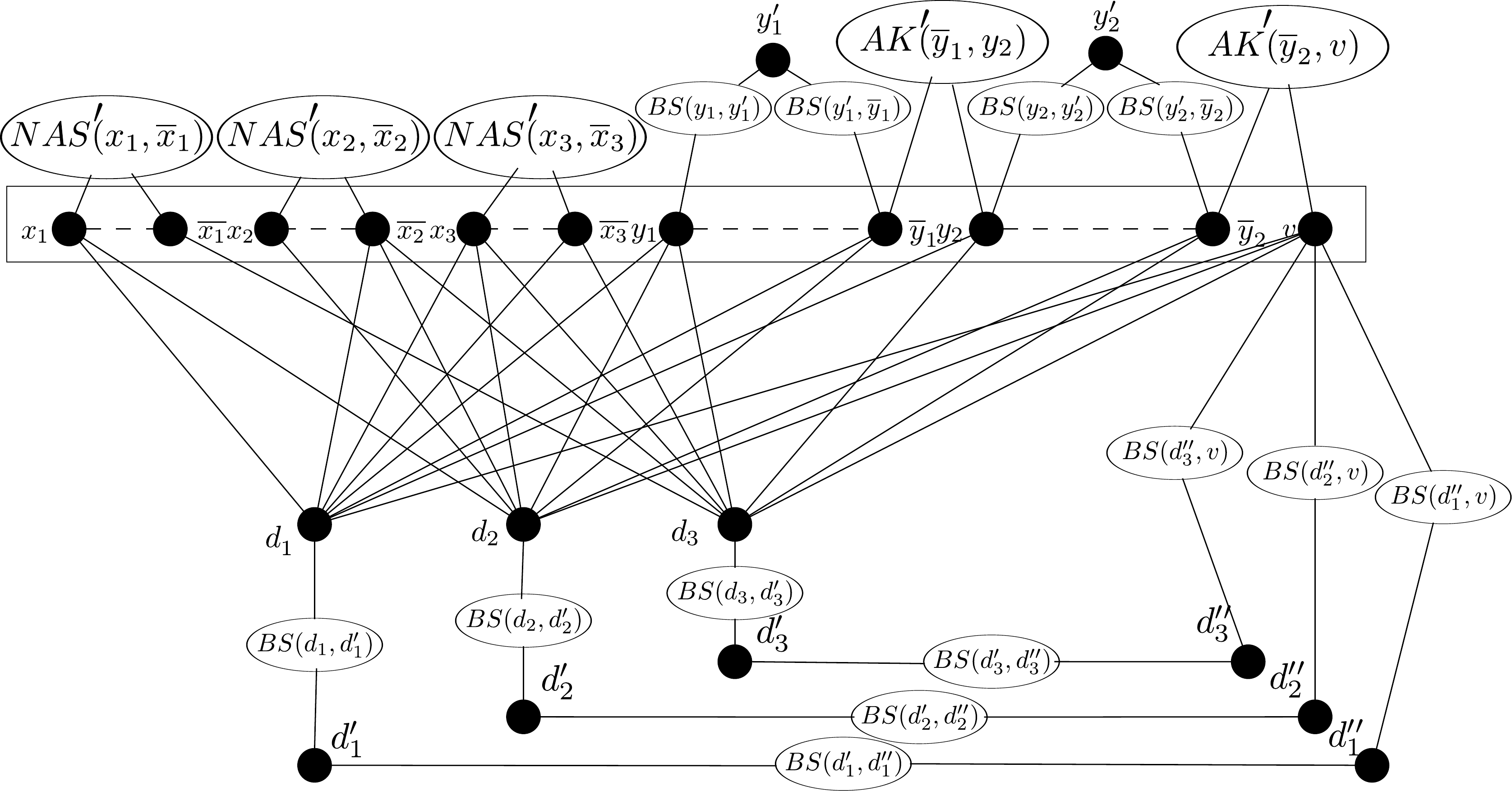}
	\caption{Graph constructed for a QSAT2 instance $\Psi = (x_1 \wedge
	\overline{x}_2 \wedge y_2) \vee (x_1 \wedge x_3 \wedge \overline{y}_2) \vee
	( \overline{x}_1 \wedge \overline{x}_2 \wedge y_1)$}
	\label{fig:reduction-kratochvil-uncoloured}
\end{figure}

Such construction is done in
polynomial-time. Notice that, in the constructed graph of
Theorem~\ref{thm:weaklychordal}, every $K_2$ clique $\{u, v\}$ has 2 distinct
colours in a clique-colouring. Hence, one can check with
Lemmas~\ref{lem:auxiliarygraphbp} and~\ref{lem:auxiliarygraphbs} that the
obtained graph is weakly chordal and it is 2-clique-colourable if, and only if,
the constructed graph of Theorem~\ref{thm:weaklychordal} is
2-clique-colourable. This implies the following theorem.

\begin{theorem}
\label{thm:weaklychordalkratochvil}
The problem of 2-clique-colouring is $\Sigma_2^P$-complete for weakly chordal
graphs with all cliques having size at least 3.
\end{theorem}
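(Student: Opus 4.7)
The plan is to lift the reduction of Theorem~\ref{thm:weaklychordal} from \textsc{QSAT2} to the setting where every clique has size at least~3, by systematically replacing each edge that happens to be a clique of the constructed graph with a copy of the auxiliary gadget $BS$ from Lemma~\ref{lem:auxiliarygraphbs}. Membership in $\Sigma_2^P$ is inherited from the unrestricted case: a 2-partition is a guess, and a monochromatic maximal clique is a polynomial certificate of failure, so nothing changes under the ``cliques of size at least 3'' restriction.

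For $\Sigma_2^P$-hardness I would reduce again from \textsc{QSAT2}, starting from the weakly chordal graph $G$ built in the proof of Theorem~\ref{thm:weaklychordal}. In that graph the only cliques of size~2 appear (i)~inside each copy of $AK(\overline{y}_j, y_{j+1})$ and $AK(\overline{y}_m, v)$, (ii)~inside each copy of $NAS(x_i,\overline{x}_i)$, and (iii)~on the edges $d_k d'_k$, $d'_k d''_k$, $d''_k v$, $d_k v$, $y_j y'_j$, $y'_j \overline{y}_j$. For every such $K_2$ clique $\{u,v\}$ I would add a copy of $BS(u,v)$. The resulting gadgets replacing $AK$ and $NAS$ are exactly the modified gadgets $AK'$ and $NAS'$ already displayed in Figure~\ref{fig:auxiliarygraphscliques3}; the remaining $K_2$ cliques listed in~(iii) are handled by attaching one further $BS$ gadget per edge, yielding the graph illustrated in Figure~\ref{fig:reduction-kratochvil-uncoloured}.

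Correctness follows by iterated application of Lemmas~\ref{lem:auxiliarygraphbp} and~\ref{lem:auxiliarygraphbs}. Each $BS(u,v)$ insertion preserves the weakly chordal property, removes $\{u,v\}$ as a clique (it is now strictly contained in the larger cliques of $BS$, all of size at least~3), introduces only new cliques of size at least~3, and forces $u$ and $v$ to receive distinct colours in every 2-clique-colouring. Since in the original construction every $K_2$ clique already had to be polychromatic in any 2-clique-colouring, the two-way equivalence ``$G$ is 2-clique-colourable iff $\Psi$ is true iff the new graph $G^\star$ is 2-clique-colourable'' transfers: the forward direction extends a given 2-clique-colouring of $G$ to each inserted $BS$ gadget using the last bullet of Lemma~\ref{lem:auxiliarygraphbs}, and the backward direction restricts a 2-clique-colouring of $G^\star$ back to $V(G)$ using the fourth bullet. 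The reduction is polynomial, as each replacement adds only~$17$ vertices.

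The main obstacle is verifying that $G^\star$ remains weakly chordal, because we are pasting many $BS$ gadgets along shared vertices of the original construction. This is handled by observing that Lemma~\ref{lem:auxiliarygraphbs} already guarantees weakly chordality after a single insertion, and the proof there relies only on $\{b_1,b_2,b',b'',b'''\}$ being a complete cutset separating the gadget from the rest. Since distinct $BS$ copies attached at different $K_2$ cliques share only vertices of the old graph $G$, which continues to be weakly chordal, an inductive argument on the number of gadget insertions (applying Lemma~\ref{lem:auxiliarygraphbs} at each step) yields that $G^\star$ is weakly chordal, has all cliques of size at least~3, and answers \textsc{QSAT2}. This proves $\Sigma_2^P$-completeness.
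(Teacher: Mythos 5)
Your proposal matches the paper's own argument: the paper likewise takes the graph built in Theorem~\ref{thm:weaklychordal}, attaches a copy of $BS(u,v)$ to every $K_2$ clique $\{u,v\}$ (turning $AK$ and $NAS$ into $AK^\prime$ and $NAS^\prime$), and invokes Lemmas~\ref{lem:auxiliarygraphbp} and~\ref{lem:auxiliarygraphbs} to preserve weak chordality, eliminate size-2 cliques, and transfer 2-clique-colourability in both directions. Your inductive remark on iterating the $BS$ insertions is exactly the verification the paper leaves to the reader, so the proposal is correct and essentially identical in approach.
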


As a direct consequence of Theorem~\ref{thm:weaklychordalkratochvil}, we have
that 2-clique-colouring is $\Sigma_2^P$-complete for perfect graphs with all
cliques having size at least~3.

\begin{corollary}
\label{cor:perfectkratochvil}
The problem of 2-clique-colouring is $\Sigma_2^P$-complete for perfect
graphs with all cliques having size at least 3.
\end{corollary}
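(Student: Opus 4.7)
The plan is essentially to observe that this corollary follows almost immediately from Theorem~\ref{thm:weaklychordalkratochvil} together with the class inclusion: every weakly chordal graph is perfect. So I would structure the argument in two short steps, one for membership in $\Sigma_2^P$ and one for hardness.

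For membership, I would reuse the argument given at the start of the proof of Theorem~\ref{thm:weaklychordal}: a 2-partition of the vertex set is a polynomial-size certificate for a ``yes'' instance, and verifying that such a partition fails to be a clique-colouring reduces to exhibiting a monochromatic clique, which is a co$\mathcal{NP}$ check (one can verify in polynomial time that a given vertex set is maximal complete and monochromatic). Since the restriction to cliques of size at least $3$ does not affect this verification argument, 2-clique-colouring lies in $\Sigma_2^P$ on perfect graphs with all cliques of size at least~$3$.

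For hardness, the plan is simply to invoke the containment of graph classes. Every weakly chordal graph is perfect, since neither it nor its complement contains a chordless cycle of length at least~$5$, hence in particular no odd hole of length at least~$5$, so by the Strong Perfect Graph Theorem it is perfect. Therefore the class of weakly chordal graphs with all cliques having size at least~$3$ is contained in the class of perfect graphs with all cliques having size at least~$3$, and the identity map is a (trivial) polynomial-time reduction from 2-clique-colouring on the former to 2-clique-colouring on the latter. Combining this with Theorem~\ref{thm:weaklychordalkratochvil} gives the desired $\Sigma_2^P$-hardness, and together with membership this yields $\Sigma_2^P$-completeness.

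There is no real obstacle here; the only thing to be slightly careful about is to state explicitly that the reduction preserves the restriction ``all cliques of size at least~$3$'', which it does trivially because it is the identity on graphs. This answers the open question of Kratochv\'il and Tuza~\cite{Kratochvil}.
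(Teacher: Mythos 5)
Your proposal is correct and matches the paper's argument: the corollary is stated as a direct consequence of Theorem~\ref{thm:weaklychordalkratochvil}, using exactly the inclusion of weakly chordal graphs in perfect graphs (with the clique-size restriction trivially preserved) and the $\Sigma_2^P$ membership already established earlier. No issues.
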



\section{Final considerations}
\label{sec:final}

Marx~\cite{Marx2011} proved complexity results for $k$-clique-colouring, for
fixed $k \geq 2$, and related problems that lie in between two distinct complexity
classes, namely $\Sigma_2^P$-complete and $\Pi_3^P$-complete. Marx approaches
the complexity of clique-colouring by fixing the graph class and diversifying
the problem. In the present work, our point of view is the opposite: we rather
fix the (2-clique-colouring) problem and we classify the problem complexity according
to the inputted graph class, which belongs to nested subclasses of weakly
chordal graphs. We achieved complexities lying in between three distinct
complexity classes, namely $\Sigma_2^P$-complete, $\mathcal{NP}$-complete and
$\mathcal{P}$. Fig.~\ref{fig:hierarquiacomplexidade} shows the relation of
inclusion among the classes of graphs of Table~\ref{t:tabela}.
The 2-clique-colouring complexity for each class is highlighted.

\begin{figure}[t!]
\centering
		\includegraphics[scale=0.22]{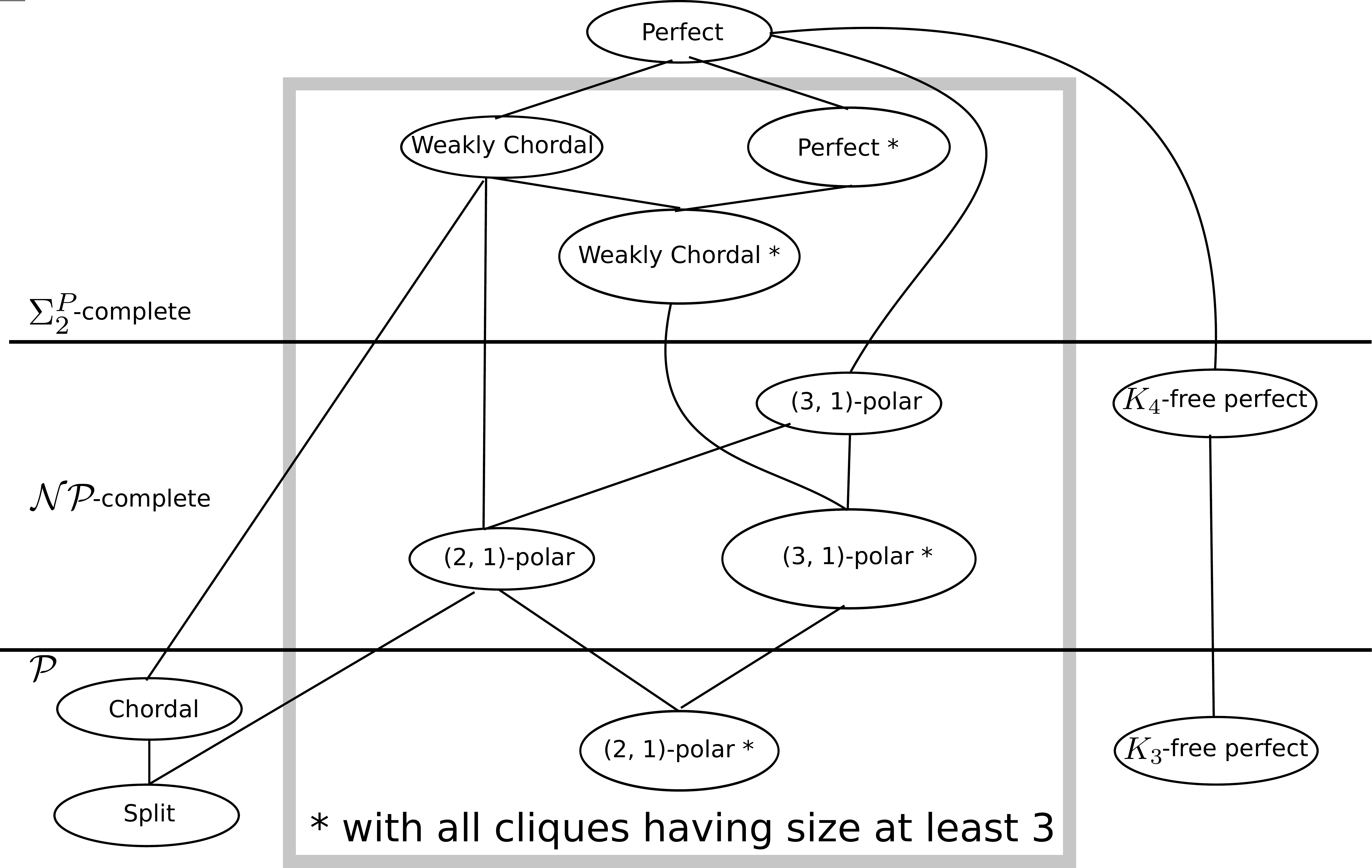}
	\caption{2-clique-colouring complexity of perfect graphs and subclasses.}
	\label{fig:hierarquiacomplexidade}
\end{figure}

Notice that the perfect graph subclasses for which the
2-clique-colouring problem is in $\mathcal{NP}$ mentioned so far in the present
work satisfy that the number of cliques is polynomial. We remark that the
complement of a matching has an exponential number of cliques and yet the
2-clique-colouring problem is in $\mathcal{NP}$, since no such graph is
2-clique-colourable.
Now, notice that the perfect graph subclasses for which the 2-clique-colouring problem is in $\mathcal{P}$
mentioned so far in the present work satisfy that all graphs in the class are
2-clique-colourable. Mac\^edo Filho et al.~\cite{latin2012} have proved that
unichord-free graphs are 3-clique-colourable, but a unichord-free graph is
2-clique-colourable if and only if it is perfect. As a future work, we aim to
find subclasses of perfect graphs where not all graphs are 2-clique-colourable
and yet the 2-clique-colouring problem is in $\mathcal{P}$ when restricted to
the class.

\section*{Acknowledgments}

We are grateful to Jayme Szwarcfiter for introducing us the class of $(\alpha, \beta)$-polar graphs.

\bibliographystyle{plainnat}
\bibliography{manuscript-hierarchy-full}

\end{document}